\documentclass[USenglish,a4paper,11pt]{article}
\pdfoutput=1

\input{Style.sty}
\addbibresource{Bibliography.bib}

\title{Near-Optimal Quantum Algorithms for \\ Multivariate Mean Estimation}

\author{Arjan Cornelissen\thanks{QuSoft, University of Amsterdam. \url{arjan.cornelissen@cwi.nl}}
   \and Yassine Hamoudi\thanks{Simons Institute for the Theory of Computing, University of California, Berkeley. \url{hamoudi@berkeley.edu}}
   \and Sofiene Jerbi\thanks{Institute for Theoretical Physics, University of Innsbruck. \url{sofiene.jerbi@uibk.ac.at}}}

\date{\today}

\begin{document}

\maketitle

\begin{abstract}
  We propose the first near-optimal quantum algorithm for estimating in Euclidean norm the mean of a vector-valued random variable with finite mean and covariance. Our result aims at extending the theory of multivariate sub-Gaussian estimators~\cite{LM19j} to the quantum setting. Unlike classically, where any univariate estimator can be turned into a multivariate estimator with at most a logarithmic overhead in the dimension, no similar result can be proved in the quantum setting. Indeed, Heinrich~\cite{Hei04j} ruled out the existence of a quantum advantage for the mean estimation problem when the sample complexity is smaller than the dimension. Our main result is to show that, outside this low-precision regime, there does exist a quantum estimator that outperforms any classical estimator. More precisely, we prove that the approximation error can be decreased by a factor of about the square root of the ratio between the dimension and the sample complexity. Our approach is substantially more involved than in the univariate setting, where most quantum estimators rely only on phase estimation. We exploit a variety of additional algorithmic techniques such as linear amplitude amplification, the Bernstein-Vazirani algorithm, and quantum singular value transformation. Our analysis is also deeply rooted in proving concentration inequalities for multivariate truncated statistics.

We develop our quantum estimators in two different input models that showed up in the literature before. The first one provides coherent access to the binary representation of the random variable and it encompasses the classical setting. In the second model, the random variable is directly encoded into the phases of quantum registers. This model arises naturally in many quantum algorithms but it is often incomparable to having classical samples. We adapt our techniques to these two settings and we show that the second model is strictly weaker than the other one for solving the mean estimation problem. Finally, we describe several applications of our algorithms, notably in measuring the expectation values of commuting observables and in the field of machine learning.

\end{abstract}

\section{Introduction}
Monte Carlo methods are used extensively in various fields of science and engineering, such as statistical physics~\cite{BH10b}, finance~\cite{Gla03b}, or machine learning~\cite{AdFDJ03j}. At the core of these methods is a Monte Carlo process, e.g., a randomized algorithm, whose \emph{expected outcome} is to be estimated via repeated random executions. Quantum computers can speed-up this approach at two different levels~\cite{Mon15j}. First, novel algorithmic techniques such as Hamiltonian simulation~\cite{Fey82j} or quantum walks~\cite{Sze04c} provide faster Monte Carlo simulation processes. Secondly, quantum metrology algorithms (such as phase estimation~\cite{Kit95p}) give better error rates for computing statistics on these processes. The present paper focuses on this second point through the lens of the \emph{mean estimation problem}. In this problem, the objective is to compute the closest possible \emph{estimate}~$\mut$ to the mean $\mu = \ex{X}$ of a random variable~$X$ representing the output of some black-box process. Given the ability to repeat this process~$n$ times (the \emph{sample complexity}), one seeks to minimize the error $\norm{\mut - \mu}$ made with high probability.

In the classical setting, a beautiful theory~\cite{LM19j} has been developed to solve the mean estimation problem in Euclidean norm. Under the sole assumption that the covariance matrix~$\Sigma$ of~$X$ exists, it turns out that the optimal \emph{non-asymptotic} error behaves as if~$X$ followed the Gaussian distribution $\mathcal{N}(\mu,\Sigma)$. This motivated the use of the adjective \emph{sub-Gaussian} to qualify the optimal classical estimators. In one dimension, the most well-known sub-Gaussian estimator is arguably the median-of-means~\cite{NY83b,JVV86j,AMS99j}. The first computationally efficient sub-Gaussian estimator in high dimension was only found recently by Hopkins~\cite{Hop20j}. These estimators achieve an optimal error of $\norm{\mut - \mu}_2 \leq \bo[\big]{\sqrt{\Tr(\Sigma)/n} + \sqrt{\norm{\Sigma}\log(1/\delta)/n}}$ with probability~$1-\delta$.

In the quantum setting, the univariate case $X \in \R$ has been studied since the early works on quantum counting~\cite{BBHT98j}. The celebrated amplitude estimation algorithm~\cite{BHMT02j} provides a smaller error rate for estimating the mean of any \emph{Bernoulli} random variable compared to the classical estimators. For general univariate distributions, a series of quantum estimators~\cite{Gro98c,Ter99d,AW99p,Hei02j,WCNA09j,BDGT11p,Mon15j,HM19c,Ham21c} culminated into a near-optimal algorithm that outperforms any classical estimator. On the other hand, the multivariate case $X \in \R^d$, appearing notably in machine learning applications, remains largely unaddressed by quantum algorithms. Classically, it admits a simple near-optimal approach: the $d$ coordinates of $\mu$ can all be estimated simultaneously with $d$ univariate sub-Gaussian estimators run in parallel (i.e., using the same samples from $X$) with only a logarithmic overhead $\log(d)$ in sample complexity (due to the Hoeffding bound). In the quantum scenario however, this \emph{simultaneous} evaluation of several univariate expectation values is more complicated. Indeed, the quantum algorithms for the univariate case rely on quantum amplitude estimation \cite{BHMT02j}, which involves as a critical step an encoding of the expectation value in the relative phase of a quantum register. At first sight, it is unclear how a vector of~$d$ phases could be encoded simultaneously into~$d$ registers without requiring a linear overhead in~$d$. In fact, a lower bound proved by Heinrich~\cite{Hei04j} rules out the possibility of simply a $\log(d)$ overhead for the quantum multivariate mean estimation problem.

Our paper develops \emph{near-optimal} and \emph{computationally efficient} quantum mean estimators for vector-valued random variables of arbitrary dimension with \emph{binary oracle} access. Unlike in the univariate setting ($d=1$), where the optimal quantum estimator~\cite{Ham21c} is strictly more efficient than any classical estimator, we identify two different regimes in higher dimension: (i) if a quantum estimator is limited to accessing the input at most $d$ times (i.e. $n \leq d$) then no advantage can be gained over the classical sub-Gaussian estimators, (ii) if it can access the input at least $d$ times (i.e. $n \geq d$) then the approximation error can be reduced by a near-optimal factor of $\sqrt{d/n}$ compared to classical sub-Gaussian estimators.

We complement this work with new quantum estimators in the weaker \emph{phase oracle} access model, where the information about~$X$ are directly encoded into the phases of quantum registers. This model has been considered before~\cite{GAW19c}, albeit not in the context of quantum mean estimation. We adapt some of our techniques to this model and show that here we can even obtain near-optimal estimators with respect to any $\ell_p$-norm, with $p \in [1,\infty]$, thereby providing a complete characterization of the query complexities involved in the mean estimation problem. This part of our work shares some overlap with a related paper by a subset of the authors~\cite{CJ21p} that focused on the probability and phases oracles models for multivariate Monte Carlo estimation.


\subsection{Contributions}

Our main contribution is the design of new quantum mean estimators that achieve the best possible error rates, up to logarithmic factors, in the multivariate setting. We investigate this problem in two different quantum input models. We first consider the \emph{binary oracle} model in Section~\ref{Sec:MeanBinary}, which generalizes in a natural way the classical \emph{sample complexity} and is the most frequent setting used in previous work (e.g.~\cite{AW99p,Hei02j,BHH11j,BDGT11p,Mon15j,Ham21c}). In this model, the access to a $d$-dimensional random variable $X : \Omega \ra \R^d$ over a probability space $(\Omega, 2^{\Omega}, \P)$ is provided through two unitary operators: one that prepares a superposition over the probability space $U_{\P} : \ket{0} \mapsto \sum_{\omega \in \Omega} \sqrt{\P(\omega)} \ket{\omega}$, and one that evaluates the random variable over the sample set $\bora_X : \ket{\omega}\ket{\vec{0}} \mapsto \ket{\omega}\ket{X(\omega)}$. Note that the mean to be estimated is $\mu = \sum_{\omega \in \Omega} \P(\omega) X(\omega)$. Our first main contribution is to provide an optimal multivariate quantum mean estimator in this setting. Our approach is substantially more involved than in the univariate case~\cite{Ham21c}. A core primitive developed in our work is a new estimator for random variables bounded in $\ell_2$-norm. This can be seen as a multivariate version of the well-known Amplitude Estimation algorithm~\cite{BHMT02j}. Our techniques are based on the Bernstein-Vazirani algorithm~\cite{BV97j} (more precisely, its generalization to estimating a linear function over the reals~\cite{Jor05j}), the quantum singular value transformation framework~\cite{GSLW19c}, and tail inequalities for truncated statistics. We state our first result below with respect to the $\ell_{\infty}$-norm to highlight that it will be more natural to use this metric in our algorithms and convert to $\ell_2$-norm by standard norm inequalities.

\begin{rtheorem}[Theorem~\ref{Thm:MultiBoundedEst} {\normalfont (Informal)}]
  There is a quantum estimator that estimates the mean $\mu$ of any $d$-dimensional random variable $X$ with error $\norm{\mut - \mu}_\infty \leq \frac{\sqrt{L_2} \log(d)}{n}$ and success probability $2/3$, given an upper bound $L_2 \geq \ex{\norm{X}_2}$ and $\wbo{n}$ queries to the oracles $U_{\P}$ and $\bora_X$. The error made by this estimator in $\ell_2$-norm is $\norm{\mut - \mu}_2 \leq \frac{\sqrt{d L_2} \log(d)}{n}$.
\end{rtheorem}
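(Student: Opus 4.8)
The plan is to build the estimator around a multivariate analogue of Amplitude Estimation, namely the Bernstein–Vazirani / Jordan gradient-type primitive, which lets one read off a whole vector of expectation values simultaneously from the phases of a single register. Concretely, suppose first that $X$ takes values in the unit $\ell_2$-ball (or, after truncation, that $\norm{X}_2$ is bounded by some $B$). I would prepare the state $\sum_\omega \sqrt{\P(\omega)}\ket{\omega}$ using $U_{\P}$, append a control register indexed by the $d$ coordinates together with a fine grid of fractional values, use $\bora_X$ to load $X(\omega)$, and then apply a phase proportional to $\langle t, X(\omega)\rangle$ where $t$ ranges over the grid. Averaging over $\omega$ turns the amplitude into (a function of) $\langle t,\mu\rangle$; a Fourier transform over the control register then outputs an estimate of all $d$ coordinates of $\mu$ at once. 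The number of queries is governed by the inverse of the phase-precision we can afford, giving error $\wbo{1/n}$ per coordinate after $\wbo{n}$ queries — this is the step where linear amplitude amplification and the quantum singular value transformation are used to implement the needed block-encoding of the map $t \mapsto e^{i\langle t,\mu\rangle}$ with the right normalization and to handle the $\sqrt{L_2}$ scaling.

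Next I would deal with the fact that $X$ is only bounded in expectation, not almost surely. The standard device is truncation: replace $X$ by $X\,\ind{\norm{X}_2 \leq R}$ for a threshold $R \approx \sqrt{L_2}\cdot n$ (or whatever makes the truncation bias at the level $\sqrt{L_2}/n$), run the bounded-norm estimator on the truncated variable, and control the bias $\norm{\ex{X\,\ind{\norm{X}_2 > R}}}$ by Markov-type and Cauchy–Schwarz arguments using only the first-moment bound $\ex{\norm{X}_2}\le L_2$. This is exactly where the "concentration inequalities for multivariate truncated statistics" advertised in the introduction enter: one needs that truncation changes neither the mean nor the relevant second-order quantities by more than the target error, and that the variance of the truncated variable stays controlled so that the amplitude-estimation-style bound applies. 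I would carry this out coordinatewise in $\ell_\infty$, since the phase-readout naturally produces independent-looking estimates for each coordinate and the union/Fourier bound over $d$ coordinates only costs the $\log(d)$ factor appearing in the statement.

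Finally, the $\ell_2$ conversion is immediate: $\norm{v}_2 \leq \sqrt{d}\,\norm{v}_\infty$ gives $\norm{\mut-\mu}_2 \leq \sqrt{dL_2}\log(d)/n$. The success probability $2/3$ comes from the Fourier-sampling analysis of the Bernstein–Vazirani step (a constant fraction of measurement outcomes lie within the desired precision window), possibly after a single round of median amplification across a constant number of repetitions, which does not change the stated bounds.

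I expect the main obstacle to be the second step combined with the precise implementation of the phase oracle $t \mapsto e^{i\langle t, X(\omega)\rangle/B}$: one must simultaneously (i) keep the amplitude of the "good" subspace bounded away from degeneracy so that linear amplitude amplification yields the $1/n$ rather than $1/\sqrt{n}$ scaling, (ii) ensure the QSVT-based block-encoding respects the normalization set by $L_2$ without blowing up the query count, and (iii) make the truncation threshold compatible with all of these constraints while the multivariate truncated-moment bounds still give bias and variance control at the $\sqrt{L_2}/n$ scale. Balancing these three requirements is the technical heart of the argument; everything else is either a direct appeal to the Bernstein–Vazirani/Jordan algorithm or routine norm manipulation.
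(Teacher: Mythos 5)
Your outer scaffold matches the paper's: a Bernstein--Vazirani/Jordan-type phase readout of the directional means $\inp{u}{\ex{X}}$ over a grid of $u$'s, a Fourier transform to recover the whole vector, linear amplitude amplification to buy the $\sqrt{L_2}$ factor, median over repetitions for the $\log d$, and the trivial $\ell_\infty \to \ell_2$ conversion at the end. All of that is in the paper.

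However, there is a genuine gap, and it concerns where the truncation issue really lives. The formal Theorem~3.3 already \emph{assumes} $\norm{X}_2 \leq 1$ almost surely (with $L_2 \in (0,1]$ an upper bound on $\ex{\norm{X}_2}$); the informal statement you were handed simply suppresses that hypothesis. So the truncation $X \mapsto X\,\ind{\norm{X}_2 \leq R}$ that you propose, together with bias control for $\ex{X\,\ind{\norm{X}_2 > R}}$, is not part of this theorem at all --- it is the mechanism of the \emph{next} result (the near-optimal estimator, Theorem~3.4), where one slices $X$ into layers by $\ell_2$-norm using quantile estimation. The obstacle you actually need to handle here is different and sharper: even with $\norm{X}_2 \leq 1$ and $\norm{u}_\infty \leq 1/2$, the inner product $\inp{u}{X}$ can be as large as $\Theta(\sqrt{d})$, so the naive amplitude encoding $\sqrt{\inp{u}{X}}$ is not valid. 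The paper's Lemma~3.1 fixes this by a concentration inequality \emph{over the random grid vector $u$} (Hoeffding on the independent coordinates $u_j \in (-1/2,1/2)$), showing $\pr_{u\sim G}{\alpha\abs{\inp{u}{X}} \geq \norm{X}_2} \leq 2e^{-2/\alpha^2}$, and then truncates the \emph{inner product} to $[0,1]$ via $\clamp{\alpha\inp{u}{X}}{0}{1}$ rather than truncating $X$ itself. The "tail inequalities for truncated statistics'' advertised in the abstract refer to exactly this lemma, not to moment control after chopping off the tail of $\norm{X}_2$. Without this step, your block-encoding of $t \mapsto e^{i\inp{t}{\mu}}$ simply does not exist, because for some $u$'s in the grid the relevant amplitude is not a probability. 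A smaller inaccuracy: the linear amplitude amplification step is not what buys the $1/n$ over $1/\sqrt{n}$ --- that comes from the Fourier/phase-estimation readout --- it is used purely to rescale $\ex{\clamp{\alpha\inp{u}{X}}{0}{1}}$ by $1/\sqrt{L_2}$ so that the overall error becomes $\sqrt{L_2}/n$ rather than $1/n$.
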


As an illustration of this result, one can simultaneously estimate the expectation values of~$d$ univariate random variables $X_1, \dots, X_d$ distributed in $[0,1]$ each with error $\sqrt{d}\log(d)/n$ by doing~$\wbo{n}$ queries. In comparison, running the Amplitude Estimation algorithm on each random variable separately (with $\wbo{n/d}$ queries) would result in an error of $d/n$.

Similarly to the Amplitude Estimation algorithm, the above primitive estimator does not always provide an optimal error rate with respect to the trace $\Tr(\Sigma) = \ex{\norm{X}_2^2} - \norm{\ex{X}}_2^2$ of the covariance matrix~$\Sigma$ of~$X$. Moreover, it requires the $\ell_2$-norm of $X$ to be bounded by $1$. We improve upon this result to design the next optimal quantum mean estimator.

\begin{rtheorem}[Theorem~\ref{Thm:MultiEuclidean} {\normalfont (Informal)}]
  There is a quantum estimator in the binary oracle model that estimates the mean $\mu$ of any $d$-dimensional random variable $X$ with error
  	\[\norm{\widetilde{\mu} - \mu}_2 \leq
        \begin{cases}
      		\sqrt{\frac{\Tr(\Sigma)}{n}}, & \text{if $n \leq d$}, \\[2mm]
      		\frac{\sqrt{d \Tr(\Sigma)} \log(d)}{n}, & \text{if $n > d$,}
      	\end{cases}\]
  and success probability $2/3$, given $\wbo{n}$ queries to the oracles $U_{\P}$ and~$\bora_X$.
\end{rtheorem}

This bound is achieved by using any classical sub-Gaussian estimators~\cite{LM19j} when $n \leq d$, and a new quantum estimator when $n \geq d$. We show that these two regimes are inevitable since no quantum speed-up is possible when $n \leq d$, whereas our quantum estimator is optimal when $n \geq d$. Our lower bounds are based on the quantum query complexity of approximating a bit string whose entries are determined by parity functions.

\begin{rtheorem}[Theorems~\ref{Thm:LBLowBinary} and~\ref{Thm:LBHighBinary} {\normalfont (Informal)}]
  For any estimator that uses at most $n$ binary oracle queries, there is a $d$-dimensional random variable~$X$ such that, with probability~$2/3$, the error is at least $\norm{\mut - \mu}_2 \geq \om[\Big]{\sqrt{\frac{\Tr(\Sigma)}{n}}}$ if $n \leq d$, and $\norm{\mut - \mu}_2 \geq \om[\Big]{\frac{\sqrt{d\Tr(\Sigma)}}{n}}$ if $n \geq d$.
\end{rtheorem}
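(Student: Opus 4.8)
The plan is to reduce both lower bounds to a single combinatorial primitive: \emph{approximating, in Hamming distance, a hidden string $s\in\{0,1\}^d$ each of whose coordinates is revealed only through a parity function}. By Yao's minimax principle it suffices, for each regime separately, to exhibit one distribution $\mathcal D$ over $d$-dimensional random variables against which no $n$-query quantum estimator succeeds with probability more than, say, $0.6$. Every $X=X_s$ in the support of $\mathcal D$ is built from a uniformly random $s$ so that $\mu=\ex{X_s}$ has a coordinatewise gap $\eta$ dictated by $s$ --- whence $\norm{\mu_s-\mu_{s'}}_2=\eta\sqrt{\mathrm{dist}_H(s,s')}$ --- while $\Sigma$ is essentially block-diagonal, and we arrange $\Tr(\Sigma)$ and $\eta$ so that $\eta\sqrt d$ equals, up to constants, the claimed bound in that regime. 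With such an instance, any $\mut$ with $\norm{\mut-\mu}_2$ below a small constant times $\eta\sqrt d$ rounds coordinatewise to a string at Hamming distance $o(d)$ from $s$; so a lower bound of $n$ queries for recovering a $(1-o(1))$-fraction of $s$ immediately yields the desired error lower bound. The parity functions enter because the construction makes reading a single bit $s_j$ equivalent to deciding the value of a parity on the part of the probability space visible through the $j$-th output coordinate of $\bora_X$, which rules out any amplitude-estimation-style quadratic speedup at the level of individual coordinates.

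For the low-precision regime $n\le d$ the target is the classical sub-Gaussian rate $\sqrt{\Tr(\Sigma)/n}$. Here I take the gap $\eta=\Theta(1/\sqrt n)$ and unit per-coordinate variance, so that $\Tr(\Sigma)=\Theta(d)$ and $\eta\sqrt d=\Theta(\sqrt{\Tr(\Sigma)/n})$. The classical lower bound on this instance is standard (an Assouad/Fano argument over the $2^d$ values of $s$, or directly \cite{LM19j}); the only additional ingredient is that $n\le d$ quantum queries confer no advantage over $n$ classical samples, which is exactly the phenomenon established by Heinrich \cite{Hei04j}. I would invoke his lower bound after checking that our parity-encoded instance meets its hypotheses, so that with $n\le d$ queries a constant fraction of the coordinates of $s$ remain essentially unguessed and each contributes the full per-coordinate error $\Theta(\eta)$.

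The high-precision regime $n\ge d$ is the heart of the proof and the main obstacle. Now the gadget encoding $s_j$ is a parity over a block of $k=\Theta(n/d)$ underlying bits, whose distribution is mildly biased according to $s_j$, and $\eta$ is tuned (together with the variances) so that $\Tr(\Sigma)$ takes its prescribed value and $\eta\sqrt d=\Theta(\sqrt{d\,\Tr(\Sigma)}/n)$. Resolving $s_j$ with constant confidence requires $\Omega(k)$ queries ``spent on block $j$'', which is the no-speedup property of parity and is where a polynomial-method or adversary-method estimate enters. The crucial and most delicate point --- the step I expect to be hardest --- is to show that these per-block costs do \emph{not} amortize across the $d$ gadgets: an adaptive quantum algorithm with total budget $n$, even one that entangles its computation across all coordinates, behaves on a typical block as if it had only $O(n/d)$ dedicated queries there, and can therefore recover only an $o(1)$-fraction of $s$. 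I would establish this through a hybrid argument that swaps block $j$'s biased distribution for an unbiased one, bounds the per-query change in the algorithm's state by a quantity proportional to $\eta\sqrt{q_j}$ where $q_j$ is the ``effective'' number of queries probing block $j$ and $\sum_j q_j\le n$, and then sums over $j$ by Cauchy--Schwarz; equivalently, one may package the claim as a direct-sum statement for the quantum query complexity of parity, possibly adapting machinery from the companion work \cite{CJ21p}. Two routine steps then finish the proof: passing from a $2/3$-probability bound on $\norm{\mut-\mu}_2$ to a bona fide solver for the Boolean recovery problem (Markov's inequality plus amplification by independently repeating the block structure), and verifying that the block-parity construction has the claimed trace and is essentially block-diagonal, which follows from independence of the blocks.
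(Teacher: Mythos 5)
Your high-level idea is indeed the paper's: both regimes reduce to a quantum query lower bound for approximately recovering a hidden string whose coordinates are exposed only through parity functions. The paper packages this as a single problem, $\proc{Search}^N\circ\proc{Parity}^M$ (Problem~\ref{Def:SP}), with a single $\Omega(NM)$ lower bound (Lemma~\ref{Lem:SP}) for approximating the hidden vector $b^{(A)}$ in $\ell_2$-norm, and then instantiates it with $(N,M)=(\alpha n,1)$ for Theorem~\ref{Thm:LBLowBinary} and $(N,M)=(d,\alpha n/d)$ for Theorem~\ref{Thm:LBHighBinary}. You also correctly identify the direct-sum / no-amortization issue as the crux for $n\ge d$, and the paper, like you, defers that piece to an imported result (Lemma~\ref{Lem:SP}, whose proof comes from \cite{vApe21c,CJ21p}). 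Your high-precision gadget --- $s_j$ encoded as a parity over a block of $\Theta(n/d)$ underlying bits --- is essentially the paper's construction over $\Omega=[d]\times[\alpha n/d]$ with $X^{(A)}(i,j)\propto(-1)^{1+A_{i,j}}e_i$, whose coordinatewise mean is a rescaled parity bit. That part of your plan matches.

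Where you genuinely diverge, and where I see a gap, is the low-precision regime. You propose a $d$-dimensional $s$ with per-coordinate gap $\eta=\Theta(1/\sqrt n)$, a classical Assouad/Fano argument, and then an appeal to Heinrich's theorem to certify that $n\le d$ quantum queries yield no advantage. But the paper explicitly notes (in its related-work discussion) that Theorem~\ref{Thm:LBLowBinary} \emph{refines} Heinrich by adding the dependence on the covariance matrix, which signals that Heinrich's result as stated does not give the $\Tr(\Sigma)$-parameterized bound needed here, nor does it apply to an arbitrary parity-encoded instance without further work. The paper does not take the classical-plus-Heinrich route at all: it stays inside the same quantum-query framework by reducing from $\proc{Search}^{\alpha n}\circ\proc{Parity}^{1}$ and embedding the $\alpha n$ hidden bits into $\R^d$ via a partial Hadamard matrix $H\in\R^{\alpha n\times d}$ with $HH^{\top}=\id$; the random variable is $X^{(b)}(i)\propto b_i H_i$, and row-orthogonality makes $\norm{\mut-\mu}_2$ directly control $\norm{\td{b}-b}_2$. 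This yields the covariance normalization $\Tr(\Sigma)=\sigma^2$ and the $\sigma/\sqrt n$ error rate in one shot, with the same lemma as the high-precision case. Your plan reintroduces two separate hardness results (Fano/Heinrich for small $n$, a hybrid argument for large $n$) after promising a unified primitive, and the Heinrich invocation would need a nontrivial additional argument to apply to your instance and to produce the covariance-scaled error. If you want to repair the proposal, replace the Heinrich step with the Hadamard embedding: that is exactly what makes the low-precision case go through cleanly.
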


Next, in Section~\ref{Sec:Analog}, we investigate the mean estimation problem in the \emph{phase oracle} model where the aforementioned unitary $\bora_X$ is replaced with phase access $\pora_X : \ket{\omega}\ket{j} \mapsto e^{i X(\omega)_j} \ket{\omega}\ket{j}$ to the coordinates of $X$. This model can be efficiently simulated using a binary oracle but the converse is generally not true. In fact, even obtaining one classical sample from $X$ using a phase oracle is generally a hard task. On the other hand, as explained in \cite{GAW19c}, this model arises naturally in the context of variational quantum eigensolvers, QAOA, and quantum auto-encoders for instance. This reason motivates understanding what is the optimal error rate for mean estimation in this weaker setting. Although a phase oracle does not allow to obtain an error depending on the covariance matrix, we manage to adapt some of the techniques developed for binary oracles to arrive at an optimal estimator when~$X$ is bounded in $\ell_{\infty}$-norm by $\norm{X}_{\infty} \leq 1/4$. Interestingly, our results differ qualitatively from those in the binary oracle setting in two aspects. First, our estimator does not make the same number of queries to the oracles $U_{\P}$ and~$\pora_X$, and the optimal precision depends in fact differently on these two parameters. Second, in this model, we are actually able to tightly characterize the optimal performance with respect to all $\ell_p$-norms, where $p \in [1,\infty]$, up to polylogarithmic factors. We state our results here only with respect to the $\ell_2$-norm for ease of exposition and comparison to the binary oracle setting, and we show that these results are nearly optimal in Theorem~\ref{Thm:LowerPO}.

\begin{rtheorem}[Theorem~\ref{Thm:phOracle} {\normalfont (Informal)}]
  There is a quantum estimator that estimates the mean $\mu$ of any $d$-dimensional random variable $X$ such that $\norm{X}_{\infty} \leq 1/4$ with error
  	\[\norm{\widetilde{\mu} - \mu}_2 \leq
        \begin{cases}
      		\max\set[\Big]{\sqrt{\frac{d}{n}}, \frac{d^{3/2}}{n'}} \cdot \log(d), & \text{if $n \leq d$}, \\[2mm]
      		\max\set[\Big]{\frac{d}{n}, \frac{d^{3/2}}{n'}} \cdot \log(d), & \text{if $n > d$,}
      	\end{cases}\]
 and success probability $2/3$, given $\wbo{n}$ queries to the oracle $U_{\P}$ and~$\wbo{n'}$ queries to the oracle~$\pora_X$.
\end{rtheorem}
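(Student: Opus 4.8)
The plan is to run the binary-oracle estimator of Theorem~\ref{Thm:MultiEuclidean} almost verbatim, but to realise each of its oracle accesses on top of the weaker phase oracle $\pora_X$ by a \emph{second, nested} amplitude-estimation over the $d$ coordinates. Two elementary observations set this up. First, the hypothesis $\norm{X}_\infty\le 1/4$ forces $\ex{\norm{X}_2}\le\sqrt d/4$ and $\Tr(\Sigma)\le\ex{\norm{X}_2^2}\le d/16$, so with $T$ sufficiently accurate logical oracle uses the estimator of Theorem~\ref{Thm:MultiEuclidean} returns an estimate with $\ell_2$-error $\wbo{d/T}$ when $T>d$ and $\bo{\sqrt{d/T}}$ when $T\le d$ (the latter regime being the classical sub-Gaussian estimator of~\cite{LM19j}). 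Second --- and this is the crucial point for the query count --- what that estimator needs from its oracle, per logical round, is essentially one \emph{bounded one-dimensional sketch}: either $\inp{X(\omega)}{r}$ for a fixed unit vector $r$ (as consumed by the Bernstein--Vazirani/Jordan step~\cite{BV97j,Jor05j} and the dimension-reduction step) or the norm $\norm{X(\omega)}_2$ (for the truncation step), each needed only to \emph{constant} additive accuracy, together with one application of $U_\P$. From $\pora_X$, which exposes only the coordinate phases $e^{iX(\omega)_j}$, and since $\abs{\inp{X(\omega)}{r}}\le\norm{X(\omega)}_2\le\sqrt d/4$, I would obtain such a sketch by preparing the state $\sum_j r_j\ket{j}$, imprinting the phases $e^{iX(\omega)_j}$, rescaling by $1/\sqrt d$, and running amplitude estimation to accuracy $1/\sqrt d$ --- this costs $\wbo{\sqrt d}$ queries to $\pora_X$ and no further queries to $U_\P$. (Alternatively one can linearise $e^{iX(\omega)_j}\mapsto X(\omega)_j$ by quantum singular value transformation~\cite{GSLW19c}, legitimately because $\norm{X}_\infty\le1/4$ keeps the phases away from the branch cut of the logarithm.) Nesting this inside $T$ logical rounds, a budget of $n$ queries to $U_\P$ and $n'$ queries to $\pora_X$ therefore supports $T=\min\set{n,\ \wom{n'/\sqrt d}}$ rounds, and substituting into the bounds above yields $\ell_2$-error $\wbo{\max\set{d/n,\ d^{3/2}/n'}}$ when $T>d$ (i.e.\ $n>d$) and $\bo{\max\set{\sqrt{d/n},\ d^{3/2}/n'}}$ when $n\le d$, which is exactly the claimed bound up to the logarithmic factors.

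The second ingredient is controlling the errors that this nested readout introduces. Linearising via $e^{i\theta}\approx1+i\theta$ carries a systematic $\bo{\theta^2}$ error, and amplitude estimation returns a value only distributed around its target; uncontrolled, these would bias the final estimate $\mut$ by roughly $\sqrt d$ times a per-coordinate scale. I would kill the amplitude-estimation bias by randomised rounding (a uniformly random offset inserted before the estimation, making the readout conditionally unbiased) and absorb the $\bo{\theta^2}$ term into a truncated surrogate $\td X$ with the same mean $\mu$ and a covariance trace inflated by at most a constant factor; the truncated-statistics concentration inequalities underlying Theorem~\ref{Thm:MultiEuclidean} then carry over once re-run with the extra bounded, conditionally independent noise, which is sub-exponential of constant scale. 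The same bookkeeping also produces the $\ell_p$-norm versions omitted from this informal statement, via the usual $\ell_\infty$-to-$\ell_p$ conversions, matching the lower bounds of Theorem~\ref{Thm:LowerPO}.

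I expect the main obstacle to be this two-level query accounting: confirming that a single logical use of the binary-oracle estimator really does reduce to one bounded sketch $\inp{X(\omega)}{r}$ --- so the per-round overhead is $\wbo{\sqrt d}$ rather than the naive $\wbo d$ from reading all $d$ coordinates separately, which is what turns a $d^2/n'$ bound into the desired $d^{3/2}/n'$ --- and, for the low-precision regime $n\le d$, exhibiting a sub-Gaussian estimator that likewise accesses $X$ only through such sketches. A second delicate point is making sure that the $\bo{\theta^2}$ linearisation error and the failure probability of the inner amplitude estimation do not accumulate when this subroutine is invoked inside the outer amplitude-amplification loop of Theorem~\ref{Thm:MultiEuclidean}; this is the reason the informal error above is stated only up to logarithmic factors.
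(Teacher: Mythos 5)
Your high-level intuition --- that each call to $U_{\P}$ should be "paid for'' with roughly $\sqrt d$ calls to $\pora_X$ --- matches the paper's accounting, but the reduction you propose does not go through. The paper does not run the estimator of Theorem~\ref{Thm:MultiEuclidean} on top of simulated binary-oracle queries; it adapts only the \emph{bounded} estimator $\boundedest$ (the Bernstein--Vazirani/QFT step), dropping the quantile/truncation machinery entirely, because that machinery genuinely does not survive the passage to a phase oracle. Two concrete obstructions: (i) $\boundedest$'s Proposition~\ref{Prop:PhaseExp} needs to coherently encode the \emph{clamped} quantity $\clamp{\alpha\inp{u}{X(\omega)}}{0}{1}$, and the quantile subroutine needs coherent comparisons of $\norm{X(\omega)}_2$ against thresholds; these nonlinear per-$\omega$ operations require the classical register $\ket{X(\omega)}$, not a constant-accuracy additive sketch of $\inp{u}{X(\omega)}$. (ii) More importantly, the probability $p_{u,\omega}$ you get when you route $\pora_X$ through phase-to-amplitude conversion approximates $\inp{u^{(+)}}{\vec{1}/2 + X(\omega)}/d$, not $\inp{u}{X(\omega)}/d$ --- it carries a $\Theta(1)$ offset from the $\vec{1}/2$ term. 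The paper devotes a paragraph after Lemma~\ref{lem:directional-phase-oracle} to exactly this point: this offset precludes the linear amplitude amplification step that gives Proposition~\ref{Prop:PhaseExp} its $\sqrt{L_2}$ dependence, and the lower bounds show this is not a removable artifact. So "run Theorem~\ref{Thm:MultiEuclidean} verbatim with a nested readout'' is not what happens; what the paper actually does is pick an intermediate parameter $m' \approx \sqrt{d}$ for the inner directional \emph{phase} oracle so that each application of it costs $\widetilde{O}(\sqrt d)$ calls to $\pora_X$, and then applies $\widetilde{O}(\sqrt d\, m)$ of these together with one quantum experiment each inside the amplitude-to-phase conversion (Lemma~\ref{Lem:AnalogDirectionalMeanOracle}). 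No truncation, no quantiles.

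The low-precision regime is where the gap is sharpest. You propose to use a classical sub-Gaussian estimator when $n \le d$, which requires classical samples of $X(\omega)$. Reading off $X(\omega)$ from $\pora_X$ to constant per-coordinate accuracy costs $\widetilde{O}(d)$ phase queries per sample, so with an $n'$ budget you can afford only $\sim n'/d$ samples, giving $\ell_2$-error about $\sqrt{d/(n'/d)} = d/\sqrt{n'}$. For $n' > d$ this is strictly worse than the claimed $d^{3/2}/n'$, so your estimator would not meet the bound. The paper's Algorithm~\ref{Alg:LowPrecisionAnalogEstimator} avoids this by drawing $n$ samples \emph{only from the probability space} (without reading out $X$), forming the empirical distribution $\overline{\P}$, and then running $\proc{QPhase}$ on $U_{\overline{\P}}$ with the full $n'$ phase-query budget; the $\sqrt{d}$ quantum savings on the phase oracle side is preserved, whereas classical per-coordinate readout throws it away. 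Your proposal correctly flags both of these issues as "the main obstacle,'' but they are not bookkeeping subtleties to be smoothed over --- they force the paper onto a genuinely different construction.
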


Finally, we conclude this paper by giving some applications of the above results in Section~\ref{Sec:Applications}. We first explain how our formulation of the multivariate mean estimation problem covers the general task of estimating the expectation values of several mutually commuting observables with respect to a given quantum state (Section~\ref{Sec:comObs}). We then present several applications in the literature, and notably in quantum machine learning (training variational quantum circuits, Boltzmann machines, or reinforcement learning policies), where this problem arises (Section~\ref{Sec:ExApps}).


\subsection{Proof overview}
\label{Sec:Overview}

We give a high-level description of the algorithms developed in Section~\ref{Sec:MeanBinary} for addressing the mean estimation problem in the binary oracle model. Similar techniques are employed in Section~\ref{Sec:Analog} for the phase oracle model. We simplify the exposition by replacing $\Tr(\Sigma) = \ex{\norm{X - \mu}_2^2}$ with the second moment~$\ex{\norm{X}_2^2}$, and by taking the failure probability~$\delta$ to be a small constant. The approximation error $\norm{\mut - \mu}_{\infty}$ is measured here with respect to the $\ell_{\infty}$-norm.

\paragraph*{Bounded multivariate estimator.}
The main obstacle when trying to generalize most quantum univariate estimators (e.g.~\cite{Ter99d,Hei02j,WCNA09j,Mon15j,HM19c,Ham21c}) to the multivariate setting is the absence of an estimator for \emph{bounded} multivariate random variables. In the univariate setting, such an estimator is provided by the well-known Amplitude Estimation algorithm~\cite{BHMT02j} which, by a well-known trick~\cite{Ter99d,WCNA09j,Mon15j}, can estimate the mean of any random variable \emph{bounded in $[-1,1]$} with an error on the order of $\sqrt{\ex{\abs{X}}}/n$. It is worth recalling how this estimator works when $X$ is bounded in $[0,1]$: the value $\varphi = \arcsin(\sqrt{\ex{X}})$ is encoded as the phase of a particular unitary operator and estimated with error $1/n$ using phase estimation~\cite{Kit95p}. Then, by standard trigonometric identities, $\abs{\td{\varphi} - \varphi} \leq 1/n$ implies that $\abs{\sin^2(\td{\varphi})-\ex{X}} \leq 2\sqrt{\ex{X}}/n + 1/n^2$ (the lower-order term $1/n^2$ can be removed by  testing if $\ex{X} \leq 1/n^2$ and outputting~$0$ if this is the case). We generalize this idea to higher dimensions in a novel way by considering the \emph{directional mean} function $u \mapsto \inp{u}{\ex{X}}$ where $u \in \R^d$. By using a constant number of queries to $X$ and amplitude-to-phase conversion techniques~\cite{GAW19c}, one can efficiently approximate the unitary $\ket{u} \mapsto e^{i \inp{u}{\ex{X}}} \ket{u}$ if $\abs{\inp{u}{X}} \leq 1$ almost surely. We could then estimate the directional mean $\inp{u}{\ex{X}}$ with phase estimation, for sufficiently many values of $u$, in order to reconstruct an estimate of $\ex{X}$. However, this approach would incur a linear cost in the dimension~$d$. Instead, since the directional mean is a \emph{linear function} in $u$, we can use a variant of the Bernstein-Vazirani algorithm~\cite{BV97j} to directly recover the entire vector $\ex{X}$ (up to a certain precision) with fewer queries. This idea is also at the heart of the quantum gradient estimation algorithms~\cite{Jor05j,GAW19c}, however it requires two major improvements for our setting. First, we can only make the assumptions that $X$ is bounded in $\ell_2$-norm (i.e. $\norm{X}_2 \leq 1$) and $u$ in $\ell_{\infty}$-norm (i.e. $\norm{u}_{\infty} \leq 1$). However, these two conditions do not imply that $\abs{\inp{u}{X}} \leq 1$ as needed by the amplitude-to-phase conversion technique. We overcome this issue by proving tail inequalities for inner products and directional means (Lemma~\ref{Lem:Truncate}) showing that they do not exceed~$1$ with high probability under our assumptions. Hence, by suitable truncations, this gives us a first version of a bounded estimator with error~$1/n$. Secondly, we need to incorporate information about $X$ in the error. We cannot reproduce the univariate approach by encoding $\arcsin(\sqrt{\abs{\inp{u}{\ex{X}}}})$ instead of $\inp{u}{\ex{X}}$ into the phase, since it would no longer be a linear function. Instead, we use the quantum singular value transformation framework~\cite{GSLW19c} to linearly amplify the (squared) amplitude encoding the directional mean $\inp{u}{\ex{X}}$ into $\inp[\big]{u}{\frac{\ex{X}}{\ex{\norm{X}_2}}}$, before applying the amplitude-to-phase conversion technique. Since this amplification step requires $\bo{1/\sqrt{\ex{\norm{X}_2}}}$ queries, this leaves us with $\wbo{n \sqrt{\ex{\norm{X}_2}}}$ iterations available for the vector recovering step. Hence, the rescaled mean~$\frac{\ex{X}}{\ex{\norm{X}_2}}$ is estimated with error~$1/(n \sqrt{\ex{\norm{X}_2}})$, which translates into the improved error of $\sqrt{\ex{\norm{X}_2}}/n$ for $\ex{X}$ (Theorem~\ref{Thm:MultiBoundedEst}).

\paragraph*{Near-Optimal multivariate estimator.}
We build on the above bounded estimator to remove the assumption on the boundedness of~$X$ and decrease the error to $\sqrt{\ex{\norm{X}_2^2}}/n$. Similarly to the univariate case~\cite{Hei02j,Mon15j,HM19c,Ham21c}, we decompose $X = X_0 + X_1 + X_2 + \dots$ into a sequence of \emph{truncated} random variables $X_{j} = X \ind{a_{j-1} < \norm{X}_2 \leq a_{j}}$ over slices of the $\ell_2$-ball, where the values outside the range $(a_{j-1},a_{j}]$ are mapped to $0$. The truncation levels $0 = a_{-1} < a_0 < a_1 < a_2 < \dots$ are chosen so that the bounded estimator performs well on each~$X_{j}$ individually. In the univariate setting, this sequence followed a geometric progression of ratio $2$. Here, we instead choose $a_{j}$ to be the \emph{quantile value} of order $2^{-j}$ satisfying $\pr{\norm{X}_2 \geq a_{j}} \approx 2^{-j}$. This new choice has the advantage that the expected norm of $X_{j}$ can be \emph{explicitly} bounded as $\ex{\norm{X_{j}}_2} \leq 2^{-j-1}$ (Equation~(\ref{Eq:MomBound})), a property needed by our bounded estimator. Moreover, we show that this sequence increases slowly enough so that $a_{j} \leq 2^{j/2} \sqrt{\ex{\norm{X}_2^2}}$ (Equation~(\ref{Eq:quantBound})). Consequently, the bounded estimator can estimate \emph{separately} the mean of each $X_{j}$ with an error of $a_{j} \sqrt{\ex{\norm{X_{j}}_2}}/n \leq 2^{-1/2} \sqrt{\ex{\norm{X}^2_2}}/n$ (where the $a_{j}$ factor comes from normalizing~$X_{j}$ to make it fit into the unit $\ell_2$-ball). Finally, each quantile $a_{j}$ can be computed (approximately) in time $\bo{2^{-j/2}}$ using the quantile estimator developed in~\cite{Ham21c} (Proposition~\ref{Prop:Quantile}), and we only need to consider $j \leq \bo{\log n}$ truncated random variables since the part of $X$ above that threshold does not contribute to a significant portion of the mean (Equation~(\ref{Eq:truncLargest})). This leads to the final error of $\sqrt{\ex{\norm{X}^2_2}}/n$ (Theorem~\ref{Thm:MultiEstim}).


\subsection{Related work}

There is an extensive literature on classical mean estimators and we refer the reader to \cite{LM19j} for an excellent survey on the optimal \emph{sub-Gaussian estimators} in Euclidean norm. We point out that the \emph{empirical mean} estimator is generally not optimal, and its error is captured by several standard concentration bounds such as the Chebyshev, Chernoff and Bernstein inequalities.

There is a series of quantum \emph{univariate} mean estimators \cite{Gro98c,AW99p,BDGT11p} that get close to the error $1/n$ for random variables distributed in $[0,1]$ (and success probability~$2/3$). The amplitude estimation algorithm~\cite{BHMT02j,Ter99d} leads to a sharper bound of $\sqrt{\mu}/n$. Nevertheless, the mean $\mu$ is always larger than or equal to the variance~$\sigma^2$ when~$X$ is distributed in $[0,1]$. The question of improving the dependence on~$\sigma^2$ was considered in~\cite{Hei02j,Mon15j,HM19c,Ham21c}, where it is shown that the optimal error is~$\sigma/n$.

There are very few works addressing the quantum \emph{multivariate} mean estimation problem. Heinrich~\cite{Hei04j} proved that the error rate must depend on $1/n$ when the dimension is sufficiently large. Our lower bound in the $n \leq d$ regime (Theorem~\ref{Thm:LBLowBinary}) refines this statement by adding a dependence on the covariance matrix. In a recent work, van Apeldoorn~\cite{vApe21c} proposed a ``multidimensional Amplitude Estimation'' algorithm. However, it only applies to a restricted set of random variables and the error does not recover that of Amplitude Estimation when $d = 1$. More precisely, the author described a quantum algorithm for estimating with error $1/n$ (in $\ell_{\infty}$-norm) a probability vector $p = (p_1,\dots,p_d)$ given access to a unitary $U : \ket{0} \mapsto \sum_i \sqrt{p_i} \ket{i}$. This is a special case of the multivariate mean estimation problem, where the random variable $X \in \rn^d$ is equal to the basis vector~$e_i$ with probability~$p_i$. Applying our main result (Theorem~\ref{Thm:MultiEstim}) to $X$ decreases the error given in~\cite{vApe21c} by a factor of $\sqrt{\Tr(\Sigma)} = \pt{1 - \sum_{i \in [d]} p_i^2}^{1/2}$.

Our work shares some similarities with the quantum gradient estimation algorithm of Jordan~\cite{Jor05j,GAW19c}, which also uses an extension of the Bernstein-Vazirani algorithm to linear functions over the reals. However, unlike gradient estimation, the mean estimation problem requires combining this technique with further algorithmic steps.

\section{Preliminaries}

  \subsection{Notations}
  Throughout the paper we use the $\wbo{x}$ and $\wom{x}$ notations to hide factors that are polylogarithmic in the argument~$x$. We let $\ket{\vec{0}}$ denote a multiple qubits state $\ket{0\dots 0}$. We use the notation~$\Hila$ when referring to an auxiliary Hilbert space of sufficiently large dimension. We consider the family of $\ell_p$-norms, defined as follows.

\begin{definition}[\sc $\ell_p$-norm]
  Given $p \in [1,+\infty)$, the \emph{$\ell_p$-norm} $\norm{x}_p$ of a $d$-dimensional vector $x$ is defined as $\norm{x}_p = \pt{\sum_{i \in [d]} \abs{x_i}^p}^{1/p}$ if $p < \infty$, and $\norm{x}_{\infty} = \max_{i \in [d]} \abs{x_i}$. We also let $\norm{x} = \norm{x}_2$ denote the $\ell_2$-norm, and for a matrix $M$ we set $\norm{M}$ to be the induced $\ell_2$-norm (or \emph{spectral norm}).
\end{definition}

Given $x \in \R^d$ and $0 \leq a < b$, we define the following truncation with respect to the $\ell_2$-norm.
  \[\clamp{x}{a}{b} =
    \left\{
       \begin{array}{l}
         x \quad \text{if $a < \norm{x}_2 \leq b$,} \\[2mm]
         0 \quad \text{otherwise.}
       \end{array}
    \right.\]

We recall the definition of a multivariate random variable. We only consider finite probability spaces for finite encoding reasons. Throughout the paper $d \in \N$ will denote the dimension of the random variable whose mean is to be estimated.

\begin{definition}[\sc Random variable]
  \label{Def:RandVar}
  A (finite) \emph{random variable} is a function $X : \Omega \ra E$ for some probability space $(\Omega, 2^{\Omega}, \P)$, where $\Omega$ is a finite sample set, $\P : \Omega \ra [0,1]$ is a probability mass function and $E \subset \R^d$ is the finite support of $X$. The \emph{covariance matrix} $\Sigma \in \R^{d \times d}$ of $X$ is defined as $\Sigma = \ex{X X^{\top}} - \ex{X}\ex{X}^{\top}$.
\end{definition}

We say that $X$ is \emph{univariate} if the dimension is $d = 1$, and multivariate otherwise. For any norm~$\norm{.}$ over $\R^d$, we let~$\norm{X}$ denote the univariate random variable $\omega \mapsto \norm{X(\omega)}$. Finally, we recall the definition of a quantile value (using the complementary cumulative distribution function).

\begin{definition}[\sc Quantile]
  \label{Def:quantile}
  Given a discrete real-valued random variable $X$ and a real $p \in [0,1]$, the \emph{quantile} of order $p$ is the number
    $Q(p) = \sup\set{x \in \R : \pr{X \geq x} \geq p}$.
\end{definition}


  \subsection{Input models}
  The input to the multivariate mean estimation problem is represented by a random variable~$X$ taking values in $\R^d$.  In this section, we describe two possible access models for quantum estimators. Before that, we first recall the classical model, which we refer to as a \emph{random experiment}.


\begin{definition}[\sc Random experiment]
  \label{Def:rExp}
  Given a random variable~$X$ on a probability space $(\Omega, 2^{\Omega}, \P)$, we define a \emph{random experiment} as the process of drawing a sample $\omega \in \Omega$ according to~$\P$ and observing the value of $X(\omega) \in \R^d$.
\end{definition}

In the quantum setting, we make a distinction between accessing the probability mass function~$\P$ and evaluating the function $X : \Omega \ra E$. The first operation is provided by means of a \emph{quantum experiment}, defined in the following way.

\begin{definition}[\sc Quantum experiment]
  \label{Def:qExp}
  Consider a random variable $X$ on a probability space $(\Omega, 2^{\Omega}, \P)$. Let $\Hil_{\Omega}$ be a Hilbert space with basis states $\set{\ket{\omega}}_{\omega \in \Omega}$ and fix a unitary $U_{\P}$ acting on $\Hil_{\Omega}$ such that
    \[U_{\P} : \ket{0} \mapsto \sum_{\omega \in \Omega} \sqrt{\P(\omega)} \ket{\omega}\]
  assuming $0 \in \Omega$. We define a \emph{quantum experiment} as the process of applying the unitary~$U_{\P}$ or its inverse~$U_{\P}^{-1}$ on any state in~$\Hil_{\Omega}$.
\end{definition}

We note that $\P$ is sometimes assumed to be the uniform distribution over some large set~$\Omega = [N]$ (e.g.~\cite{Gro98c,NW99c,Hei02j,BHH11j,CFMdW10c,BDGT11p,LW19j}). In this case, the access to the unitary $U_{\P}$ need not be provided as part of the input.

We now describe two different quantum oracles for evaluating $X$. The first oracle provides a direct access to the value of $X(\omega)$. This model is the most commonly used in previous work on quantum mean estimation (e.g. \cite{Gro98c,Ter99d,NW99c,Hei02j,BDGT11p,Mon15j,HM19c}).

\begin{definition}[\sc Binary oracle]
  \label{Def:binOracle}
  Consider a finite random variable $X : \Omega \to E$ on a probability space $(\Omega, 2^{\Omega}, \P)$. Let $\Hil_{\Omega}$ and $\Hil_E$ be two Hilbert spaces with basis states $\set{\ket{\omega}}_{\omega \in \Omega}$ and $\set{\ket{x}}_{x \in E}$ respectively. We say that a unitary $\bora_X$ acting on $\Hil_{\Omega} \otimes \Hil_E$ is a \emph{binary oracle} for $X$ if
    \[\bora_X : \ket{\omega}\ket{\vec{0}} \mapsto \ket{\omega}\ket{X(\omega)}\]
  for all $\omega \in \Omega$, assuming $\vec{0} \in E$.
\end{definition}

Observe that one random experiment can be simulated by preparing the state $\sum_{\omega \in \Omega} \sqrt{\P(\omega)} \ket{\omega} \allowbreak \ket{X(\omega)}$ and measuring its last register in the $\set{\ket{x}}_{x \in E}$ basis. This requires using one quantum experiment and one call to the binary oracle.

Our second type of oracle provides individual access to the coordinates of $X(\omega)$, encoded into the phases of a query operator. This model appears naturally in the context of variational eigensolvers, QAOA, and training variational auto-encoders \cite{GAW19c}, albeit not in relation to the quantum mean estimation problem. This input model can be efficiently simulated using a binary oracle, but the converse is generally not true. In fact, even obtaining one classical sample from $X$ may not be easy to do using a phase oracle.

\begin{definition}[\sc Phase oracle]
  \label{Def:phaseOracle}
  Consider a finite random variable $X : \Omega \to E$ on a probability space $(\Omega, 2^{\Omega}, \P)$. Let $\Hil_{\Omega}$ be a Hilbert space with basis states $\set{\ket{\omega}}_{\omega \in \Omega}$. We say that a unitary~$\pora_X$ acting on $\Hil_{\Omega} \otimes \C^d$ is a \emph{phase oracle} for $X$ if
    \[\pora_X : \ket{\omega}\ket{j} \mapsto e^{i X(\omega)_j} \ket{\omega}\ket{j}\]
  for all $\omega \in \Omega$ and $j \in [d]$.
\end{definition}

  \subsection{Algorithmic tools}
  We first recall the optimal classical error bound for estimating the mean of a multivariate random variable with respect to the Euclidean norm.

\begin{proposition}[\sc Classical sub-Gaussian estimators, \cite{LM19j}]
  \label{Prop:ClassSubGaussian}
  Let $X$ be a $d$-dimensional random variable with mean $\mu$ and covariance matrix $\Sigma$. Given $\delta \in (0,1)$ and $n \geq \log(1/\delta)$, the \emph{sub-Gaussian estimators} outputs a mean estimate $\mut$ such that
    \[\norm{\mut - \mu}_2 \leq \sqrt{\frac{\Tr(\Sigma)}{n}} + \sqrt{\frac{\norm{\Sigma} \log(1/\delta)}{n}}\]
  with probability at least $1-\delta$, by using $\bo{n}$ random experiments.
\end{proposition}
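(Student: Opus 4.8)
The plan is to reproduce the \emph{median-of-means} estimator of Lugosi and Mendelson~\cite{LM19j}. Partition the $n$ random experiments into $k = \lceil c\log(1/\delta)\rceil$ blocks of size $m = n/k$, and let $Z_1,\dots,Z_k \in \R^d$ be the empirical means within the blocks; set the target radius $r = C\bigl(\sqrt{\Tr(\Sigma)/n} + \sqrt{\norm{\Sigma}\log(1/\delta)/n}\bigr)$ for absolute constants $c,C$. Call a point $x \in \R^d$ \emph{admissible} if for every unit vector $v$ we have both $\#\{i : \langle Z_i - x, v\rangle \le r\} > k/2$ and $\#\{i : \langle Z_i - x, v\rangle \ge -r\} > k/2$, and let $\mut$ be any admissible point. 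This is the only step that consumes random experiments, and it uses $\bo{n}$ of them.

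The analysis splits into two halves. \emph{Soundness} is a short pigeonhole argument: if $\mu$ is admissible and $x$ is admissible, apply the two majority conditions along $v = (x-\mu)/\norm{x-\mu}$ to produce a single block $i$ with $\langle Z_i - x, v\rangle \ge -r$ and $\langle Z_i - \mu, v\rangle \le r$ simultaneously; subtracting the two inequalities gives $\norm{x - \mu} = \langle x-\mu, v\rangle \le 2r$. It therefore remains to prove \emph{feasibility}: with probability $\ge 1-\delta$ the true mean $\mu$ is admissible, which (replacing $v$ by $-v$) is exactly the event $\sup_{v \in S^{d-1}} \#\{i : \langle Z_i - \mu, v\rangle > r\} < k/2$.

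Establishing this uniform deviation bound is the main obstacle, and it is where the $\sqrt{\Tr(\Sigma)/n}$ term enters. For a fixed $v$, $\langle Z_i - \mu, v\rangle$ is centered with variance $\langle v, \Sigma v\rangle/m \le \norm{\Sigma}k/n$, so Chebyshev gives $\Pr[\langle Z_i - \mu, v\rangle \ge r/2] \le 1/4$ once $r \gtrsim \sqrt{\norm{\Sigma}k/n}$ — but a union bound over $S^{d-1}$ is unavailable. Instead I would dominate the indicator by a $(2/r)$-Lipschitz surrogate $\phi$ with $\phi(0)=0$ and $\ind{s \ge r} \le \phi(s) \le \ind{s \ge r/2}$, so that $\sup_v \#\{i : \langle Z_i - \mu, v\rangle > r\} \le \sup_v \sum_i \phi(\langle Z_i - \mu, v\rangle)$; the expectation of the right-hand side is at most $\sup_v \sum_i \mathbb{E}\phi(\langle Z_i - \mu, v\rangle) \le k/4$ plus the symmetrized fluctuation term $2\,\mathbb{E}\sup_v \sum_i \epsilon_i \phi(\langle Z_i - \mu, v\rangle)$, and the Rademacher contraction principle strips off $\phi$ at cost $2/r$, leaving $\frac{2}{r}\,\mathbb{E}\norm[\big]{\sum_i \epsilon_i (Z_i - \mu)} \le \frac{2}{r}\sqrt{\sum_i \mathbb{E}\norm{Z_i - \mu}_2^2} = \frac{2}{r}\sqrt{k\,\Tr(\Sigma)/m}$, which is below $k/8$ once $r \gtrsim \sqrt{\Tr(\Sigma)/n}$. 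Finally, changing a single $Z_i$ moves $\sup_v \#\{i : \langle Z_i - \mu, v\rangle > r\}$ by at most $1$, so McDiarmid's bounded-differences inequality upgrades the expectation bound $\le 3k/8$ to $\Pr[\,\cdot\, \ge k/2] \le e^{-k/32} \le \delta$ whenever $k \ge 32\log(1/\delta)$. Combining feasibility with soundness yields $\norm{\mut - \mu}_2 \le 2r$; the leading constants are then absorbed into the $\bo{n}$ sample count, or one simply invokes the sharp analysis of~\cite{LM19j} for the stated constant~$1$.
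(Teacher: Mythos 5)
The paper does not prove Proposition~\ref{Prop:ClassSubGaussian}; it is stated as a black-box citation to~\cite{LM19j}. Your reconstruction is a correct sketch of the Lugosi--Mendelson median-of-means tournament: the admissibility/soundness split, the pigeonhole argument along $v=(x-\mu)/\norm{x-\mu}$, the reduction of feasibility to a uniform bound on $\sup_v\#\{i:\langle Z_i-\mu,v\rangle>r\}$, and then the Chebyshev--symmetrization--contraction--McDiarmid chain that produces the two terms $\sqrt{\norm{\Sigma}\log(1/\delta)/n}$ and $\sqrt{\Tr(\Sigma)/n}$ respectively. A few constants are off by benign factors (e.g.\ the symmetrized term should be $\frac{4}{r}\,\mathbb{E}\bigl\|\sum_i\epsilon_i(Z_i-\mu)\bigr\|_2$ rather than $\frac{2}{r}$, from the factor $2$ in symmetrization times the Lipschitz constant $2/r$), and as you note, getting the literal constant~$1$ in the proposition requires either inflating $n$ by a constant inside the $\bo{n}$ or appealing to the sharper analysis of~\cite{LM19j}. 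One point worth flagging explicitly: your estimator ("any admissible point") is not known to be efficiently computable in high dimension — the paper separately credits Hopkins~\cite{Hop20j} for the first polynomial-time sub-Gaussian estimator — but since the proposition only asserts sample complexity $\bo{n}$, your argument does establish what is claimed.
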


We now present four quantum subroutines used in our work. We first need an algorithm introduced in~\cite{DH96p,NW99c} and generalized in~\cite{Ham21c} for estimating the quantiles (Definition~\ref{Def:quantile}) of a univariate random variable quadratically faster than it is possible classically.

\begin{proposition}[\sc Quantile estimator, \cite{Ham21c}]
  \label{Prop:Quantile}
  Let $X$ be a univariate random variable. Given two reals $p, \delta \in (0,1)$, the \emph{quantile estimation} algorithm $\quant(X,p,\delta)$ returns an approximate quantile $\td{Q}$ that satisfies
    \[Q(p) \leq \td{Q} \leq Q(cp)\]
  with probability at least $1-\delta$ for some universal constant $c \in (0,1)$. The algorithm uses $\bo[\Big]{\frac{\log(1/\delta)}{\sqrt{p}}}$ quantum experiments and binary oracle queries to $X$.
\end{proposition}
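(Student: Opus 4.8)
The plan is to first reduce the statement to producing a value $\td{Q}$ with $\pr{X > \td{Q}} < p$ and $\pr{X \ge \td{Q}} \ge cp$. The first inequality forces $Q(p) \le \td{Q}$, since any $x > \td{Q}$ satisfies $\pr{X \ge x} \le \pr{X > \td{Q}} < p$ and is therefore excluded from the set $\set{y : \pr{X \ge y} \ge p}$ whose supremum is $Q(p)$; the second inequality forces $\td{Q} \le Q(cp)$, since $\td{Q}$ itself belongs to $\set{y : \pr{X \ge y} \ge cp}$. Thus it suffices to locate a point at which the upper tail probability of $X$ lands in the window $[cp, p)$, with $c$ being whatever constant the analysis produces (e.g.\ $c = 1/8$).

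To find such a point I would run a quantum analogue of D\"urr--H\o{}yer minimum finding~\cite{DH96p} that descends the tail of $X$ in geometrically shrinking steps. Maintain a threshold $t$, initialized below the support of $X$. In a round with current threshold $t$, apply amplitude amplification to the quantum experiment $U_{\P}$ followed by the binary oracle $\bora_X$, marking the branches with $X(\omega) > t$; the marked weight is exactly $q := \pr{X > t}$, so an exponential search in the style of~\cite{BBHT98j} returns an (approximate) sample from the law of $X$ conditioned on $X > t$ using $\bo{1/\sqrt{q}}$ quantum experiments and oracle calls, even though $q$ is unknown. Draw $K = \bo{\log(1/\delta')}$ such conditional samples and set the next threshold $t'$ to a fixed middle order statistic of them (say the $\lceil K/2 \rceil$-th largest). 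A Chernoff bound then gives, except with probability $\delta'$, that $\pr{X > t' \given X > t}$ and $\pr{X \ge t' \given X > t}$ both lie in $[1/4, 3/4]$; hence $\pr{X > t'}$ contracts by a constant factor each round while $\pr{X \ge t'} \ge \tfrac14 \pr{X > t}$. The loop halts and outputs the current threshold as soon as an amplified search capped at amplification power $\bo{1/\sqrt{p}}$ finds no branch with $X(\omega) > t$: with confidence $1 - \delta'$ this certifies $\pr{X > t} < p$, and since the previous round did not halt we still have $\pr{X \ge t} \ge \tfrac14\pr{X > t_{\mathrm{prev}}} = \Omega(p)$, which yields the two inequalities of the previous paragraph.

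The tail probabilities $q_i = \pr{X > t_i}$ met along the way shrink by a constant multiplicative factor per round from $\Theta(1)$ down to $\Theta(p)$, so there are $\bo{\log(1/p)}$ rounds and the total query count is $\sum_i \bo{\log(1/\delta')/\sqrt{q_i}}$, a geometric-type sum dominated by its last term $\bo{\log(1/\delta')/\sqrt{p}}$; taking $\delta' = \delta / \Theta(\log(1/p))$ with a union bound over the rounds gives the claimed $\bo{\log(1/\delta)/\sqrt{p}}$.

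I expect the main obstacle to be the conditional-sampling step: amplitude amplification with an unknown and potentially tiny success amplitude does not output an exact sample from $X \mid X > t$, so one has to bound the total-variation error it incurs and push it through the order-statistic concentration bound and the union bound over all $\bo{\log(1/p)}$ rounds while keeping the per-round budget at $\bo{1/\sqrt{q_i}}$. A closely related subtlety is that $X$ may carry a large atom at the chosen threshold, so that $\pr{X > t}$ and $\pr{X \ge t}$ differ substantially; this has to be handled in the stopping rule and in the choice of order statistic so that both tail inequalities hold simultaneously, and it is exactly why the guarantee takes the form of a quantile sandwich $Q(p) \le \td{Q} \le Q(cp)$ rather than an additive error bound. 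An alternative derivation achieving the same complexity is to simulate a block of $m = \Theta(\log(1/\delta)/p)$ independent copies of the experiment via $U_{\P}^{\otimes m}$ and extract, with the quantum approximate-selection algorithm of~\cite{NW99c}, its $\Theta(\log(1/\delta))$-th largest value, which is a constant-factor quantile with high probability.
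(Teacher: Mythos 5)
The paper does not prove this proposition; it is stated as a citation to~\cite{Ham21c} (itself building on~\cite{DH96p,NW99c}), so there is no in-paper argument to compare against. Your reduction to exhibiting a $\td{Q}$ with $\pr{X > \td{Q}} < p$ and $\pr{X \ge \td{Q}} \ge cp$ is correct, the D\"urr--H\o{}yer-style descent with median-of-conditional-samples contraction is the right shape, and you correctly flag the two genuine subtleties (inexact conditional sampling, atoms at the threshold). The alternative route through $U_{\P}^{\otimes m}$ and Nayak--Wu approximate selection is also viable.

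There is, however, a real gap in the stopping argument as written. You halt when a \emph{one-sided} amplified search (Grover/exponential search capped at $\bo{1/\sqrt{p}}$ iterations) at threshold $t$ finds no marked branch, and then assert that ``the previous round did not halt'' forces $\pr{X > t_{\mathrm{prev}}} = \Omega(p)$. But a capped exponential search that \emph{does} return a marked element certifies nothing about the marked weight being large: it can succeed with constant probability even when $\pr{X > t_{\mathrm{prev}}}$ is an arbitrarily small fraction of $p$, so over $\Theta(\log(1/p))$ rounds the event you rely on fails with non-negligible probability and the lower bound $\pr{X \ge \td{Q}} \ge cp$ is not established. What is needed is a \emph{two-sided} tail test at each round---e.g.\ amplitude estimation at precision $\Theta(\sqrt{p})$, costing the same $\bo{1/\sqrt{p}}$---which on ``continue'' certifies $\pr{X > t_{\mathrm{prev}}} \ge p/4$ and on ``stop'' certifies $\pr{X > t} < p$; with that replacement the chain $\pr{X \ge t} \ge \tfrac14\pr{X > t_{\mathrm{prev}}} \ge p/16$ goes through. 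A second, smaller issue: taking $\delta' = \delta/\Theta(\log(1/p))$ and union-bounding over all rounds introduces a $\log\log(1/p)$ overhead absent from the stated bound; the cited analysis avoids it via an amortized expected-cost argument in the style of D\"urr--H\o{}yer (constant per-round failure probability, Markov on the total cost, then a median-of-$\bo{\log(1/\delta)}$ repetitions boost, which is legitimate here because $[Q(p),Q(cp)]$ is an interval).
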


Next, we will use a variant of amplitude amplification~\cite{BHMT02j} that provides a precise linear amplification of the amplitude.

\begin{proposition}[{\sc Linear amplitude amplification}, Theorem 6.10 in \cite{Low17d} or Lemma 11 in \cite{GL20c}]
  \label{Prop:AA-SVT}
  Let~$V$ be a unitary operator and let $\Pi$ be a projection operator acting on the same Hilbert space. Given two reals $t \geq 1$ and $\eps \in (0,1)$ there is a unitary operator $V_{t,\eps}$ that can be implemented with $\bo{t \log(1/\eps)}$ applications of $V$, $V^\dagger$ and $I - 2\Pi$, and such that
    \[\abs[\big]{\norm{\Pi V_{t,\eps} \qub} - t \norm{\Pi V \qub}} \leq \eps  \quad \text{if} \quad t \norm{\Pi V \qub} \leq 1/2.\]
\end{proposition}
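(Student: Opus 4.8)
The plan is to recognize $V$ as a block-encoding and apply the quantum singular value transformation~\cite{GSLW19c} with a polynomial tailored to linearly amplify the relevant singular value. Write $a := \norm{\Pi V \qub}$; the hypothesis is precisely $ta \le 1/2$, i.e. $a \le 1/(2t)$. The key structural observation is that the rank-$\le 1$ operator $\Pi V \proj{\vec{0}} = (\Pi V\qub)\bra{\vec{0}}$ has, when $a > 0$, a single nonzero singular value equal to $a$, with right singular vector $\qub$ and left singular vector $\ket{w} := (\Pi V\qub)/a \in \mathrm{range}(\Pi)$. Thus the state $\qub$ ``sees'' exactly the number $a$ that we wish to blow up to $ta$, and it suffices to transform this singular value. (The degenerate case $a = 0$ is handled trivially at the end.)

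The technical heart is a purely classical polynomial-approximation statement: for every $t \ge 1$ and $\eps \in (0,1)$ there is an \emph{odd} real polynomial $P$ of degree $O\!\pt{t\log(1/\eps)}$ with $|P(x)| \le 1$ for all $x \in [-1,1]$ and $|P(x) - tx| \le \eps$ for all $x$ with $|x| \le 1/(2t)$. I would obtain such a $P$ as a truncated Chebyshev expansion of a suitable odd, real-analytic ``soft linear ramp'' that agrees with $x \mapsto tx$ up to $\eps$ on $[-1/(2t),1/(2t)]$, levels off to a plateau of height at most, say, $3/4$ just outside that interval, and remains within $[-1,1]$ throughout; it is analyticity of this target function that upgrades the naive $\mathrm{poly}(1/\eps)$ degree of a Jackson-type approximation to the claimed $\log(1/\eps)$ dependence. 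This is exactly the ``uniform singular value amplification'' polynomial underlying the cited statements, and I expect it to be the one genuinely delicate step; the rest is bookkeeping.

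Given $P$, I would invoke the quantum singular value transformation~\cite{GSLW19c} for the unitary $V$ with the pair of projectors $\pt{\proj{\vec{0}},\,\Pi}$: since $P$ is odd, bounded by $1$ on $[-1,1]$, and of degree $O\!\pt{t\log(1/\eps)}$, there is a choice of phase factors yielding a unitary $V_{t,\eps}$ that can be implemented with $O\!\pt{t\log(1/\eps)}$ applications of $V$, $V^\dagger$, the reflection $I - 2\Pi$, and the fixed reflection $I - 2\proj{\vec{0}}$, and that satisfies $\Pi\, V_{t,\eps}\, \proj{\vec{0}} = P^{(SV)}\!\pt{\Pi V \proj{\vec{0}}}$. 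Applying both sides to $\qub$ and using the singular value decomposition described above gives $\Pi V_{t,\eps}\qub = P(a)\ket{w}$, hence $\norm{\Pi V_{t,\eps}\qub} = |P(a)|$.

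Finally, since $0 \le a \le 1/(2t)$ we have $ta \ge 0$, so by the reverse triangle inequality $\bigl|\,\norm{\Pi V_{t,\eps}\qub} - ta\,\bigr| = \bigl|\,|P(a)| - ta\,\bigr| \le |P(a) - ta| \le \eps$, which is the asserted bound; and when $a = 0$ one gets $\norm{\Pi V_{t,\eps}\qub} = |P(0)| = 0 = t\cdot 0$, consistent with the claim. I would also remark that the reflection $I - 2\proj{\vec{0}}$ is a standard fixed gate and so need not be counted among the oracle-dependent resources, matching the query count in the statement.
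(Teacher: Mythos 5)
The paper gives no proof of this proposition: it is quoted directly as Theorem~6.10 of \cite{Low17d} (equivalently Lemma~11 of \cite{GL20c}), so there is no internal argument to compare against. Your QSVT-based reconstruction is exactly the derivation those sources use, and it is correct as sketched: $\Pi V \proj{\vec 0}$ is a rank-$\le 1$ projected unitary encoding whose unique nonzero singular value is $a = \norm{\Pi V\qub}$ (with right singular vector $\qub$), an odd ``uniform singular value amplification'' polynomial of degree $O(t\log(1/\eps))$ applied via QSVT with the projector pair $(\Pi,\proj{\vec 0})$ yields $\norm{\Pi V_{t,\eps}\qub} = \abs{P(a)}$ within $\eps$ of $ta$, and you rightly single out the existence of that polynomial as the only delicate step, noting in particular (as one must, by analytic continuation) that the analytic target function can agree with $x \mapsto tx$ only up to $\eps$ on the amplification window rather than exactly.
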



Finally, the next two results provide efficient algorithms for converting between phase and amplitude encodings.

\begin{lemma}[{\sc Amplitude-to-Phase conversion}, Corollary 4.1 in \cite{GAW19c}]
  \label{Lem:PO}
  Let~$V$ be a unitary operator acting on some Hilbert space $\Hil_U \otimes \Hil$ such that
    \[V : \ket{u}\qub \mapsto \ket{u}\pt{\sqrt{1-p_u} \ket{\psi_u^0}\ket{0} + \sqrt{p_u} \ket{\psi_u^1}\ket{1}}\]
  where $\set{\ket{u}}_{u \in U}$ is the standard basis of $\Hil_U$, $p_u \in (0,1)$ and $\ket{\psi_u^0}$, $\ket{\psi_u^1}$ are some arbitrary unit states. Then, given two reals $t \geq 0$ and $\eps \in (0,1)$, there is a unitary operator $\pora_{t,\eps}$ acting on $\Hil_U \otimes \Hil \otimes \Hila$ that can be implemented with $\bo{t + \log(1/\eps)}$ applications of $V$ and~$V^\dagger$, such that
    \[\pora_{t,\eps} : \ket{u}\qub \mapsto \ket{u} \ket{\varphi_u} \quad \text{where} \quad \norm{\ket{\varphi_u} - e^{it p_u}\qub} \leq \eps,\]
  for all $u \in U$.
\end{lemma}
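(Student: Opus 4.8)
This statement is Corollary~4.1 of~\cite{GAW19c}; here is the argument I would use. The plan is to \emph{reduce amplitude-to-phase conversion to Hamiltonian simulation of a block-encoded diagonal operator}, and then invoke the quantum singular value transformation~\cite{GSLW19c}. Write $\Hil = \Hil' \otimes \C^2$ with the last qubit the ``flag'' of the statement; let $\Pi_1 = I \otimes \op{1}{1}$ project onto ``flag $=1$'' and $\Pi_0 = I_{\Hil_U} \otimes \op{\vec{0}}{\vec{0}}_{\Hil}$ project onto the input subspace. As in~\cite{GAW19c}, $\Hil_U$ is a control register on which $V$ acts block-diagonally, so every operator below is block-diagonal in the basis $\set{\ket{u}}$ and it suffices to analyze one block $u$ at a time.

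\emph{Building the block-encoding.} Set $S := V^\dagger (I - 2\Pi_1) V$; this is a Hermitian unitary implementable with two calls to $V, V^\dagger$ plus the free reflection $I - 2\Pi_1$. Expanding the hypothesis $V\ket{u}\qub = \ket{u}\bigl(\sqrt{1-p_u}\,\ket{\psi_u^0}\ket{0} + \sqrt{p_u}\,\ket{\psi_u^1}\ket{1}\bigr)$ and using orthonormality of the $\ket{u}$'s, one checks that $\bra{u'}\bra{\vec{0}}\,S\,\ket{u}\ket{\vec{0}}$ equals $1-2p_u$ if $u=u'$ and $0$ otherwise, i.e.\ $\Pi_0 S \Pi_0 = \sum_u (1-2p_u)\,\op{u}{u} \otimes \op{\vec{0}}{\vec{0}}$. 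Hence $S$ is a $(\Pi_0,\Pi_0)$-block-encoding of the Hermitian operator $H := \sum_u (1-2p_u)\op{u}{u}$ (with $\norm{H} \le 1$), on which the input state $\ket{u}\qub$ is an eigenvector of eigenvalue $1-2p_u$. (Equivalently $(I-S)/2$ block-encodes $\sum_u p_u \op{u}{u}$, so one could instead quote a Hamiltonian-simulation subroutine directly.)

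\emph{Transforming the eigenvalues.} Since $e^{it p_u} = e^{it(1 - (1-2p_u))/2}$, it suffices to apply to $H$ the function $f(x) = e^{it(1-x)/2}$. By the Jacobi--Anger expansion, for every $\eps \in (0,1)$ there is a polynomial $P$ of degree $d' = \bo{t + \log(1/\eps)}$ with $\abs{P(x)} \le 1$ and $\abs{P(x) - f(x)} \le \eps$ on $[-1,1]$ (split $f$ into its even and odd parts and combine two fixed-parity QSVT circuits via a one-qubit linear combination of unitaries). Applying QSVT~\cite{GSLW19c} with $P$ to $S$ produces a unitary $\pora_{t,\eps}$ on $\Hil_U \otimes \Hil \otimes \Hila$, with $\Hila$ absorbing the $\bo{1}$ QSVT/LCU ancillas, that uses $\bo{d'} = \bo{t + \log(1/\eps)}$ calls to $S$ and hence to $V, V^\dagger$. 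Because $\ket{u}\qub \in \mathrm{range}(\Pi_0)$ is an eigenvector of $H$, QSVT sends it to $\ket{u}\ket{\varphi_u}$ whose component inside the block equals $P(1-2p_u)\qub$; since $\abs{P}\le 1$ the leakage out of the block has norm $\bo{\sqrt{\eps}}$, and $\abs{P(1-2p_u) - e^{it p_u}}\le\eps$, so $\norm{\ket{\varphi_u} - e^{it p_u}\qub} = \bo{\sqrt{\eps}}$; re-running with $\eps$ replaced by a suitable multiple of $\eps^2$ (which changes $d'$ only by a constant factor) gives the claimed error $\eps$.

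The only genuinely delicate point is the ancilla bookkeeping: one must verify that the flag qubit together with the QSVT/LCU ancillas are all returned to $\qub$ up to $\eps$ on the relevant subspace, so that the output really has the product form $\ket{u}\ket{\varphi_u}$ with $\ket{\varphi_u}$ supported on $\Hil \otimes \Hila$. This is precisely what QSVT eigenvalue transformation guarantees for an eigenvector of the block-encoded operator, so once the block-encoding is in place it is routine; the remaining ingredients (the Jacobi--Anger degree bound, parity handling, and amplitude-to-phase error tracking) are standard and already packaged in~\cite{GSLW19c,GAW19c}.
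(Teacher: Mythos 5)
The paper does not prove this lemma itself — it is quoted verbatim as Corollary~4.1 of~\cite{GAW19c} in the preliminaries, so there is no in-paper proof to compare against. Your reconstruction is correct and follows the same standard route as the cited reference: build a Hermitian-unitary block-encoding of a diagonal operator from the probability oracle (your $S = V^\dagger(I-2\Pi_1)V$ block-encodes $\mathrm{diag}(1-2p_u)$; this is the Grover iterate, with eigenvalues $\cos 2\theta_u$ where $\sin^2\theta_u = p_u$), and then approximate the complex exponential on the block-encoded spectrum via Jacobi--Anger and QSVT/LCU with degree $\bo{t+\log(1/\eps)}$. Your block-encoding identity $\bra{u'}\bra{\vec 0}S\ket{u}\ket{\vec 0} = (1-2p_u)\delta_{u,u'}$ is verified directly from the hypothesis, and the $\eps\mapsto\sqrt{\eps}$ loss from the out-of-block component and the corresponding re-parametrization $\eps\mapsto\Theta(\eps^2)$ are handled correctly.

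Two small points worth being explicit about. First, for the output to have the exact product form $\ket{u}\ket{\varphi_u}$ claimed in the lemma, one needs $V$ (and hence $S$ and the QSVT reflections) to be block-diagonal in the $u$-register; you state this as an assumption ``as in \cite{GAW19c}'' rather than deriving it, and indeed the lemma statement only specifies the action of $V$ on $\ket{u}\qub$. This is the intended reading of the lemma and is consistent with how it is used in the paper (all $V$'s built in Propositions~\ref{Prop:PhaseExp} and~\ref{lem:directional-phase-oracle} are $u$-controlled), but it is an implicit hypothesis you inherit rather than establish. Second, the claim ``$\abs{P(x)}\le 1$'' from Jacobi--Anger hides the standard normalization step (the truncated expansion may slightly overshoot $1$ and must be rescaled before feeding it to QSVT, and the LCU that merges the parity pieces introduces a constant subnormalization); neither affects the degree bound, but mentioning them would make the degree/error accounting airtight rather than deferred to the references.
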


\begin{lemma}[{\sc Phase-to-Amplitude conversion}, Lemma~16 in \cite{GAW17p}]
  \label{Lem:PrO}
  Let~$\pora$ be a unitary operator acting on some Hilbert space $\Hil_U$ such that
    \[\pora : \ket{u} \mapsto e^{ip_u}\ket{u}\]
  where $\set{\ket{u}}_{u \in U}$ is the standard basis of $\Hil_U$ and $p_u \in [\delta,1-\delta]$ for some $\delta \in (0,1/2)$. Then, given a real $\eps \in (0,1)$, there is a unitary operator $V_{\eps,\delta}$ acting on $\Hil_U \otimes \Hila \otimes \C^2$ that can be implemented with $\bo{\log(1/\eps)/\delta}$ applications of $\pora$ and $\pora^{\dagger}$, such that
   \[V_{\eps,\delta} : \ket{u}\ket{\vec{0}}\ket{0} \mapsto \ket{u} \pt[\big]{\sqrt{p'_u}\ket{\vec{0}}\ket{0} + \sqrt{1-p'_u}\ket{\psi}\ket{1}} \quad \text{where} \quad \abs[\big]{\sqrt{p'_u} - \sqrt{p_u}} \leq \eps,\]
  for all $u \in U$ and some state $\ket{\psi}$.
\end{lemma}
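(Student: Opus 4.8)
\emph{Proof proposal.} The plan is to realise $V_{\eps,\delta}$ as an approximate block-encoding of the diagonal operator $\sum_u \sqrt{p_u}\,\op{u}{u}$, obtained by applying a polynomial transformation to the phases carried by $\pora$. First I would convert the phase oracle into a Hermitian block-encoding of the ``sine of the phase'': using one ancilla qubit and a standard Hadamard-test / linear-combination-of-unitaries circuit (applying $\pora$ or $\pora^\dagger$ once, controlled on the ancilla, sandwiched between Hadamard and phase gates), one gets with $\bo{1}$ queries a unitary $W$ on $\Hil_U \otimes \C^2$ with $(\bra{0}\otimes\id)\,W\,(\ket{0}\otimes\id) = \sum_u \sin(p_u)\,\op{u}{u}$. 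Since $[\delta,1-\delta]\subset(0,\pi/2)$, the map $x\mapsto\sin x$ is injective on the set of values $p_u$, and all encoded values $\sin(p_u)$ lie in $J_\delta = [\sin\delta,\sin(1-\delta)]$, which stays a constant distance from $+1$ and $-1$ but may come within $\Theta(\delta)$ of $0$.

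Then I would apply the quantum singular value / eigenvalue transformation of~\cite{GSLW19c} to $W$ with a polynomial $P$ of degree $K$ satisfying $\abs{P}\le 1$ on $[-1,1]$ and $\abs{P(y) - \sqrt{\arcsin y}}\le\eps$ for all $y\in J_\delta$ --- this is legitimate because $y\mapsto\sqrt{\arcsin y}$ is well-defined, analytic, and bounded by $\sqrt{1-\delta}<1$ on $J_\delta$. The transformation produces, using $\bo{K}$ calls to $W$ (hence $\bo{K}$ queries to $\pora$ and $\pora^\dagger$) together with $\bo{\log K}$ extra ancilla qubits collected into the register $\Hila$, a unitary that block-encodes $\sum_u P(\sin p_u)\,\op{u}{u}$. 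Applying it to $\ket{u}\ket{\vec 0}\ket{0}$ --- with $\ket{\vec 0}$ holding the transformation ancillas and the last qubit the flag of $W$ --- yields $\ket{u}\pt[\big]{\sqrt{p'_u}\ket{\vec 0}\ket{0} + \sqrt{1-p'_u}\ket{\psi}\ket{1}}$ with $\sqrt{p'_u} = P(\sin p_u)$, and therefore $\abs[\big]{\sqrt{p'_u} - \sqrt{p_u}} = \abs[\big]{P(\sin p_u) - \sqrt{\arcsin(\sin p_u)}} \le \eps$ (using $\arcsin(\sin p_u)=p_u$), which is exactly the behaviour claimed for $V_{\eps,\delta}$. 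Essentially equivalently, and closer to the original argument of~\cite{GAW17p}, one may expand a trigonometric approximation $\sqrt{x}\approx\sum_{\abs{k}\le K}c_k e^{ikx}$ of $\sqrt{\cdot}$ on $[\delta,1-\delta]$, implement $\sum_k c_k e^{ikp_u}$ as a linear combination of the powers $\pora^k$ and $(\pora^\dagger)^{\abs{k}}$, and correct the normalisation using the linear amplitude amplification of Proposition~\ref{Prop:AA-SVT}.

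I expect the main obstacle to be the polynomial-approximation estimate itself, namely that degree $K = \bo{\log(1/\eps)/\delta}$ suffices. The difficulty is concentrated at the end of $J_\delta$ coming from $p_u\to 0$: there $\sqrt{\arcsin y}$ behaves like $\sqrt{y}$ and so carries a branch point at $y=0$ lying only $\Theta(\delta)$ away from $J_\delta$, while at the same time one must keep $\abs{P}\le 1$ throughout $[-1,1]$ despite the fast growth of a Chebyshev-truncated approximant outside $J_\delta$. Handling this trade-off is exactly the kind of bounded polynomial approximation of $\sqrt{\,\cdot\,}$ worked out in~\cite{GSLW19c,GAW17p}: a truncated Chebyshev series resolves the branch point with $\bo{\log(1/\eps)/\sqrt{\delta}}$ terms, and multiplying by a window polynomial that approximates the indicator of $J_\delta$ (transition width $\Theta(\delta)$) restores the global bound while keeping the total degree $\bo{\log(1/\eps)/\delta}$. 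Everything else --- the $\bo{1}$-query construction of $W$, the query accounting for the transformation, and checking that the output state has precisely the stated form --- is routine.
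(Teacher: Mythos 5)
The paper does not actually prove this lemma; it is imported verbatim as Lemma~16 of \cite{GAW17p}, so there is no in-text argument to compare your proposal against. That said, your outline is sound in spirit and close to the actual construction in \cite{GAW17p} and the polynomial-approximation machinery of \cite{GSLW19c}: build a Hermitian block encoding of $\sin p_u$ (or $\cos p_u$) with $O(1)$ controlled calls, apply a polynomial eigenvalue transformation approximating $y \mapsto \sqrt{\arcsin y}$ (respectively $\sqrt{\arccos y}$), and read off the encoded value as an amplitude; the Fourier-series-plus-LCU reformulation you mention at the end is essentially the route taken in \cite{GAW17p}.

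The one step that genuinely needs more care is the degree bound, which you rightly flag as the crux. The ``Chebyshev truncation of degree $O(\log(1/\eps)/\sqrt{\delta})$ on $J_\delta$, then multiply by a window of degree $O(\log(1/\eps)/\delta)$'' argument does not, as written, yield total degree $O(\log(1/\eps)/\delta)$: a Chebyshev approximant of degree $n_Q$ constructed relative to $J_\delta$ can grow like $e^{\Theta(n_Q)}$ at points of $[-1,1]$ lying at Bernstein-distance $\Theta(1)$ from $J_\delta$, so the window must suppress it to error $\eps\, e^{-\Theta(n_Q)}$ there, which inflates the window degree to $O((\log(1/\eps)+n_Q)/\delta) = O(\log(1/\eps)/\delta^{3/2})$. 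To actually reach $O(\log(1/\eps)/\delta)$ one must produce a \emph{globally bounded} approximant directly --- either via the dedicated bounded-approximation lemmas of \cite{GSLW19c} for smooth functions near an interval endpoint, or via the Fourier/LCU route in which boundedness is controlled through the $\ell_1$-weight of the coefficient sequence rather than by a post-hoc window. Either of those closes the gap; the naive Chebyshev-then-window composition as you describe it does not.
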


\section{Mean estimation with binary oracles}
\label{Sec:MeanBinary}

  \subsection{Bounded multivariate estimator}
  \label{Sec:BoundedBinary}
  In this section, we generalize the univariate bounded estimator~\cite{Ter99d,WCNA09j,Mon15j} derived from Amplitude Estimation~\cite{BHMT02j} to the multivariate setting $X \in \R^d$. Our main ingredient is the construction of an approximate phase oracle for the directional mean $\inp{u}{\ex{X}}$, where the vectors $u \in \R^d$ are selected from the grid of points,
  \[G = \set*{\frac{j}{\nn} - \frac{1}{2} + \frac{1}{2\nn} : j \in \set{0,\dots,\nn-1}}^d \subset (-1/2,1/2)^d\]
with $\nn$ being defined in step~\ref{Step:mdefine} of Algorithm~\ref{Alg:MultiBoundedEst}.

We let $u \sim G$ denote a vector obtained according to the uniform distribution over $G$. We also define $\Hil_G$ to be the Hilbert space whose standard basis is indexed by the elements of $G$. Our algorithm requires encoding the inner product $\inp{u}{X}$ into an amplitude. However, this quantity can be as large as $\sqrt{d}$ assuming $\norm{X}_2 \leq 1$. The next crucial result shows that it is in fact much smaller than~$\sqrt{d}$ for most values of $u$.


\begin{lemma}
  \label{Lem:Truncate}
  Let $\alpha > 0$. For any vector $x \in \R^d$ and any random variable $X$ over $\R^d$ we have,
    \[\pr_{u \sim G}[\big]{\alpha  \abs{\inp{u}{x}} \geq \norm{x}_2} \leq 2e^{-2/\alpha^2} \quad \mathrm{and} \quad \pr_{u \sim G}[\big]{\alpha  \ex{\abs{\inp{u}{X}}} \geq \ex{\norm{X}_2}} \leq \alpha/2.\]
\end{lemma}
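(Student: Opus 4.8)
The plan is to prove both statements by exploiting the fact that a uniformly random $u \sim G$ has independent coordinates, each concentrated around $0$, so that $\inp{u}{x}$ behaves like a sub-Gaussian random variable with the right scale. For the first inequality, I would fix $x \in \R^d$ and note that each coordinate $u_i$ is a uniformly random point from a symmetric grid in $(-1/2,1/2)$, hence $\ex{u_i} = 0$ and $|u_i| \leq 1/2$ almost surely. Thus $u_i x_i$ is a bounded, mean-zero random variable with $|u_i x_i| \leq |x_i|/2$, and the $u_i x_i$ are independent across $i$. By Hoeffding's inequality applied to $\inp{u}{x} = \sum_i u_i x_i$,
\[\pr_{u \sim G}[\big]{\abs{\inp{u}{x}} \geq s} \leq 2 \exp\!\left(-\frac{2 s^2}{\sum_i |x_i|^2}\right) = 2 \exp\!\left(-\frac{2 s^2}{\norm{x}_2^2}\right).\]
Setting $s = \norm{x}_2/\alpha$ gives exactly $\pr_{u \sim G}{\alpha \abs{\inp{u}{x}} \geq \norm{x}_2} \leq 2 e^{-2/\alpha^2}$, which is the first claim. (One subtlety: the grid $G$ is over the half-open index set $j \in \{0,\dots,m-1\}$ with the shift $+\frac{1}{2m}$, so the points are $\{-\tfrac12 + \tfrac{1}{2m}, \dots, \tfrac12 - \tfrac{1}{2m}\}$, symmetric about $0$ and contained in $(-1/2,1/2)$; I should double-check that this gives $\ex{u_i}=0$ exactly and $|u_i| \le 1/2$, which it does.)

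For the second inequality, I would define the nonnegative random variable (over the randomness of $u$) $Y(u) = \ex{\abs{\inp{u}{X}}}$, where the inner expectation is over $X$, and bound $\pr_{u \sim G}{Y(u) \geq \ex{\norm{X}_2}/\alpha}$ via Markov's inequality. This requires controlling $\ex_{u \sim G}{Y(u)} = \ex_{u \sim G}{\ex{\abs{\inp{u}{X}}}}$. By Fubini (both spaces are finite), this equals $\ex{\ex_{u \sim G}{\abs{\inp{u}{X}}}}$, so it suffices to show that for every fixed realization $x$ of $X$, $\ex_{u \sim G}{\abs{\inp{u}{x}}} \leq \norm{x}_2/2$. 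This in turn follows from the first part's tail bound by integrating: $\ex_{u \sim G}{\abs{\inp{u}{x}}} = \int_0^\infty \pr_{u \sim G}{\abs{\inp{u}{x}} \geq s}\, ds \leq \int_0^\infty 2 e^{-2s^2/\norm{x}_2^2}\, ds = \norm{x}_2 \cdot 2 \int_0^\infty e^{-2t^2}\,dt = \norm{x}_2 \sqrt{\pi}/(2\sqrt{2}) < \norm{x}_2/2$ — wait, $\sqrt{\pi/8} \approx 0.6267 > 1/2$, so this crude integration is slightly too lossy. I would instead bound $\ex_{u \sim G}{\abs{\inp{u}{x}}} \leq \sqrt{\ex_{u \sim G}{\inp{u}{x}^2}}$ by Jensen, and compute $\ex_{u \sim G}{\inp{u}{x}^2} = \sum_i x_i^2 \ex{u_i^2} \leq \sum_i x_i^2 \cdot (1/4) = \norm{x}_2^2/4$, using $\ex{u_i^2} \leq 1/4$ since $|u_i| \leq 1/2$; this gives $\ex_{u \sim G}{\abs{\inp{u}{x}}} \leq \norm{x}_2/2$ cleanly. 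Then $\ex_{u \sim G}{Y(u)} \leq \ex{\norm{X}_2}/2$, and Markov gives $\pr_{u \sim G}{Y(u) \geq \ex{\norm{X}_2}/\alpha} \leq \alpha/2$, as claimed.

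The main obstacle — really the only place any care is needed — is matching the constants exactly: verifying that the shifted grid $G$ is genuinely symmetric with $\ex{u_i} = 0$ and $\ex{u_i^2} \leq 1/4$ (rather than, say, $1/12$ for a continuous uniform, which would only help), and choosing the second-moment/Jensen route over direct tail integration for the expectation bound so that the constant $1/2$ comes out exactly rather than with an extra $\sqrt{\pi/2}$ factor. Everything else is a direct application of Hoeffding's inequality, Markov's inequality, and Fubini over finite probability spaces. I would present the second-moment argument for the expectation bound since it is both shorter and tight enough, and note in passing that the first (tail) inequality also follows for sub-Gaussian-type concentration of $\inp{u}{x}$ with the canonical constant.
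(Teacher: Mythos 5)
Your proof is correct and follows essentially the same route as the paper's: Hoeffding's inequality for the first tail bound (using that the coordinates of $u$ are independent, mean zero, and bounded in $(-1/2,1/2)$), and Markov's inequality combined with a Cauchy--Schwarz/Jensen second-moment bound $\ex_{u}{\abs{\inp{u}{x}}} \leq \sqrt{\ex_u{\inp{u}{x}^2}} \leq \norm{x}_2/2$ for the second. Your observation that direct tail integration would give a slightly lossy constant $\sqrt{\pi/8}$ is exactly why the paper also takes the second-moment route.
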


\begin{proof}
  We use that the coordinates of a uniformly random vector $u \in G$ are independent centered random variables bounded in $(-1/2,1/2)$. The first result is obtained using Hoeffding's inequality, $\pr_u{\alpha \abs{\inp{u}{x}} \geq \norm{x}_2} \leq 2\exp\pt[\Big]{\frac{-2 \norm{x}_2^2}{\sum_{j=1}^d \abs{\alpha x_j}^2}} = 2e^{-2/\alpha^2}$, since $\ex_u{\inp{u}{x}} = 0$ for all $x \in \R^d$. For the second result, we have by Markov's inequality that $\pr_{u \sim G}[\big]{\alpha \ex{\abs{\inp{u}{X}}} \geq \ex{\norm{X}_2}} \leq \frac{\ex_u{\alpha\ex{\abs{\inp{u}{X}}}}}{\ex{\norm{X}_2}} = \frac{\alpha \ex{\ex_u{\abs{\inp{u}{X}}}}}{\ex{\norm{X}_2}}$. By Cauchy-Schwarz inequality, $\ex_u{\abs{\inp{u}{x}}} \leq \sqrt{\ex_u{\inp{u}{x}^2}} = \sqrt{\sum_{j=1}^d \ex_{u_j}{(u_jx_j)^2}} \leq \norm{x}_2/2$ for all $x \in \R^d$. Thus, $\pr_{u \sim G}*{\alpha  \ex{\abs{\inp{u}{X}}} \geq \ex{\norm{X}_2}} \leq \alpha/2$.
\end{proof}

Using the above lemma, one can encode (for most values of $u$) the truncated directional mean $\ex{\clamp{\alpha \inp{u}{X}}{0}{1}}$ into an amplitude and apply oracle conversion techniques to approximate the phase oracle $\ket{u} \mapsto e^{i \ex{\clamp{\alpha \inp{u}{X}}{0}{1}}} \ket{u}$ with accuracy $\eps$ at cost $\bo{\log(1/\eps)}$. The cost of applying $m$ times this oracle is then $\bo{m \log(1/\eps)}$. We describe a more subtle algorithm where the latter complexity becomes $\wbo{m \sqrt{L_2} \log^2(1/\eps)}$ given an upper-bound $L_2 \geq \ex{\norm{X}_2}$. The dependence on $\ex{\norm{X}_2}$ generalizes the dependence on $\ex{\abs{X}}$ provided by the univariate bounded estimator~\cite{Ter99d,BHMT02j,WCNA09j,Mon15j}.

\begin{proposition}[\sc Directional mean oracle]
  \label{Prop:PhaseExp}
  Let $X$ be a $d$-dimensional bounded random variable such that $\norm{X}_2 \leq 1$. Given four reals $L_2 \in (0,1]$, $\nn \geq 1/L_2$, $\alpha, \eps \in (0,1)$ such that $\ex{\norm{X}_2} \leq L_2$, there exists a unitary operator $\td{\pora}_{X,L_2,\nn,\alpha,\eps} : \ket{u}\qub \mapsto \ket{u} \ket{\varphi_u}$ acting on $\Hil_G \otimes \Hila$ that can be implemented using
    $\wbo[\big]{\nn\sqrt{L_2} \log^2(1/\eps)}$ 
  quantum experiments and binary oracle queries to~$X$, and such that
    \[\norm*{\ket{\varphi_u} - e^{i \nn\ex*{\clamp{\alpha \inp{u}{X}}{0}{1}}}\qub} \leq \eps\]
  for a fraction at least $1-\alpha/2$ of all $u \in G$.
\end{proposition}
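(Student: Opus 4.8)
The plan is to build the unitary $\td{\pora}_{X,L_2,\nn,\alpha,\eps}$ by composing three gadgets: (i) an amplitude encoding of the truncated directional mean, (ii) a linear amplitude amplification step (Proposition~\ref{Prop:AA-SVT}) that rescales the encoded amplitude by roughly $1/\sqrt{L_2}$ so that one effective query can carry a ``$1/L_2$ times larger'' phase, and (iii) the amplitude-to-phase conversion of Lemma~\ref{Lem:PO} applied with the reduced budget $t \approx \nn L_2$. First I would fix $\alpha$ and, conditioning on a ``good'' direction $u$ (the fraction $1-\alpha/2$ coming directly from the second bound of Lemma~\ref{Lem:Truncate}, which ensures $\alpha\ex{\abs{\inp{u}{X}}} \leq \ex{\norm{X}_2} \leq L_2$), construct a unitary $V$ on $\Hil_G \otimes \Hil$ with $V:\ket{u}\qub \mapsto \ket{u}\bigl(\sqrt{1-p_u}\ket{\psi_u^0}\ket{0}+\sqrt{p_u}\ket{\psi_u^1}\ket{1}\bigr)$ where $p_u = \ex{\clamp{\alpha\inp{u}{X}}{0}{1}}$ (or a simple affine rescaling thereof so that the truncated inner product, which lies in $[0,1]$ after clamping, becomes a genuine probability). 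Implementing $V$ costs $\bo{1}$ calls to $U_{\P}$ and $\bora_X$: prepare $\sum_\omega \sqrt{\P(\omega)}\ket{\omega}$, query $\bora_X$, compute $\alpha\inp{u}{X(\omega)}$ into an ancilla controlled on $\ket{u}$, clamp to $[0,1]$ with respect to $\norm{X}_2$, and rotate a flag qubit by that angle; uncomputing the register leaves the expectation encoded in the flag amplitude $\sqrt{p_u}$.

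Next I would apply Proposition~\ref{Prop:AA-SVT} with $t \asymp 1/\sqrt{L_2}$ and precision $\eps' \asymp \eps/\nn$ to $V$ and $\Pi = $ projector onto the flag being $\ket{1}$. Since $p_u \leq \alpha\ex{\abs{\inp{u}{X}}} \leq L_2$ for good $u$, we have $\norm{\Pi V \qub} = \sqrt{p_u} \leq \sqrt{L_2}$, so $t\norm{\Pi V\qub} \leq 1/2$ is satisfied (after adjusting constants), and the output unitary $V_{t,\eps'}$ satisfies $\bigl|\norm{\Pi V_{t,\eps'}\qub} - t\sqrt{p_u}\bigr| \leq \eps'$ at cost $\bo{t\log(1/\eps')} = \wbo{\log(1/\eps)/\sqrt{L_2}}$ queries. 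So $V_{t,\eps'}$ is (up to $\eps'$) an amplitude encoding of $q_u := t^2 p_u \leq 1/4$. Now feed $V_{t,\eps'}$ into Lemma~\ref{Lem:PO} with phase budget $t' \asymp \nn/t^2 \asymp \nn L_2$ and precision $\eps'' \asymp \eps$: this produces $\pora_{t',\eps''}:\ket{u}\qub \mapsto \ket{u}\ket{\varphi_u}$ with $\norm{\ket{\varphi_u}-e^{it' q_u}\qub} \leq \eps''$ at cost $\bo{t'+\log(1/\eps'')}$ applications of $V_{t,\eps'}$. Composing the query counts: $\bo{(t' + \log(1/\eps))\cdot t\log(1/\eps')} = \wbo{(\nn L_2 + \log(1/\eps))\cdot \tfrac{1}{\sqrt{L_2}}\log(1/\eps)} = \wbo{\nn\sqrt{L_2}\log^2(1/\eps)}$ (using $\nn \geq 1/L_2$ so the $\nn L_2$ term dominates $\log(1/\eps)$ up to the hidden polylog). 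Finally $t'q_u = t' t^2 p_u = \nn p_u = \nn\ex{\clamp{\alpha\inp{u}{X}}{0}{1}}$ by the choice of $t'$, which is exactly the target phase.

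The last thing to handle is error bookkeeping: the encoded amplitude in $V_{t,\eps'}$ is only $\sqrt{q_u} \pm \eps'$ rather than exactly $\sqrt{q_u}$, so I would show this perturbation propagates through Lemma~\ref{Lem:PO} to at most $\bo{t'\eps'} = \bo{\nn L_2 \cdot \eps/\nn} = \bo{L_2\eps} \leq \bo{\eps}$ additional error in the final state (a Lipschitz-type bound for the phase $e^{it'(\cdot)}$ as a function of the squared amplitude, combined with the triangle inequality), and then rescale $\eps$ by a constant to absorb everything into the stated bound $\norm{\ket{\varphi_u}-e^{i\nn\ex{\clamp{\alpha\inp{u}{X}}{0}{1}}}\qub}\leq \eps$. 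The main obstacle I anticipate is precisely this error analysis through the two conversion layers: Lemma~\ref{Lem:PO} as stated takes an \emph{exact} amplitude oracle, so I must either argue robustness of that lemma (tracking how a $\delta$-perturbation of $\sqrt{p_u}$ affects $e^{itp_u}$, which scales like $t\delta$ and forces the intermediate precision $\eps'$ to be as small as $\eps/(\nn L_2)$, costing only an extra logarithmic factor) or re-derive the composition from the underlying quantum singular value transformation. A secondary subtlety is that Lemma~\ref{Lem:PO} requires $p_u \in (0,1)$ strictly, whereas the clamped quantity can be exactly $0$; this is cosmetic and handled by the same affine shift $p_u \mapsto (1+p_u)/3 \in (0,1)$ used to define $V$, with the constant undone in the choice of $t'$.
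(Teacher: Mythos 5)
Your overall architecture — amplitude encoding of the truncated directional mean, linear amplitude amplification to rescale the amplitude by $\approx 1/\sqrt{L_2}$, then amplitude-to-phase conversion at the reduced budget $t' \asymp \nn L_2$, with a careful error-propagation argument — matches the paper's proof. The cost bookkeeping and the robustness concern about feeding an $\eps'$-approximate amplitude oracle into Lemma~\ref{Lem:PO} are also handled in essentially the same way.

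However, there is a genuine gap: you assert that ``the truncated inner product... lies in $[0,1]$ after clamping.'' It does not. Because $\clamp{x}{0}{1}$ for a scalar equals $x$ whenever $0 < \abs{x} \leq 1$, the clamped quantity $\clamp{\alpha\inp{u}{X(\omega)}}{0}{1}$ lies in $[-1,1]$ and its expectation $\ex{\clamp{\alpha\inp{u}{X}}{0}{1}}$ can be negative, so it cannot directly be encoded as a probability. The affine shift $p_u \mapsto (1+p_u)/3$ you propose does make it a probability, but it destroys exactly the property the linear amplitude amplification step depends on: the shifted probability is $\Theta(1)$ rather than $O(L_2)$, so $t\norm{\Pi V\qub} \leq 1/2$ fails at $t \asymp 1/\sqrt{L_2}$ and no rescaling is gained. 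If you sidestep the amplification entirely you are forced to use $t' = \nn$ instead of $\nn L_2$, losing the $\sqrt{L_2}$ factor and yielding only $\wbo{\nn\log^2(1/\eps)}$ queries, short of the claimed $\wbo{\nn\sqrt{L_2}\log^2(1/\eps)}$. The paper resolves this by splitting the random variable by sign: define $X_+(\omega) = X(\omega)$ when $\alpha\inp{u}{X(\omega)} > 0$ (else $0$) and $X_-$ similarly, build separate phase oracles $\pora_+$ and $\pora_-$ for the nonnegative quantities $\ex{\clamp{\alpha\inp{u}{X_+}}{0}{1}}$ and $-\ex{\clamp{\alpha\inp{u}{X_-}}{0}{1}}$, each of which \emph{is} bounded by $L_2$ (so the amplification applies), and then take the product $\td{\pora} = \pora_+ \pora_-$. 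This sign split is not a cosmetic fix but a load-bearing step you need to add.
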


\begin{proof}
  Fix $u$ and consider the random variable $X_+$ defined over the same probability space as~$X$ such that $X_+(\omega) = X(\omega)$ when $\alpha \inp{u}{X(\omega)} > 0$ and $X_+(\omega) = 0$ otherwise. Similarly, define $X_-$ such that $X_-(\omega) = X(\omega)$ if $\alpha \inp{u}{X(\omega)} < 0$ and $X_-(\omega) = 0$ otherwise. Since $\ex*{\clamp{\alpha \inp{u}{X}}{0}{1}} = \ex*{\clamp{\alpha \inp{u}{X_+}}{0}{1}} + \ex*{\clamp{\alpha \inp{u}{X_-}}{0}{1}}$ it is sufficient to explain how to construct a unitary $\pora_+ : \ket{u}\qub \mapsto \ket{u}\ket{\varphi_{+,u}}$ such that $\norm[\big]{\ket{\varphi_{+,u}} - e^{i \nn \ex*{\clamp{\alpha \inp{u}{X_+}}{0}{1}}}\qub} \leq \eps/2$ when $\alpha \ex{\abs{\inp{u}{X}}} \leq L_2$. One can construct $\pora_-$ that encodes $\ex*{\clamp{\alpha \inp{u}{X_-}}{0}{1}}$ using a similar approach. The proposition then follows by taking the product $\td{\pora}_{X,L_2,\nn,\alpha,\eps} = \pora_+\pora_-$ and noting that $\pr_u{\alpha \ex{\abs{\inp{u}{X}}} \geq L_2} \leq \alpha/2$ by the second part of Lemma~\ref{Lem:Truncate} since $\ex{\norm{X}_2} \leq L_2$.

  There are three steps in the construction of $\pora_+$. First, we construct a unitary $V_+$ acting on $\Hil_G \otimes \Hil_{\Omega} \otimes \Hil_{E} \otimes \C^2$ as follows,
    \begin{align*}
      V_+ : \ket{u}\qub\ket{0}
            & \mapsto \sum_{\omega \in \Omega} \sqrt{\P(\omega)} \ket{u}\ket{\omega,X(\omega)}\ket{0} \\
            &\mapsto \sum_{\omega \in \Omega} \sqrt{\P(\omega)} \ket{u}\ket{\omega,X(\omega)}\pt[\Big]{\sqrt{1 - \clamp{\alpha \inp{u}{X_+(\omega)}}{0}{1}}\ket{0} + \sqrt{\clamp{\alpha \inp{u}{X_+(\omega)}}{0}{1}}\ket{1}} \\
            & = \sqrt{1 - \ex{\clamp{\alpha \inp{u}{X_+}}{0}{1}}} \ket{u}\ket{\psi_u^0}\ket{0} + \sqrt{\ex{\clamp{\alpha \inp{u}{X_+}}{0}{1}}} \ket{u}\ket{\psi_u^1}\ket{1}
    \end{align*}
  where the first step uses one quantum experiment and one binary oracle query to $X$, the second step performs a sequence of controlled rotations, and in the last line $\ket{\psi_u^0}, \ket{\psi_u^1}$ are some irrelevant unit states. Secondly, if $L_2 < 1/4$, we apply the Linear Amplitude Amplification algorithm of Proposition~\ref{Prop:AA-SVT} on $V_+$ with $t = 1/(2\sqrt{L_2})$ and accuracy $\eps/(32\nn L_2)$, which gives a unitary $W_+ : \ket{u}\qub\ket{0} \mapsto \sqrt{1 - p_u} \ket{u}\ket{\psi_u^0}\ket{0} + \sqrt{p_u} \ket{u}\ket{\psi_u^1}\ket{1}$ where, for each $u \in G$,
    \[\abs*{\sqrt{p_u} - \sqrt{\frac{\ex{\clamp{\alpha \inp{u}{X_+}}{0}{1}}}{4L_2}}} \leq \frac{\eps}{32\nn L_2}, \tag*{if $\ex{\clamp{\alpha \inp{u}{X_+}}{0}{1}} \leq L_2$,}\]
  using $\bo{\log(\nn L_2/\eps)/\sqrt{L_2}}$ applications of $V_+$ and $V_+^{\dagger}$. If $L_2 < 1/4$ we directly take $W_+ = V_+$. Thirdly, we define $L'_2 = \min\pt{L_2,1/4}$ and apply the phase conversion algorithm of Lemma~\ref{Lem:PO} on~$W_+$ with $t = 4\nn L'_2$ and accuracy $\eps/4$. We obtain a unitary $\pora_+ : \ket{u}\qub \mapsto \ket{u}\ket{\varphi_{+,u}}$ such that, by the triangle inequality, the state $\ket{\varphi_{+,u}}$ satisfies
    $\norm[\big]{\ket{\varphi_{+,u}} - e^{i \nn \ex{\clamp{\alpha \inp{u}{X_+}}{0}{1}}}\qub}
       \leq \norm[\big]{\ket{\varphi_{+,u}} - e^{i 4mL'_2 p_u}\qub} + \abs[\big]{e^{i 4mL'_2 p_u} - e^{i \nn \ex{\clamp{\alpha \inp{u}{X_+}}{0}{1}}}}
       \leq \eps/4 + \abs[\big]{4mL'_2 p_u - m\ex{\clamp{\alpha \inp{u}{X_+}}{0}{1}}}
       = \eps/4 + 4mL'_2 \abs[\Big]{\sqrt{p_u} - \sqrt{\frac{\ex{\clamp{\alpha \inp{u}{X_+}}{0}{1}}}{4L'_2}}} \cdot \abs[\Big]{\sqrt{p_u} + \sqrt{\frac{\ex{\clamp{\alpha \inp{u}{X_+}}{0}{1}}}{4L'_2}}}$.
   Thus, for each $u \in G$,
    \[\norm*{\ket{\varphi_{+,u}} - e^{i \nn \ex*{\clamp{\alpha \inp{u}{X_+}}{0}{1}}}\qub} \leq \eps/2, \tag*{if $\ex{\clamp{\alpha \inp{u}{X_+}}{0}{1}} \leq L_2$,}\]
  and $\pora_+$ uses $\bo{\nn L_2 + \log(1/\eps)}$ applications of $W_+$ and $W_+^{\dagger}$. Overall, we used $\wbo{\nn\sqrt{L_2}\log^2(1/\eps)}$ quantum experiments and binary oracle queries to $X$ to implement one application of $\pora_+$. Moreover, the condition $\ex{\clamp{\alpha \inp{u}{X_+}}{0}{1}} \leq L_2$ is implied by $\alpha \ex{\abs{\inp{u}{X}}} \leq L_2$.
\end{proof}

We finally describe the algorithm that estimates the mean of any bounded random variable (Algorithm~\ref{Alg:MultiBoundedEst}). Our approach relies on applying the quantum Fourier transform over $G$ to the above directional mean oracle in a similar manner as in previous work (e.g.~\cite{BV97j,Jor05j,GAW19c}). The results of Lemma~\ref{Lem:Truncate} play again a central role in the analysis.

\algobox{Alg:MultiBoundedEst}{Bounded multivariate estimator, $\boundedest_d\pt{X,L_2,n,\delta}$.}{
\begin{enumerate}[leftmargin=*]
  \item If $n \leq \frac{\log(d/\delta)}{\sqrt{L_2}}$ then output $\mut = 0$.\label{Step:trivial}
  \item Set $\alpha = \frac{1}{\sqrt{\log(400\pi n \sqrt{d})}}$ and $\nn = 2^{\ceil[\big]{\log\pt[\big]{\frac{8\pi}{\alpha} \cdot \frac{n}{\sqrt{L_2} \log(d/\delta)}}}}$.
  \item For $k = 1,\dots,\ceil{18\log(d/\delta)}$:
  \begin{enumerate}
    \item Compute the uniform superposition $\ket{G} := \frac{1}{\nn^{d/2}} \sum_{u \in G} \ket{u}$ over $G$.
    \item Compute the state $\ket{\psi} := \td{\pora}_{X,L_2,\nn,\alpha,\eps} \ket{G}\qub \in \Hil_G \otimes \Hila$, where $\td{\pora}_{X,L_2,\nn,\alpha,\eps}$ is the directional mean oracle constructed in Proposition~\ref{Prop:PhaseExp} with $\eps = 1/25$.\label{Step:direc}
    \item Compute the state $\ket{\phi} := \pt{\mathrm{QFT}^{-1}_G \otimes \id_{\mathrm{aux}}}\ket{\psi}$ where the unitary  $\mathrm{QFT}_G : \ket{u} \mapsto \frac{1}{\nn^{d/2}} \sum_{v \in G} e^{2i\pi \nn \inp{u}{v}} \ket{v}$ is the quantum Fourier transform over~$G$.\label{Step:qft}
    \item Measure the $\Hil_G$ register of $\ket{\phi}$ in the computational basis and let $\td{v}\super{k} \in G$ denote the obtained result. Set $\mut\super{k} = \frac{2\pi}{\alpha} \td{v}\super{k}$.\label{Step:measure}
  \end{enumerate}
  \item Output the coordinate-wise median $\mut = \median\pt{\mut\super{1},\dots,\mut\super{{\ceil{18\log(d/\delta)}}}}$.
\end{enumerate}
}

\begin{theorem}[\sc Bounded multivariate estimator]
  \label{Thm:MultiBoundedEst}
  Let $X$ be a $d$-dimensional bounded random variable such that $\norm{X}_2 \leq 1$. Given three reals $L_2 \in (0,1]$, $\delta \in (0,1)$ and $n \geq 1$ such that $\ex{\norm{X}_2} \leq L_2$, the \emph{bounded multivariate estimator} $\boundedest_d\pt{X,L_2,n,\delta}$ (Algorithm~\ref{Alg:MultiBoundedEst}) outputs a mean estimate $\mut$ of $\mu = \ex{X}$ such that
    \[\norm{\mut - \mu}_{\infty} \leq \frac{\sqrt{L_2} \log(d/\delta)}{n}\]
  with probability at least $1-\delta$. It uses $\wbo{n}$ quantum experiments and binary oracle queries to $X$.
\end{theorem}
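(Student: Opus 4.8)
The plan is to show that one iteration $k$ of the loop produces an estimate $\mut\super{k}$ that is coordinate-wise close to $\mu$ with constant probability, and then boost to probability $1-\delta$ by the median of $\Theta(\log(d/\delta))$ independent repetitions and a union bound over the $d$ coordinates. First I would handle the trivial case: if $n \leq \log(d/\delta)/\sqrt{L_2}$, then the claimed error bound $\sqrt{L_2}\log(d/\delta)/n$ is at least $1$, and since $\norm{\mu}_\infty \leq \norm{\mu}_2 \leq \ex{\norm{X}_2} \leq L_2 \leq 1$, outputting $\mut = 0$ trivially satisfies the bound. So assume $n > \log(d/\delta)/\sqrt{L_2}$, which guarantees $\nn \geq 1/L_2$ and that all the parameter constraints of Proposition~\ref{Prop:PhaseExp} are met with $\eps = 1/25$.

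The analytic core is the single-iteration analysis. After step~(c), the state $\ket{\phi}$ is (up to the $\eps$-error in $\td\pora$) the inverse QFT applied to $\frac{1}{\nn^{d/2}}\sum_{u\in G} e^{i\nn\ex{\clamp{\alpha\inp{u}{X}}{0}{1}}}\ket{u}\qub$. If the phase were exactly the linear function $2\pi\nn\inp{u}{v^*}$ for $v^* = \frac{\alpha}{2\pi}\mu \in (-1/2,1/2)^d$ (valid because $\norm{\mu}_\infty \leq L_2 \leq 1$ and $\alpha/(2\pi) < 1/2$), measuring would return $v^*$ exactly — this is the Bernstein--Vazirani / Jordan gradient trick. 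The actual phase $\ex{\clamp{\alpha\inp{u}{X}}{0}{1}}$ differs from $\alpha\ex{\inp{u}{X}} = \alpha\inp{u}{\mu} = 2\pi\inp{u}{v^*}$ for two reasons: the truncation $\clamp{\cdot}{0}{1}$, and the $\alpha/2$-fraction of ``bad'' $u$ from Proposition~\ref{Prop:PhaseExp} plus the $\eps$-approximation. I would bound the $\ell_1$-deviation of the amplitude vector from the ideal one: the truncation error at a point $u$ is $\ex{\abs{\alpha\inp{u}{X}}\ind{\alpha\abs{\inp{u}{X}}>1}}$, and using the first part of Lemma~\ref{Lem:Truncate} (with $\norm{X}_2 \leq 1$, so $\alpha\abs{\inp{u}{X}} > 1$ happens with probability $\leq 2e^{-2/\alpha^2}$ over $u$, integrated against the tail) one shows the expected (over $u\sim G$) truncation contribution is tiny — here the choice $\alpha^2 = 1/\log(400\pi n\sqrt d)$ makes $e^{-2/\alpha^2} = (400\pi n\sqrt d)^{-2}$, killing any $\poly(n,d)$ factors. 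Combined with the $\alpha/2$ bad-$u$ fraction and $\eps=1/25$, a Markov/union argument over the measurement outcome shows that with probability $\geq 2/3$ the measured $\td v\super{k}$ lies within $\frac{1}{\nn}$ (one grid cell, the QFT resolution) of $v^*$ in $\ell_\infty$, so $\norm{\mut\super{k} - \mu}_\infty = \frac{2\pi}{\alpha}\norm{\td v\super{k} - v^*}_\infty \leq \frac{2\pi}{\alpha\nn}$. By the definition of $\nn$, $\frac{2\pi}{\alpha\nn} \leq \frac{2\pi}{\alpha}\cdot\frac{\alpha\sqrt{L_2}\log(d/\delta)}{8\pi n} = \frac{\sqrt{L_2}\log(d/\delta)}{4n}$, comfortably within the target.

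For the final bound, fix a coordinate $i \in [d]$. Each iteration independently has $\Pr[\abs{\mut\super{k}_i - \mu_i} \leq \frac{\sqrt{L_2}\log(d/\delta)}{n}] \geq 2/3$, so by a Chernoff bound over $\ceil{18\log(d/\delta)}$ iterations the coordinate-wise median $\mut_i$ fails this bound with probability $\leq \delta/d$; a union bound over $i$ gives overall failure probability $\leq \delta$. For the query complexity: each iteration calls $\td\pora_{X,L_2,\nn,\alpha,1/25}$ once, costing $\wbo{\nn\sqrt{L_2}}$ queries by Proposition~\ref{Prop:PhaseExp} (with $\eps$ a constant, $\log^2(1/\eps) = O(1)$); since $\nn = \Theta\!\big(\frac{n}{\alpha\sqrt{L_2}\log(d/\delta)}\big)$ and $1/\alpha = \polylog(n,d)$, this is $\wbo{n/\log(d/\delta)}$ per iteration, and $\ceil{18\log(d/\delta)}$ iterations give $\wbo{n}$ total (the QFT over $G$ and the superposition preparation are gate-only and add no oracle queries). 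The main obstacle I anticipate is the quantitative phase-deviation estimate in the single-iteration step: one must carefully track how the truncation error $\clamp{\alpha\inp{u}{X}}{0}{1}$ versus $\alpha\inp{u}{X}$, the $\alpha/2$ fraction of exceptional $u$, and the $\eps$-error in the oracle jointly perturb the QFT peak, and verify that the specific choices of $\alpha$ and $\nn$ in steps~1--2 of the algorithm make all three contributions small enough that the measurement still concentrates on the correct grid cell — this is where the $\log(\cdot)$ inside $\alpha$ and the factor $8\pi/\alpha$ inside $\nn$ are doing delicate work.
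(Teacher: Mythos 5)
Your proposal follows essentially the same route as the paper: trivial case, single-iteration perturbation analysis separating the truncation error (via the first part of Lemma~\ref{Lem:Truncate}), the bad-$u$ fraction from Proposition~\ref{Prop:PhaseExp}, and the oracle $\eps$-error, followed by coordinate-wise median amplification and the same query-count calculation. Two small imprecisions in your single-iteration step do not change the conclusion but are worth noting. First, your claim that the measured $\td{v}\super{k}$ lies within $1/\nn$ of $v^*$ \emph{in $\ell_\infty$} with probability $\geq 2/3$ is not what phase estimation gives: the inverse QFT over $G$ factorizes into $d$ independent one-dimensional phase estimations, so the $\geq 2/3$ probability holds only \emph{per coordinate}, not for all coordinates simultaneously (simultaneity would multiply probabilities and fail for large $d$); your subsequent Chernoff paragraph correctly reverts to the per-coordinate statement, which is what the coordinate-wise median actually needs. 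Second, the standard phase estimation bound used in the paper gives $\abs{\td{v}_j - \frac{\alpha}{2\pi}\mu_j} \leq 4/\nn$ with probability $\geq 5/6$ per coordinate (dropping to $\geq 2/3$ after accounting for the state perturbation $\norm{\ket{\psi}-\ket{\psi''}} \leq 1/12$), not $1/\nn$; your computation $\frac{2\pi}{\alpha\nn} \leq \frac{\sqrt{L_2}\log(d/\delta)}{4n}$ has exactly the factor-of-4 slack to absorb this, so the target bound is still met with $8\pi/(\alpha\nn) \leq \sqrt{L_2}\log(d/\delta)/n$.
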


\begin{proof}
  If $n \leq \frac{\log(d/\delta)}{\sqrt{L_2}}$ then by choosing $\mut = 0$ at step~\ref{Step:trivial} we directly have $\norm{\mut - \mu}_{\infty} \leq \ex{\norm{X}_2} \leq \frac{\sqrt{L_2} \log(d/\delta)}{n}$. Thus, we can suppose from now on that $n > \frac{\log(d/\delta)}{\sqrt{L_2}}$ (in particular, $m \geq 1/L_2$).

  Let $\ket{\psi'}, \ket{\psi''} \in \Hil_G \otimes \Hila$ be the two unit states defined as follows,
    \[\ket{\psi'} = \frac{1}{\nn^{d/2}} \sum\nolimits_{u \in G} e^{i \nn\ex*{\clamp{\alpha\inp{u}{X}}{0}{1}}} \ket{u} \qub
      \quad \text{and} \quad
      \ket{\psi''} = \frac{1}{\nn^{d/2}} \sum\nolimits_{u \in G} e^{i \nn \alpha \inp{u}{\ex{X}}} \ket{u} \qub.\]
  We first show that the state~$\ket{\psi}$ at step~\ref{Step:direc} satisfies $\norm{\ket{\psi} - \ket{\psi''}} \leq 1/12$. On one hand, by Proposition~\ref{Prop:PhaseExp}, we have
    $\norm{\ket{\psi} - \ket{\psi'}}^2
      = \frac{1}{\nn^d} \sum_u \norm{\td{\pora}_{X,L_2,\nn,\alpha,\eps}\pt{\ket{u}\qub} - e^{i \nn\ex*{\clamp{\alpha \inp{u}{X}}{0}{1}}}\ket{u}\qub}^2
      \leq \eps^2 + \alpha$.
  On the other hand, by using the inequality $\sin^2(x) \leq \abs{x}$, we have
    $\norm{\ket{\psi'} - \ket{\psi''}}^2
      = \frac{4}{\nn^d} \sum_{u \in G} \sin^2\pt*{\frac{\nn}{2}\pt[\big]{\ex*{\clamp{\alpha\inp{u}{X}}{0}{1}} - \alpha\inp{u}{\ex{X}}}}
      \leq \frac{2\nn}{\nn^d} \sum_{u \in G} \abs[\big]{\ex*{\clamp{\alpha\inp{u}{X}}{0}{1}} - \alpha\inp{u}{\ex{X}}}
      \leq \frac{2\nn}{\nn^d} \allowbreak \ex*{\sum_u \abs[\big]{\clamp{\alpha\inp{u}{X}}{0}{1} - \alpha\inp{u}{X}}}
      \leq 2\nn \alpha \sqrt{d} \pr_u{\alpha\abs{\inp{u}{X}} > 1}$
  where the last step uses $\abs*{\inp{u}{X}} \leq \sqrt{d}$. By the first part of Lemma~\ref{Lem:Truncate}, we have $\pr_u[\big]{\alpha\abs{\inp{u}{X}} \geq 1} \leq 2e^{-2/\alpha^2}$ since $\norm{X}_2 \leq 1$. Thus, $\norm{\ket{\psi'} - \ket{\psi''}}^2 \leq 4 \nn \alpha \sqrt{d} e^{-2/\alpha^2}$. By the triangle inequality, $\norm{\ket{\psi} - \ket{\psi''}} \leq \sqrt{\eps^2 + \alpha} + 2\sqrt{\nn \alpha} d^{1/4} e^{-1/\alpha^2} \leq 1/12$.

  We now analyse steps~\ref{Step:qft} and~\ref{Step:measure} of the algorithm assuming $\ket{\psi}$ is replaced with $\ket{\psi''}$. Let $\td{v} \in G$ denote the vector obtained by measuring the first register of $\pt{\mathrm{QFT}^{-1}_G \otimes \id_{\mathrm{aux}}} \ket{\psi''}$ in the computational basis. Since the phases satisfy $\norm{\alpha \ex{X}}_{\infty} \leq 2\pi/3$, we can apply the analysis of phase estimation (e.g. \cite[Lemma 5.1]{GAW19c}) to conclude that $\abs[\big]{\td{v}_j - \frac{\alpha}{2\pi} \ex{X}_j} \leq 4/\nn$ with probability at least $5/6$ for each $j \in [d]$. By replacing $\ket{\psi}$ with $\ket{\psi''}$, we achieve the same result with probability at least $5/6 - 2\norm{\ket{\psi} - \ket{\psi''}} \geq 2/3$. Finally, by the Chernoff bound, $\norm[\big]{\mut - \ex{X}}_{\infty} \leq 8\pi/(\alpha\nn) \leq \sqrt{L_2} \log(d/\delta)/n$ with probability at least $1-\delta$.

  The total number of quantum experiments and binary oracle queries to $X$ is $\bo{\log(d/\delta) \cdot \nn\sqrt{L_2}\log^2(1/\eps)} = \bo{n/\alpha} = \wbo{n}$.
\end{proof}

  \subsection{Near-optimal multivariate estimator}
  \label{Sec:MultiBinary}
  We describe in Algorithm~\ref{Alg:MultiEstim} our main quantum algorithm for estimating the mean of a $d$-dimensional random variable in the binary oracle model. Our approach relies on applying the bounded multivariate estimator, developed in the previous section, to a sequence of carefully chosen truncated random variables.

\algobox{Alg:MultiEstim}{Near-optimal multivariate estimator, $\qestim_d(X,n,\delta)$.}{
\begin{enumerate}[leftmargin=*]
  \item Set $k = \ceil{2 \log\pt[\big]{\frac{2\sqrt{2} n}{\log(d/\delta)}}}$ and $n' = n \cdot \frac{(k+1)4\log(5kd/\delta)}{\sqrt{c} \log(d/\delta)}$ where $c$ is the constant mentioned in Theorem~\ref{Prop:Quantile}.
  \item Run any classical sub-Gaussian estimator (Proposition~\ref{Prop:ClassSubGaussian}) on $X$ using $\bo{\log(1/\delta)}$ samples to compute a mean estimate $\eta \in \R^d$ such that $\pr[\big]{\norm{\eta - \mu}_2 > \sqrt{\Tr(\Sigma)}} \leq \delta/2$.\label{Step:mdefine}
  \item Define the random variable $Y = X - \eta$.
  \item For $j = 0,\dots,k$:\label{Step:For}
  \begin{enumerate}
    \item Compute an estimate $a_{j}$ of the quantile of order $2^{-j}$ of $\norm{Y}_2$ by using the \hyperref[Prop:Quantile]{quantile estimator} $\quant(\norm{Y}_2,2^{-j},\delta/(5k))$.\label{Step:quantile}
    \item Define the bounded random variable $Y_{j} = \frac{1}{{a_{j}}} \clamp{Y}{a_{j - 1}}{a_{j}}$ (where $a_{-1} = 0$). If $a_{j - 1} = a_{j}$ then set $\mut_{j} = 0$, else compute an estimate $\mut_{j}$ of $\ex{Y_{j}}$ by using the \hyperref[Thm:MultiBoundedEst]{bounded multivariate estimator} $\boundedest_d(Y_{j},2^{-(j-1)},n',\delta/(5k))$.
  \end{enumerate}
  \item Output $\mut = \eta + \sum_{j = 0}^{k} a_{j} \mut_{j}$.
\end{enumerate}
}

\begin{theorem}[\sc Near-optimal multivariate estimator]
  \label{Thm:MultiEstim}
  Let $X$ be a $d$-dimensional random variable with mean $\mu$ and covariance matrix $\Sigma$. Given two reals $\delta \in (0,1)$ and $n \geq \log(d/\delta)$, the quantum multivariate estimator $\qestim_d\pt{X,n,\delta}$ (Algorithm~\ref{Alg:MultiEstim}) outputs a mean estimate $\mut$ such that
    \[\norm{\mut - \mu}_{\infty} \leq \frac{\sqrt{\Tr(\Sigma)} \log(d/\delta)}{n}\]
  with probability at least $1-\delta$. It uses $\wbo{n}$ quantum experiments and binary oracle queries to $X$.
\end{theorem}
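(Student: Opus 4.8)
The plan is to feed the bounded multivariate estimator of Theorem~\ref{Thm:MultiBoundedEst} the $\bo{\log n}$ normalized truncated pieces $Y_j = \frac{1}{a_j}\clamp{Y}{a_{j-1}}{a_j}$ of the recentered variable $Y = X - \eta$, add up the resulting estimates, and control the total error by exploiting a cancellation between the truncation level $a_j$ and the second-moment bound $L_2 = 2^{-(j-1)}$ that Algorithm~\ref{Alg:MultiEstim} passes to $\boundedest_d$ on the $j$-th shell. First I would record two elementary consequences of the two-sided guarantee $Q(2^{-j}) \le a_j \le Q(c\,2^{-j})$ of the quantile estimator (Proposition~\ref{Prop:Quantile}): since $a_{j-1} \ge Q(2^{-(j-1)})$, the support event $\{a_{j-1} < \norm{Y}_2 \le a_j\}$ of the $j$-th shell has probability at most $2^{-(j-1)}$, so (using also $\norm{Y_j}_2 \le 1$ almost surely) $\ex{\norm{Y_j}_2} \le 2^{-(j-1)}$ — precisely the hypotheses needed to invoke $\boundedest_d$ on $Y_j$ with $L_2 = 2^{-(j-1)}$; and since $a_j \le Q(c\,2^{-j})$, Markov's inequality applied to $\norm{Y}_2^2$ gives $c\,2^{-j} \le \pr{\norm{Y}_2 \ge a_j} \le \ex{\norm{Y}_2^2}/a_j^2$, hence $a_j \le 2^{j/2}\sqrt{\ex{\norm{Y}_2^2}/c}$.

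These two facts are what make the per-shell error independent of $j$. Applying Theorem~\ref{Thm:MultiBoundedEst} to $Y_j$ with parameter $n'$ and failure probability $\delta/(5k)$ gives $\norm{\mut_j - \ex{Y_j}}_\infty \le 2^{(1-j)/2}\log(5kd/\delta)/n'$, so the summand $a_j\mut_j$ of the output incurs an error at most $a_j\norm{\mut_j - \ex{Y_j}}_\infty \le \sqrt{2/c}\,\sqrt{\ex{\norm{Y}_2^2}}\,\log(5kd/\delta)/n'$, with the factors $2^{j/2}$ and $2^{(1-j)/2}$ cancelling. Summing over $j = 0,\dots,k$ and substituting $n' = n(k+1)\cdot 4\log(5kd/\delta)/(\sqrt c\,\log(d/\delta))$ makes the prefactor $(k+1)\log(5kd/\delta)/n'$ collapse to $\sqrt c\,\log(d/\delta)/(4n)$, so the $k+1$ estimated shells together contribute at most $\frac{1}{2\sqrt2}\sqrt{\ex{\norm{Y}_2^2}}\,\log(d/\delta)/n$.

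It remains to handle the discarded tail and the recentering. Telescoping over $\ell_2$-shells, $\ex{Y} = \sum_{j=0}^{k} a_j\ex{Y_j} + r$ with $r = \ex{Y\,\ind{\norm{Y}_2 > a_k}}$, and by Cauchy--Schwarz together with $a_k \ge Q(2^{-k})$ one gets $\norm{r}_\infty \le \norm{r}_2 \le \sqrt{\ex{\norm{Y}_2^2}}\sqrt{\pr{\norm{Y}_2 > a_k}} \le 2^{-k/2}\sqrt{\ex{\norm{Y}_2^2}}$, while the choice $k = \ceil{2\log(2\sqrt2\,n/\log(d/\delta))}$ forces $2^{-k/2} \le \log(d/\delta)/(2\sqrt2\,n)$. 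Since $\mut = \eta + \sum_{j=0}^k a_j\mut_j$ and $\mu = \eta + \ex{Y}$, combining this with the previous paragraph yields $\norm{\mut - \mu}_\infty \le \frac{1}{\sqrt2}\sqrt{\ex{\norm{Y}_2^2}}\,\log(d/\delta)/n$, and since $\ex{\norm{Y}_2^2} = \Tr(\Sigma) + \norm{\mu - \eta}_2^2 \le 2\Tr(\Sigma)$ on the event $\norm{\eta - \mu}_2 \le \sqrt{\Tr(\Sigma)}$ from step~\ref{Step:mdefine} (Proposition~\ref{Prop:ClassSubGaussian}), we obtain the claimed $\norm{\mut - \mu}_\infty \le \sqrt{\Tr(\Sigma)}\,\log(d/\delta)/n$. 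A union bound over the classical pre-estimate, the $k+1$ quantile calls and the at most $k+1$ bounded-estimator calls, each with the failure probability prescribed in Algorithm~\ref{Alg:MultiEstim}, keeps the overall failure probability at most $\delta$. For the complexity, every oracle access to $\norm{Y}_2$ or to a shell $Y_j$ costs $\bo{1}$ queries to $U_\P$ and $\bora_X$; the quantile call on shell $j$ uses $\bo{2^{j/2}\log(5k/\delta)}$ such queries and $\boundedest_d$ uses $\wbo{n'}$, so summing over $j$ with $2^{k/2} = \bo{n/\log(d/\delta)}$, $k = \bo{\log n}$ and $n' = \wbo{n}$ gives $\wbo{n}$ in total.

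The step I expect to be the main obstacle is arranging the truncation schedule so that a single two-sided quantile bound does double duty — certifying both the moment hypothesis $\ex{\norm{Y_j}_2} \le L_2$ needed to run $\boundedest_d$ and the growth bound $a_j \lesssim 2^{j/2}\sqrt{\ex{\norm{Y}_2^2}}$ — so that every shell contributes the same error and only the \emph{number} of shells, not their magnitudes, enters the final estimate; one must then check that this survives the $a_j$ being only approximate quantiles computed with their own failure probability, and that cutting off after $\bo{\log n}$ shells discards a negligible part of the mean. This is exactly what dictates the quantile-spaced truncation levels (in place of the geometric spacing of the univariate case) and the specific value of $k$ in Algorithm~\ref{Alg:MultiEstim}.
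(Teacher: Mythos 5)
Your proposal is correct and follows the paper's proof essentially step for step: same quantile-spaced shell decomposition, same use of the two-sided quantile guarantee to simultaneously certify $\ex{\norm{Y_j}_2}\leq 2^{-(j-1)}$ and $a_j\lesssim 2^{j/2}\sqrt{\ex{\norm{Y}_2^2}}$, same Markov/Cauchy--Schwarz tail bound, same constant bookkeeping with $n'$ and $k$, and the same union bound and query-complexity accounting. The only difference is expository (you make the $2^{j/2}\cdot 2^{-j/2}$ cancellation more explicit), not mathematical.
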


\begin{proof}
  The main part of the proof is to show that the mean estimate $\mut_Y = \sum_{j = 0}^{k} a_{j} \mut_{j}$ of $\mu_Y = \ex{Y}$ satisfies
    \begin{equation}
      \label{Eq:MultiBound}
      \norm{\mut_Y-\mu_Y}_{\infty} \leq \frac{\sqrt{\ex{\norm{Y}^2_2}} \log(d/\delta)}{\sqrt{2}n}
    \end{equation}
  with probability at least $1-\delta/2$. The theorem follows since $\norm{\mut - \mu}_{\infty} = \norm{\mut_Y-\mu_Y}_{\infty}$ and $\ex{\norm{Y}^2_2} = \ex{\norm{X-\mu}_2^2} + \norm{\mu - \eta}^2_2 = \Tr(\Sigma) +  \norm{\mu - \eta}^2_2 \leq 2 \Tr(\Sigma)$, where the last inequality holds with probability at least $1-\delta/2$. The algorithms uses $\wbo{k 2^{k/2} \log(k/\delta) + k n'} = \wbo{n}$ quantum experiments and binary oracle queries to $X$.

  We now turn to the proof of Equation~(\ref{Eq:MultiBound}). We make the assumption that all the subroutines used in step~\ref{Step:For} are successful, which is the case with probability at least $(1-\delta/(5k))^{2k+2} \geq 1 - \delta/2$. The sequence $(a_{j})_{j}$ of quantile estimates computed at step~\ref{Step:quantile} satisfies $Q(2^{-j}) \leq a_{j} \leq Q(c2^{-j})$ for all $j \in \set{0,\dots,k}$, where $c$ is the constant mentioned in Theorem~\ref{Prop:Quantile}. On one hand, by Markov's inequality, $\pr{\norm{Y}_2 \geq a_{j}} = \pr{\norm{Y}^2_2 \geq a_{j}^2} \leq \ex{\norm{Y}^2_2}/a_{j}^2$. On the other hand, by definition of the quantile function, $\pr{\norm{Y}_2 \geq a_{j}} \geq \pr{\norm{Y}_2 \geq Q(c2^{-j})} \geq c2^{-j}$. Thus,
    \begin{equation}
        \label{Eq:quantBound}
        a_{j} \leq c^{-1/2} 2^{j/2} \sqrt{\ex{\norm{Y}^2_2}}.
    \end{equation}
  Since $\pr{\norm{Y}_2 > a_{j-1}} \leq \pr{\norm{Y}_2 > Q(2^{-(j-1)})} < 2^{-(j-1)}$, we also have that
    \begin{equation}
        \label{Eq:MomBound}
        \ex{\norm{Y_{j}}_2} < 2^{-(j-1)}.
    \end{equation}
  Hence, $2^{-(j-1)}$ is a valid upper bound on the expectation of $\norm{Y_{j}}_2$. Consequently, by Theorem~\ref{Thm:MultiBoundedEst}, each estimate $\mut_{j}$ satisfies $\norm{\mut_{j}-\ex{Y_{j}}}_{\infty} \leq \frac{2^{-(j-1)/2}\log(5kd/\delta)}{n'}$. Moreover, the truncated random variable $\clamp{Y}{a_k}{+\infty}$ satisfies,
    \begin{equation}
      \label{Eq:truncLargest}
      \norm{\clamp{Y}{a_k}{+\infty}}_{\infty} \leq \norm{\clamp{Y}{a_k}{+\infty}}_2 \leq \sqrt{\ex{\norm{Y}_2^2} \pr{\norm{Y}_2 > a_k}} \leq \frac{\sqrt{\ex{\norm{Y}^2_2}}}{2^{k/2}}
    \end{equation}
  where the first step is by monotonicity of the norm and the second is by Cauchy-Schwartz inequality.
  Overall, the error is $\norm{\mut_Y-\mu_Y}_{\infty} \leq \sum_{j = 0}^{k} a_{j}\frac{2^{-(j-1)/2}\log(5kd/\delta)}{n'} + \norm{\clamp{Y}{a_k}{+\infty}}_{\infty} \leq \frac{(k+1)\sqrt{2\ex{\norm{Y}^2_2}}\log(5kd/\delta)}{\sqrt{c}n'} + \frac{\sqrt{\ex{\norm{Y}^2_2}}}{2^{k/2}} \leq \frac{\sqrt{\ex{\norm{Y}^2_2}}\log(d/\delta)}{\sqrt{2} n}$.
\end{proof}

As a direct corollary of Proposition~\ref{Prop:ClassSubGaussian} and Theorem~\ref{Thm:MultiEstim}, we obtain the following result for estimating the mean in Euclidean norm. We prove in the next section that these bounds are optimal for all values of $n$ and $d$, up to logarithmic factors.

\begin{theorem}[\sc Multivariate estimator in Euclidean norm]
    \label{Thm:MultiEuclidean}
    There exists a quantum estimator with the following properties. Let~$X$ be a $d$-dimensional random variable with mean $\mu$ and covariance matrix $\Sigma$. Given two reals $\delta \in (0,1)$ and $n \geq \log(d/\delta)$, the estimator outputs a mean estimate $\mut$ such that
  	\[\norm{\widetilde{\mu} - \mu}_2 \leq
        \begin{cases}
      		\sqrt{\frac{\Tr(\Sigma)}{n}} + \sqrt{\frac{\norm{\Sigma} \log(1/\delta)}{n}}, & \text{if $n \leq d$}, \\[4mm]
      		\frac{\sqrt{d \Tr(\Sigma)} \log(d/\delta)}{n}, & \text{if $n > d$,}
      	\end{cases}\]
    with probability at least $1-\delta$. It uses $\wbo{n}$ quantum experiments and binary oracle queries to $X$.
\end{theorem}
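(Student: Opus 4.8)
The plan is to derive Theorem~\ref{Thm:MultiEuclidean} directly as a corollary of Proposition~\ref{Prop:ClassSubGaussian} and Theorem~\ref{Thm:MultiEstim}, splitting into the two regimes according to whether $n \leq d$ or $n > d$.

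\textbf{Regime $n \leq d$.} Here I would simply invoke the classical sub-Gaussian estimator from Proposition~\ref{Prop:ClassSubGaussian}, run on $\bo{n}$ random experiments. Since one random experiment is simulated by one quantum experiment and one binary oracle query (as noted after Definition~\ref{Def:binOracle}), this fits within the claimed $\wbo{n}$ query budget. The condition $n \geq \log(1/\delta)$ required by Proposition~\ref{Prop:ClassSubGaussian} is implied by the hypothesis $n \geq \log(d/\delta)$. This immediately gives the bound $\norm{\mut - \mu}_2 \leq \sqrt{\Tr(\Sigma)/n} + \sqrt{\norm{\Sigma}\log(1/\delta)/n}$, matching the first case verbatim.

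\textbf{Regime $n > d$.} Here I would run the quantum estimator $\qestim_d(X,n,\delta)$ of Theorem~\ref{Thm:MultiEstim}. That theorem gives an $\ell_\infty$ guarantee $\norm{\mut - \mu}_\infty \leq \sqrt{\Tr(\Sigma)}\log(d/\delta)/n$ using $\wbo{n}$ quantum experiments and binary oracle queries. The only work is the norm conversion: for any $d$-dimensional vector $v$ one has $\norm{v}_2 \leq \sqrt{d}\,\norm{v}_\infty$. Applying this with $v = \mut - \mu$ yields $\norm{\mut - \mu}_2 \leq \sqrt{d}\cdot \sqrt{\Tr(\Sigma)}\log(d/\delta)/n = \sqrt{d\,\Tr(\Sigma)}\log(d/\delta)/n$, which is exactly the second case. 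The success probability $1-\delta$ and the query count $\wbo{n}$ carry over unchanged.

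\textbf{Main obstacle.} There is essentially no obstacle: the proof is a two-line case analysis combining two already-established results with the trivial norm inequality $\norm{\cdot}_2 \leq \sqrt{d}\,\norm{\cdot}_\infty$. The only points deserving a sentence of care are (i) checking that $n \geq \log(d/\delta)$ suffices to meet the hypotheses of both invoked results, and (ii) observing that the $n \leq d$ bound is precisely the classical one while the $n > d$ bound is the quantum one after conversion, so the stated piecewise expression is correct with no loss. In particular, when $n \leq d$ the classical bound $\sqrt{\Tr(\Sigma)/n}$ can already be smaller than the converted quantum bound $\sqrt{d\,\Tr(\Sigma)}\log(d/\delta)/n$, which is why the crossover at $n = d$ is the natural threshold for choosing between the two estimators.
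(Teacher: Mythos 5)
Your proof is correct and follows exactly the same route as the paper's: classical sub-Gaussian estimator (Proposition~\ref{Prop:ClassSubGaussian}) in the regime $n \leq d$, and Theorem~\ref{Thm:MultiEstim} combined with the norm inequality $\norm{\cdot}_2 \leq \sqrt{d}\,\norm{\cdot}_\infty$ in the regime $n > d$. Your added remarks about hypothesis checking and query accounting are sound but are implicit in the paper's one-line argument.
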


\begin{proof}
  If $n \leq d$ we use any classical sub-Gaussian estimator (Proposition~\ref{Prop:ClassSubGaussian}). If $n > d$ we use the quantum estimator of Theorem~\ref{Thm:MultiEstim} and the norm inequality $\norm{\mut - \mu}_2 \leq \sqrt{d} \norm{\mut - \mu}_{\infty}$.
\end{proof}

  \subsection{Lower bounds}
  \label{Sec:LowerBinary}
  Our lower bounds are based on reductions from the following composition problem, where the goal is to approximate an $N$-bit string whose entries are determined by parities over $M$ bits.

\begin{problem}[\sc $\proc{Search}^N \circ \proc{Parity}^M$]
  \label{Def:SP}
  Let $N,M \geq 1$ be two integers. Let~$\mathcal{A}_{N,M}$ denote the set of all matrices $A \in \rn^{N \times M}$ such that $\floor{N/2}$ rows have Hamming weights $\floor{M/2}$, and the other rows have Hamming weights $\floor{M/2}+1$. Define the vector $b\super{A} \in \rn^N$ such that,
  	\[b\super{A}_i = \begin{cases}
  		0, & \text{if the $i$-th row of $A$ has Hamming weight $\floor{M/2}$}, \\
  		1, & \text{if the $i$-th row of $A$ has Hamming weight $\floor{M/2}+1$},
  	\end{cases}\]
  for each $i \in [N]$. Then, the $\proc{Search}^N \circ \proc{Parity}^M$ problem consists of finding a vector $\td{b} \in \R^N$ that minimizes $\norm{\td{b} - b\super{A}}_2$ given a quantum oracle $\ket{i,j} \mapsto (-1)^{A_{i,j}} \ket{i,j}$ to $A \in \mathcal{A}_{N,M}$.
\end{problem}

We use the next lower bound for the $\proc{Search}^N \circ \proc{Parity}^M$ problem, which states that $\om{NM}$ queries are needed to approximate the vector $b\super{A}$ with small error. The proof can be easily adapted from that of \cite[Lemma 11]{vApe21c} or \cite[Lemma 5.7]{CJ21p}.

\begin{lemma}
  \label{Lem:SP}
  Let $\alpha > 1$ be a sufficiently large constant.
  Consider any quantum algorithm for the $\proc{Search}^N \circ \proc{Parity}^M$ problem that uses at most $NM/\alpha$ queries on all inputs. Then, there exists an input $A \in \mathcal{A}_{N,M}$ such that this algorithm returns a vector $\td{b}$ satisfying $\norm{\td{b} - b\super{A}}_2 \geq \om{\sqrt{N}}$ with probability at least $2/3$.
\end{lemma}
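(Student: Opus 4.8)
The plan is to reduce approximate recovery to the task of computing correctly a $(1-\gamma)$-fraction of $N$ independent parities, and then to invoke a strong direct product theorem for quantum query complexity. Fix a small constant $\eps>0$ — this is the constant that will be hidden in $\om{\sqrt N}$ — and put $\gamma=4\eps^2$. Round each coordinate of the algorithm's output $\td b\in\R^N$ to the nearer element of $\set{0,1}$, obtaining $\ha b\in\rn^N$; a coordinate on which $\ha b\neq b\super{A}$ contributes at least $1/4$ to $\norm{\td b-b\super{A}}_2^2$, so $\norm{\td b-b\super{A}}_2<\eps\sqrt N$ forces $\ha b$ to agree with $b\super{A}$ on more than $(1-\gamma)N$ coordinates. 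Hence it suffices to prove that, for $A$ drawn uniformly from $\mathcal{A}_{N,M}$, the probability (over $A$ and the internal coins) that $\ha b$ agrees with $b\super{A}$ on at least $(1-\gamma)N$ coordinates is less than $1/3$: this produces an input $A$ for which, conditioned on it, the same event has probability $<1/3$, hence for which $\norm{\td b-b\super{A}}_2\geq\eps\sqrt N$ with probability more than $2/3$ — the statement of the lemma.

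Each entry $b\super{A}_i$ is exactly the value of the base function $g_M$ — ``does the input string have Hamming weight $\floor{M/2}+1$ rather than $\floor{M/2}$?'' — on the $i$-th row of $A$, and a uniform $A\in\mathcal{A}_{N,M}$ realizes $N$ rows each (essentially) independent and uniform over the two promised Hamming-weight slices. By the polynomial method (this is the approximate-counting lower bound of Nayak and Wu specialized to the middle slice), $g_M$ has bounded-error quantum query complexity $\om{M}$, with the uniform distribution over its two slices a hard distribution; equivalently, its general adversary bound and its approximate degree are both $\om{M}$. Feeding such an input to an algorithm that makes at most $NM/\alpha$ queries to the combined oracle $\ket{i,j}\mapsto(-1)^{A_{i,j}}\ket{i,j}$ therefore falls exactly in the scope of a strong direct product theorem (e.g.\ Lee--Roland), taken in its ``solve a $(1-\gamma)$-fraction of the copies'' form: for $\gamma$ a small enough constant there is a constant $c\in(0,1)$ such that any algorithm making at most $cNM$ queries computes $\geq(1-\gamma)N$ of the $N$ copies correctly only with probability $2^{-\om{N}}$. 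Setting $\alpha:=1/c$ (a large constant, since $c$ is small), and absorbing into a $\poly(N)$ factor the passage from the product distribution on which the SDPT is stated to the uniform distribution on $\mathcal{A}_{N,M}$ (the latter has $1/\poly(N)$ probability under the former), this probability is below $1/3$ once $N$ exceeds an absolute constant, which is the bound we needed; for the remaining bounded $N$ the lemma is trivial, since already for $N=1$ pinning down a single middle-slice parity bit to precision better than $1/2$ requires $\om{M}\gg NM/\alpha$ queries.

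The main obstacle is this invocation of the strong direct product theorem, which is exactly the step to be ``adapted from \cite{vApe21c,CJ21p}''. Two points have to be checked: that the SDPT applies to the base function $g_M$, and that one has it in the ``fraction of copies'' rather than the ``all copies'' form. The first is unproblematic: $g_M$ is total on its promised two-slice domain and has general adversary bound (and approximate degree) $\om{M}$, which is precisely the hypothesis such theorems require. The second follows from the ``all copies'' form by a union bound over the $2^{\bo{\gamma N}}$ choices of which copies are to be solved, which is harmless provided $\gamma$ is a small enough constant — this is why $\eps$, hence the constant hidden in $\om{\sqrt N}$, must be taken small. Everything else — the rounding estimate, and the passage from a query bound holding on all inputs to distributional hardness by averaging over $\mathcal{A}_{N,M}$ — is routine. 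Finally, note that the balanced promise built into $\mathcal{A}_{N,M}$ is what makes the statement non-vacuous: the trivial all-zero output already achieves $\norm{\td b-b\super{A}}_2=\sqrt{\ceil{N/2}}=\Theta(\sqrt N)$, so the constant hidden in $\om{\sqrt N}$ is necessarily, and by the above can be taken as, a sufficiently small multiple of $1/\sqrt2$.
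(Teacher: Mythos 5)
Your proof takes the same route the paper's references do: round the output, reduce to computing a $(1-\gamma)$-fraction of $N$ copies of the middle-slice counting function $g_M$ (which has general adversary bound $\Omega(M)$), and invoke a strong direct product theorem, passing from the ``all copies'' form to the ``$(1-\gamma)$-fraction'' form by a union bound over the $\leq 2^{H(\gamma)N}$ subsets of size $\geq(1-\gamma)N$. The rounding estimate, the choice of base function, and the SDPT invocation (Lee--Roland applies to partial Boolean functions via the general adversary bound) are all sound.

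Two small points. The detour through a product distribution and a $\poly(N)$ correction factor is unnecessary: Lee--Roland bounds the \emph{worst-case} success probability over inputs in the promise domain, and every $A\in\mathcal{A}_{N,M}$ is such an input, so the $2^{-\Omega(N)}$ bound applies per input without any averaging step. And your fallback for small $N$ is not actually correct: for $N=1$, $\mathcal{A}_{1,M}$ consists solely of rows of weight $\lfloor M/2\rfloor+1$, so $b\super{A}=(1)$ for every valid input and the zero-query output $\td{b}=(1)$ makes the lemma false as stated; similarly for $N=2$, a fair coin flip between $(0,1)$ and $(1,0)$ already violates the $2/3$ threshold. The lemma is implicitly meant for $N$ larger than a constant, which is all that Theorems~\ref{Thm:LBLowBinary} and~\ref{Thm:LBHighBinary} need, but the claim that ``for $N=1$ the base lower bound makes the lemma trivial'' does not rescue these degenerate cases and should simply be replaced by the observation that the lemma is vacuous (or needs a restated constant) for $N$ below a fixed constant.
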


We now show that the quantum mean estimators developed in the previous sections are tight (up to logarithmic factors). For simplicity in the proof, we only consider the case of approximation in Euclidean norm. We first prove that the mean estimation problem admits no quantum advantage when the complexity parameter $n$ is smaller than the dimension. The proof works by a reduction from the $\proc{Search}^{\alpha n} \circ \proc{Parity}^1$ problem (where $\alpha$ is the constant mentioned in Lemma~\ref{Lem:SP}).

\begin{theorem}[\sc Low-precision regime]
  \label{Thm:LBLowBinary}
  Consider two integers $n,d$ such that $n \leq d/\alpha$. Fix $\sigma > 0$ and let~$\mathcal{P}_{\sigma}$ denote the set of all $d$-dimensional quantum random variables with covariance matrix~$\Sigma$ such that $\Tr(\Sigma) = \sigma^2$. Then, for any quantum estimator that uses at most $n$ binary oracle queries, there exists $X \in \mathcal{P}_{\sigma}$ such that the estimator returns a mean estimate $\mut$ of $\mu = \ex{X}$ that satisfies
    \[\norm{\mut - \mu}_2 \geq \om*{\sqrt{\frac{\Tr(\Sigma)}{n}}}\]
  with probability at least $2/3$.
\end{theorem}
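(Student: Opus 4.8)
The plan is to reduce the $\proc{Search}^{N} \circ \proc{Parity}^{M}$ problem, with parameters $N = \alpha n$ and $M = 1$, to the mean estimation problem, and then invoke Lemma~\ref{Lem:SP}. First I would fix an input matrix $A \in \mathcal{A}_{N,1}$, which is just a column vector $A \in \rn^{N}$ with exactly $\lfloor N/2 \rfloor$ zeros, so that the target vector $b\super{A}$ coincides with $A$ itself. From such an $A$ I would build a $d$-dimensional random variable $X_A$ as follows: let the probability space be $\Omega = [N]$ with the uniform distribution (so $U_{\P}$ is a fixed, input-independent Fourier/Hadamard transform and costs nothing), embed $[N]$ into the first $N \le d$ coordinates of $\R^{d}$, and set $X_A(\omega) = \sigma \cdot (2A_{\omega} - 1)\, e_{\omega}$, i.e. a signed scaled basis vector whose sign is $\pm 1$ according to the bit $A_\omega$. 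Then $\mu = \ex{X_A} = \frac{\sigma}{N}\sum_{\omega} (2A_\omega - 1) e_\omega$, so the $\omega$-th coordinate of $\mu$ is $\pm \sigma/N$ and its \emph{sign} recovers $A_\omega = b\super{A}_\omega$. One also checks $\Tr(\Sigma) = \ex{\norm{X_A}_2^2} - \norm{\mu}_2^2 = \sigma^2 - \sigma^2/N \approx \sigma^2$, so we should rescale slightly to land exactly in $\mathcal{P}_\sigma$, which is a harmless $O(1)$ adjustment.

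The key technical point is the oracle simulation: one binary oracle query to $X_A$, namely $\bora_{X_A}: \ket{\omega}\ket{\vec 0} \mapsto \ket{\omega}\ket{X_A(\omega)}$, can be implemented with $O(1)$ queries to the parity oracle $\ket{i,j}\mapsto (-1)^{A_{i,j}}\ket{i,j}$ (here $j \in \{1\}$). Indeed, computing $X_A(\omega)$ into a register amounts to writing down the fixed magnitude $\sigma$ and sign bit $A_\omega$ into the appropriate coordinate slot; the sign bit is obtained from the phase oracle for $A$ by the standard trick of phase kickback / Hadamard conjugation on an ancilla qubit (prepare $\ket{-}$, apply the oracle controlled on $\omega$, giving $(-1)^{A_\omega}$ in a relative phase, undo with a Hadamard to read the bit), all coherently and reversibly. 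Hence any quantum estimator making at most $n$ binary oracle queries to $X_A$ yields a quantum algorithm for $\proc{Search}^{N}\circ\proc{Parity}^{1}$ making $O(n) = O(N/\alpha)$ queries to $A$.

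Next I would argue that a good mean estimate yields a good solution $\td b$ to the search problem. Given $\mut$ with $\norm{\mut - \mu}_2 \le \epsilon$, define $\td b \in \rn^{N}$ by thresholding: $\td b_\omega = 1$ if $\mut_\omega > 0$ and $\td b_\omega = 0$ otherwise (for the coordinates in $[N]$; ignore the rest). For each coordinate $\omega$ where $\td b_\omega \ne b\super{A}_\omega$, the true coordinate $\mu_\omega$ has magnitude $\sigma/N$ and the wrong sign relative to $\mut_\omega$, so $\abs{\mut_\omega - \mu_\omega} \ge \sigma/N$, i.e. that coordinate contributes at least $\sigma^2/N^2$ to $\norm{\mut-\mu}_2^2$. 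Therefore the number of misclassified coordinates is at most $\epsilon^2 N^2/\sigma^2$, and $\norm{\td b - b\super{A}}_2^2 \le \epsilon^2 N^2/\sigma^2$. If the estimator guaranteed $\epsilon = o\big(\sqrt{\Tr(\Sigma)/n}\big) = o(\sigma/\sqrt{n}) = o(\sigma/\sqrt{N})$ (using $N = \alpha n$), then $\norm{\td b - b\super{A}}_2 = o(\sqrt{N})$ on \emph{all} inputs $A$, contradicting Lemma~\ref{Lem:SP}. Chaining the constant success probabilities (the estimator succeeds with probability $2/3$, and the reduction is exact) gives the claimed $\Omega(\sqrt{\Tr(\Sigma)/n})$ lower bound with probability $2/3$.

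The main obstacle I anticipate is getting the reduction's bookkeeping exactly right rather than any deep idea: one must ensure the constructed $X_A$ genuinely lies in $\mathcal{P}_\sigma$ (fixing $\Tr(\Sigma)$ precisely to $\sigma^2$ by a trivial rescaling by $\sqrt{N/(N-1)}$, and handling the parity of $N$ in $\mathcal{A}_{N,1}$), and one must present the binary-oracle-from-parity-oracle simulation carefully enough that it is manifestly a $O(1)$-query, reversible procedure producing exactly the state $\ket{\omega}\ket{X_A(\omega)}$ on basis states (so it composes correctly with superposition queries). A secondary subtlety is confirming that Lemma~\ref{Lem:SP} applies with $M=1$ and with the adversary only needing \emph{one} bad input: this is exactly the form stated, so no extra work is needed there. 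Everything else — the thresholding argument, the covariance computation, the query-count arithmetic $n \le d/\alpha \Rightarrow N = \alpha n \le d$ — is routine.
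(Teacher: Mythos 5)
Your proposal is correct, and it takes a genuinely different route from the paper's proof, though both share the same high-level skeleton: reduce from $\proc{Search}^{\alpha n}\circ\proc{Parity}^1$ on $\Omega = [\alpha n]$ with the uniform distribution, and invoke Lemma~\ref{Lem:SP}. The paper encodes the hidden string into the random variable via a partial Hadamard matrix $H \in \R^{\alpha n \times d}$, setting $X^{(b)}(i) \propto \sigma\, b_i H_i$ (so each sample is $0$ or a dense vector), recovers $\td b$ from $\mut$ by the linear map $\td b \propto H\mut$, and controls the error via the spectral-norm bound $\norm{H(\mut-\mu)}_2 \le \norm{\mut-\mu}_2$. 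You instead use signed scaled basis vectors $X_A(\omega) = \sigma(2A_\omega - 1)e_\omega$ (each sample $1$-sparse), and recover $\td b$ by coordinate-wise thresholding, bounding the number of misclassified coordinates by $\norm{\mut - \mu}_2^2 N^2/\sigma^2$. Both reductions yield the same quantitative bound $\norm{\td b - b}_2 \le \frac{N}{\sigma}\norm{\mut-\mu}_2$, so neither is sharper. Your version is arguably a touch more elementary: it avoids the Hadamard machinery and, in particular, the ``assume $d$ is a power of two'' padding step, at the cost of producing a rank-$\sim N$ diagonal covariance rather than one spread over a Hadamard subspace. Two small shared technicalities that both your write-up and the paper leave implicit: (i) simulating one call to $\bora_X$ from the parity phase oracle via the controlled-phase/Hadamard trick costs two (not one) oracle calls once you uncompute the ancilla holding $A_\omega$, so the constant $\alpha$ should be taken large enough to absorb that factor; (ii) the $O(1)$-rescaling by $\sqrt{N/(N-1)}$ to land exactly in $\mathcal{P}_\sigma$, which you already flag, indeed changes nothing asymptotically.
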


\begin{proof}
  We assume for simplicity that $\alpha$ is even and~$d$ is a power of two (the other cases can be handled by simple padding arguments). Consider the partial Hadamard matrix $H \in \R^{\alpha n \times d}$ such that $H_{i,j} = \frac{1}{d} (-1)^{\inp{i}{j}}$, where $i, j \in \rn^{\log(d)}$ are written over $\log(d)$ bits. Note that $HH^{\top} = \id_{\alpha n}$ and the spectral norm of $H$ is $\norm{H} = 1$. Let $(\Omega, 2^{\Omega}, \P)$ be the probability space such that $\Omega = [\alpha n]$ and $\P(\omega) = 1/(\alpha n)$ for all $\omega \in \Omega$. For any vector $b \in \rn^{\alpha n}$ with Hamming weight $\norm{b}_1 = \alpha n/2$, define the random variable $X\super{b} : \Omega \to \R^d$ such that,
    \[X\super{b}(i) = \alpha \sigma \sqrt{\frac{n}{(\alpha^2 n-\alpha)/2}} b_i H_i\]
  where $H_i \in \R^d$ is the $i$-th row of $H$. The expectation of $X\super{b}$ is $\ex{X\super{b}} = \frac{\sigma}{\sqrt{n (\alpha^2 n-\alpha)/2}} H^{\top} b$ and the trace of its covariance matrix is $\Tr(\Sigma) = \ex{\norm{X\super{b}}_2^2} - \norm{\ex{X\super{b}}}_2^2 = \frac{2\alpha \sigma^2}{\alpha^2 n-\alpha} \norm{b}_1 - \frac{2 \sigma^2}{n (\alpha^2 n-\alpha)} \norm{b}_1 = \sigma^2$. Given any quantum estimator that uses~$n$ binary oracle queries to~$X\super{b}$ and outputs an estimate $\mut$ of $\mu = \ex{X\super{b}}$, we can transform it into an algorithm for the $\proc{Search}^{\alpha n} \circ \proc{Parity}^1$ problem that uses $n$ queries to $b$ and returns the estimate $\td{b} = \frac{\sqrt{n (\alpha^2 n-\alpha)/2}}{\sigma} H \mut$ with error $\norm{\td{b} - b}_2 \leq \frac{\sqrt{n (\alpha^2 n-\alpha)/2}}{\sigma} \norm{H(\mut - \mu)}_2 \leq \frac{\alpha n}{\sigma} \norm{\mut - \mu}_2$. Thus, by Lemma~\ref{Lem:SP}, there exists an input $b$ such that $\norm{\mut - \mu}_2 \geq \frac{\sigma}{\alpha n} \norm{\td{b} - b}_2 = \om[\big]{\frac{\sigma}{\sqrt{n}}}$.
\end{proof}

We now prove that the quantum estimator of Theorem~\ref{Thm:MultiEuclidean} is optimal in the regime where~$n$ is larger than the dimension. The proof works by a reduction from the $\proc{Search}^{d} \circ \proc{Parity}^{\alpha n/d}$ problem.

\begin{theorem}[\sc High-precision regime]
  \label{Thm:LBHighBinary}
  Consider two integers $n,d$ such that $n > d/\alpha$. Fix $\sigma > 0$ and let~$\mathcal{P}_{\sigma}$ denote the set of all $d$-dimensional quantum random variables with covariance matrix~$\Sigma$ such that $\Tr(\Sigma) = \sigma^2$. Then, for any quantum estimator that uses at most $n$ binary oracle queries, there exists $X \in \mathcal{P}_{\sigma}$ such that the estimator returns a mean estimate $\mut$ of $\mu = \ex{X}$ that satisfies
    \[\norm{\mut - \mu}_2 \geq \om*{\frac{\sqrt{d\Tr(\Sigma)}}{n}}\]
	with probability at least $2/3$.
\end{theorem}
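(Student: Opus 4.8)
The plan is to reduce from the $\proc{Search}^{d}\circ\proc{Parity}^{M}$ problem with block length $M=\Theta(n/d)$, following the template of the proof of Theorem~\ref{Thm:LBLowBinary} but replacing its partial Hadamard matrix by a direct coordinate-wise encoding: in the high-precision regime the number of search instances ($N=d$) matches the dimension, so the mean of the hard random variable can encode the target string $b\super{A}$ one coordinate at a time. As there, I would assume the relevant quantities are integers, handling the remaining cases by padding.

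For the hard instances, I would fix $M=\ceil{c_0 n/d}$ for a large enough absolute constant $c_0$ (so that $M\geq 2$ throughout $n>d/\alpha$ and $M=\Theta(n/d)$), take the uniform probability space $\Omega=[d]\times[M]$ — whose quantum experiment $U_{\P}$ is a product of Hadamards and costs no oracle query — and associate to each $A\in\mathcal{A}_{d,M}$ the random variable $X\super{A}:\Omega\to\R^d$ given by $X\super{A}(i,j)=c\,(2A_{i,j}-1)\,e_i$, where $e_i$ is the $i$-th standard basis vector and $c=\Theta(\sigma)$ is a normalization constant. Writing $w_i\in\{\lfloor M/2\rfloor,\lfloor M/2\rfloor+1\}$ for the Hamming weight of row $i$ of $A$ (so $b\super{A}_i=w_i-\lfloor M/2\rfloor$), one gets $\ex{X\super{A}}_i=\frac{c}{dM}(2w_i-M)$ and, since $(2A_{i,j}-1)^2=1$, $\ex{\norm{X\super{A}}_2^2}=c^2$; moreover $\norm[\big]{\ex{X\super{A}}}_2^2=\frac{c^2}{d^2M^2}\sum_i(2w_i-M)^2=\Theta(c^2/(dM^2))$ depends only on $d$ and $M$, not on which rows of $A$ are heavy. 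Hence $\Tr(\Sigma)=c^2-\norm[\big]{\ex{X\super{A}}}_2^2$ is the same for every $A\in\mathcal{A}_{d,M}$, and I would choose $c=\Theta(\sigma)$ making it exactly $\sigma^2$, so that $X\super{A}\in\mathcal{P}_\sigma$ for all $A$.

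Next I would run the given estimator on $X\super{A}$: each binary oracle query $\ket{i,j}\ket{\vec{0}}\mapsto\ket{i,j}\ket{X\super{A}(i,j)}$ is simulated with $O(1)$ calls to the phase oracle $\ket{i,j}\mapsto(-1)^{A_{i,j}}\ket{i,j}$ of Problem~\ref{Def:SP} (converting between phase and bit access with constant overhead, then writing $c(2A_{i,j}-1)e_i$ by elementary arithmetic), so an estimator making $n$ binary oracle queries yields a $\proc{Search}^{d}\circ\proc{Parity}^{M}$ algorithm making $O(n)\leq dM/\alpha$ queries to $A$ once $c_0$ is large relative to the constant $\alpha$ of Lemma~\ref{Lem:SP}. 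From the estimate $\mut$ of $\mu=\ex{X\super{A}}$, using $2w_i-M=2b\super{A}_i-(M\bmod 2)$, I would define the algorithm's output by $\td{b}_i=\frac{dM}{2c}\mut_i+\frac{M\bmod 2}{2}$; this exact affine decoding gives $\norm{\td{b}-b\super{A}}_2=\frac{dM}{2c}\norm{\mut-\mu}_2$. Then Lemma~\ref{Lem:SP} with $N=d$ supplies an input $A\in\mathcal{A}_{d,M}$ on which $\norm{\td{b}-b\super{A}}_2\geq\om{\sqrt d}$ with probability at least $2/3$, whence on $X\super{A}\in\mathcal{P}_\sigma$ the estimator outputs $\mut$ with $\norm{\mut-\mu}_2=\frac{2c}{dM}\norm{\td{b}-b\super{A}}_2\geq\om*{\frac{c}{\sqrt d\,M}}$ with probability at least $2/3$; substituting $c=\Theta(\sigma)$ and $M=\Theta(n/d)$ yields $\norm{\mut-\mu}_2\geq\om*{\frac{\sigma\sqrt d}{n}}=\om*{\frac{\sqrt{d\,\Tr(\Sigma)}}{n}}$.

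I expect the conceptual step to be identifying the right composed problem — $\proc{Search}^{d}\circ\proc{Parity}^{M}$ with $M\approx n/d$, so that the search width equals the dimension and each parity instance decodes to one coordinate of $\mu$. The main obstacle is the bookkeeping around two points: verifying that the correction $\norm[\big]{\ex{X\super{A}}}_2^2$ is genuinely independent of $A$ (so a single $c=\Theta(\sigma)$ normalizes every instance to $\Tr(\Sigma)=\sigma^2$ simultaneously), and tuning $c_0$ in $M=\ceil{c_0 n/d}$ so that the $O(1)$-per-query simulation overhead still fits the $dM/\alpha$ query budget of Lemma~\ref{Lem:SP} uniformly over the whole regime $n>d/\alpha$, while keeping $M$ at $\Theta(n/d)$ and at least $2$.
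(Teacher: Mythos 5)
Your proof is correct and follows essentially the same reduction the paper uses for Theorem~\ref{Thm:LBHighBinary}: encode each $A\in\mathcal{A}_{d,\Theta(n/d)}$ as $X\super{A}(i,j)\propto(2A_{i,j}-1)e_i$ on the uniform space $[d]\times[\Theta(n/d)]$, normalize so $\Tr(\Sigma)=\sigma^2$, decode $\td{b}$ affinely from $\mut$, and invoke Lemma~\ref{Lem:SP} with $N=d$. The only (minor) differences are that you explicitly handle odd $M$ via the $M\bmod 2$ shift and the $O(1)$ per-query overhead of simulating $\bora_{X\super{A}}$ from the phase oracle for $A$, whereas the paper simplifies by assuming $\alpha n$ divisible by $d$ and $d$ even.
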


\begin{proof}
  We assume for simplicity that $\alpha n$ is a multiple of $d$, and~$d$ is even (the other cases can be handled by simple padding arguments). Let $(\Omega, 2^{\Omega}, \P)$ be the probability space such that $\Omega = [d] \times [\alpha n/d]$ and $\P(\omega) = 1/(\alpha n)$ for all $\omega \in \Omega$. For any input $A \in \mathcal{A}_{d,\alpha n/d}$ to the $\proc{Search}^{d} \circ \proc{Parity}^{\alpha n/d}$ problem, define the random variable $X\super{A} : \Omega \to \R^d$ such that,
    \[X\super{A}(i,j) = \frac{\alpha \sigma n}{\sqrt{(\alpha n)^2-2d^2}} (-1)^{1+A_{i,j}} e_i\]
  where $e_i \in \R^d$ is the $i$-th indicator vector. The expectation of $X\super{A}$ is $\ex{X\super{A}} = \frac{2\sigma}{\sqrt{(\alpha n)^2-d^2}} b\super{A}$ and the trace of its covariance matrix is $\Tr(\Sigma) = \ex{\norm{X\super{A}}_2^2} - \norm{\ex{X\super{A}}}_2^2 = \frac{(\alpha \sigma n)^2}{(\alpha n)^2-d^2} - \frac{4\sigma^2}{(\alpha n)^2-d^2} \norm{b\super{A}}_1^2 = \sigma^2$ since $b\super{A}$ has Hamming weight~$d/2$. Given any quantum estimator that uses~$n$ binary oracle queries to~$X\super{A}$ and outputs an estimate $\mut$ of $\mu = \ex{X\super{A}}$, we can transform it into an algorithm for the $\proc{Search}^{d} \circ \proc{Parity}^{\alpha n/d}$ problem that uses $n$ queries to $A$ and returns the estimate $\td{b} = \frac{\sqrt{(\alpha n)^2-d^2}}{2\sigma} \mut$. Thus, by Lemma~\ref{Lem:SP}, there exists an input $A$ such that $\norm{\mut - \mu}_2 \geq \frac{2\sigma}{\sqrt{(\alpha n)^2-d^2}} \norm{\td{b} - b\super{A}}_2 = \om[\Big]{\frac{\sqrt{d}\sigma}{n}}$.
\end{proof}

\section{Mean estimation with phase oracles}
\label{Sec:Analog}
In some cases, we might not have access to the random variable $X$ through a binary oracle~$\bora_X$, as considered in the previous part of this paper, but merely through a less powerful oracle. Several such input models arise naturally in the literature. For instance, in \cite{GAW19c}, it is shown how \textit{phase oracles} and \textit{probability oracles} arise naturally in the context of variational quantum eigensolvers, QAOA, and quantum auto-encoders. In \cite{vApe21c}, the author considers a quantum operation that prepares an unknown distribution, which we henceforth refer to as a \textit{distribution oracle}. In \cite{CJ21p}, we consider the multivariate mean estimation problem relative to all these input models.

There is one profound qualitative difference between all of these input models and the binary oracle setting considered in the previous section, which is that these input oracles in some sense preserve proximity. That is to say, if we have two random variables $X$ and $\widetilde{X}$, whose values differ by at most $\eps$ in some norm, then the operator norm difference between their respective input oracles is bounded by $O(\mathrm{poly}(\eps))$ too. This qualitatively differentiates this setting from the one considered in the previous sections, and we refer to input models satisfying this property as \textit{analog} models.

For ease of exposition, in this part of the paper we only consider the case where our random variable $X$ takes values bounded in the $d$-dimensional hypercube $[-1/4,1/4]^d$, and can be accessed through a phase oracle $\pora_X$, as defined in Definition~\ref{Def:phaseOracle}. We refer to \cite{CJ21p} for a more elaborate exposition of the other input models mentioned at the start of this section. Just like in the previous section, we assume to have access to the probability space via the oracle $U_{\P}$ of Definition~\ref{Def:qExp}.

Since its values are contained in the hypercube $[-1/4,1/4]^d$, the random variable $X$ satisfies $\mathrm{Var}[X_j] \leq 1/16$ for all $j \in [d]$, and hence $\Tr[\Sigma] \leq d/16$. This suggests that we can use the results from the previous section naively, to obtain a multivariate mean estimator in this setting as well.

There are two naive ways of approaching this. First, we can simulate a call to the binary oracle $\bora_X$, considered in the previous part of this paper, using $d$ consecutive runs of phase estimation on this phase oracle. Thus, if one only cares about the performance of the multivariate mean estimator expressed in the number of calls to $U_{\P}$, then it is clear that the same results can be obtained, that is, with $n$ calls to $U_{\P}$, one can obtain a multivariate mean estimator that finds an approximation $\widetilde{\mu}$ to $\mu$ which satisfies $\norm{\widetilde{\mu} - \mu}_{\infty} = \widetilde{O}(\sqrt{d}/n)$.

Secondly, if one only cares about the number of calls to $\pora_X$, then a little more elaborate construction also readily reduces to the binary oracle setting. Using Lemma~\ref{Lem:PrO}, one can turn the phase oracle $\pora_X$ into a probability oracle that given input $\ket{\omega}\ket{j}\ket{0}$ prepares the state $\ket{\omega}\ket{j}(\sqrt{1/2 + X(\omega)_j}\ket{1} + \sqrt{1/2 - X(\omega)_j}\ket{0})$. This operation can be combined with $U_{\P}$ and an operation that prepares the uniform superposition over all $j \in [d]$, to obtain the operator $U$ that acts as
  \[U : \ket{0}\ket{0}\ket{0} \mapsto \sum_{\omega \in \Omega} \sqrt{\P(\omega)} \ket{\omega} \otimes \frac{1}{\sqrt{d}} \sum_{j=1}^d \ket{j} \otimes \left(\sqrt{\frac12 + X(\omega)_j}\ket{1} + \sqrt{\frac12 - X(\omega)_j}\ket{0}\right).\]

Now, let $\br{\Omega} = \Omega \times [d] \times \{0,1\}$, and let $\overline{\P}(\omega,j,b) = \P(\omega)/(2d) - (-1)^b\P(\omega) X(\omega)_j/d$ be a probability measure on~$\br{\Omega}$. Furthermore, let the random variable $Y : \br{\Omega} \to \R^d$ be defined as $Y(\omega,j,b) = \pt{\ind{i = j \wedge b = 1}}_{i \in [d]}$. Then, implementing a binary oracle $\bora_Y$ is trivial, and the operation~$U$ above implements the operation~$U_{\overline{\P}}$. Finally, observe that $\ex{Y} = \vec{1}/(2d) + \ex{X}/d$. Therefore, we can find an approximation $\widetilde{\mu}_Y$ to $\mu_Y = \ex{Y}$ which satisfies $\norm{\widetilde{\mu}_Y - \mu_Y}_{\infty} = \widetilde{O}(\sqrt{d}/n')$, with~$n'$ calls to~$U$. Since $U$ requires only polylogarithmically many calls to $\pora_X$, and since $\ex{Y}$ is shrunk by a factor of $d$ compared to $\mu = \ex{X}$, we can obtain an estimate $\widetilde{\mu}$ such that $\norm{\widetilde{\mu} - \mu}_{\infty} = \widetilde{O}(d^{3/2}/n')$, with $\widetilde{O}(n')$ calls to $\pora_X$. Note that this approach also uses $\widetilde{O}(n')$ calls to $U_{\P}$, so its performance in terms of number of quantum experiments is significantly worse compared to the previous approach.


The above two considerations lead to the natural question whether it is possible to combine both approaches, i.e., whether with $n$ calls to $U_{\P}$ and $n'$ calls to $\pora_X$, it is possible to obtain an estimate $\widetilde{\mu}$ that satisfies $\norm{\widetilde{\mu} - \mu}_{\infty} = \widetilde{O}(\max\{\sqrt{d}/n, d^{3/2}/n'\})$. In this section, we show that this is indeed possible, and that one can even shave off a factor of $\sqrt{d}$ in the second branch of the maximum. This result is displayed in Theorem~\ref{Thm:AnalogEstimatorEllInfty}.

Interestingly, the performance of this algorithm can only be shown to be optimal in the regime where $n \geq d$ and $n' \geq d$, which we refer to as the high-precision regime. In the low-precision regime, i.e., when either $n < d$ or $n' < d$, we spend some extra effort to characterize the optimal precision one can obtain. If $n' < d$, then the situation turns out to be simple, since one can only attain the trivial precision $\norm{\widetilde{\mu} - \mu}_{\infty} = \widetilde{O}(1)$, which can even be achieved without making any queries at all. On the other hand, if $n < d$ and $n' \geq d$, then a small modification of the algorithm is enough to attain an optimal scaling of $\norm{\widetilde{\mu} - \mu}_{\infty} = \widetilde{O}(1/\sqrt{n})$, up to polylogarithmic factors. The low-precision regime is included in the statement of Theorem~\ref{Thm:AnalogEstimatorLowPrecision}.

As a final note, we remark that one can also use our techniques to obtain quantum mean estimators with performance guarantees in other $\ell_p$-norms, i.e., where $p \in [1,\infty)$. All precision results follow directly from simple norm conversion, i.e., they are a multiplicative factor $\Theta(d^{1/p})$ worse compared to the $\ell_{\infty}$-case. We also show this to be optimal, up to polylogarithmic factors.

\subsection{Near-optimal multivariate mean estimator in the high-precision regime}

The main technical ingredient for the phase oracle setting is presented here, which is the construction of the multivariate mean estimator in $\ell_{\infty}$-norm, in the high-precision regime. Throughout, we let $G$ be the same grid as in the previous section, i.e.,
  \[G = \set*{\frac{j}{\nn} - \frac{1}{2} + \frac{1}{2\nn} : j \in \set{0,\dots,\nn-1}}^d \subset (-1/2,1/2)^d.\]
where $m$ is a number to be determined later. When we write $u \sim G$, we mean that $u$ is taken uniformly over all elements of $G$, and again we let $\Hil_G$ be the Hilbert space spanned by mutually orthogonal computational basis states $\ket{u}$, for all $u \in G$.

We start by showing how one can use the phase oracle $\pora_X$ to compute the inner product between any vector $u \in G$ and the outcome of the random variable $X(\omega)$, and prepare the result as a phase rotation. The techniques used here are similar to those exhibited in the proof of Proposition~\ref{Prop:PhaseExp}.

\begin{lemma}[{\sc Directional phase oracle}]\label{lem:directional-phase-oracle}
	Let $d \in \N$, $\eps \in (0,1)$, $m \geq 0$, and $X$ a random variable bounded by $[-1/4,1/4]^d$. Then there exists an operator $\mathcal{L}_{X,m,\eps} : \ket{u}\ket{\omega}\qub \mapsto \ket{u}\ket{\omega}\ket{\varphi_{u,\omega}}$ acting on $\Hil_G \otimes \Hil_{\Omega} \otimes \Hila$, that can be implemented using $O((m + \log(1/\eps))\log(m/\eps))$ queries to $\pora_X$, and such that
	\[\norm*{\ket{\varphi_{u,\omega}} - e^{i\frac{m}{d}\inp{u}{X(\omega)}}\qub} \leq \eps.\]
\end{lemma}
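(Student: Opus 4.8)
The plan is to build the directional phase oracle $\mathcal{L}_{X,m,\eps}$ by combining two off-the-shelf conversion primitives with the observation that $\inp{u}{X(\omega)} = \sum_{j \in [d]} u_j X(\omega)_j$ is a $d$-term sum whose each summand is accessible through $\pora_X$. The starting point is the phase oracle $\pora_X : \ket{\omega}\ket{j} \mapsto e^{iX(\omega)_j}\ket{\omega}\ket{j}$, which, because $\norm{X}_\infty \leq 1/4$, has its phases in a range bounded away from $0$ and $\pi$ after an innocuous shift.

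\textbf{Step 1: from phases to amplitudes.} First I would apply the Phase-to-Amplitude conversion of Lemma~\ref{Lem:PrO}. Since $X(\omega)_j \in [-1/4,1/4]$, the shifted quantity $1/2 + X(\omega)_j$ lies in $[1/4,3/4] \subset [\delta,1-\delta]$ with $\delta = 1/4$ a universal constant, so Lemma~\ref{Lem:PrO} turns $\pora_X$ (applied to a register holding $\ket{\omega}\ket{j}$) into an operator that prepares $\sqrt{1/2 + X(\omega)_j}\,\ket{\vec 0}\ket{0} + \sqrt{1/2 - X(\omega)_j}\,\ket{\psi}\ket{1}$ up to error $\eps'$ using $O(\log(1/\eps'))$ queries (the $1/\delta$ factor is absorbed into the constant).

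\textbf{Step 2: average over coordinates to encode the inner product.} Next I would prepend a register holding $\ket{u}$, and in a fresh coordinate register prepare the state $\frac{1}{\sqrt{d}}\sum_{j \in [d]} \ket{j}$ weighted so as to reflect the sign and size of $u_j$ --- concretely, build a small unitary that maps $\ket{u}\ket{\vec 0} \mapsto \ket{u}\bigl(\sum_j \sqrt{(1/2 + u_j)/d}\,\ket{j}\ket{+_j} + \dots\bigr)$, or more simply apply the controlled rotations conditioned on $u_j$ after the Step-1 block, exactly as in the $V_+$/$V_-$ splitting in the proof of Proposition~\ref{Prop:PhaseExp}. The net effect, after summing the amplitude of a designated flag qubit over the uniform superposition on $j$, is that the squared amplitude of the flag equals $\frac{1}{2} + \frac{1}{2d}\inp{u}{X(\omega)}$ (the constant $1/2$ is harmless and can be compensated by an extra fixed phase, or one can split into a positive and negative part as in Proposition~\ref{Prop:PhaseExp} so that the encoded quantity is exactly $\frac{1}{d}\inp{u}{X_+(\omega)}$, etc.). This gives a unitary $V$ of the form required by Lemma~\ref{Lem:PO}, with $p_{u,\omega}$ proportional to $\inp{u}{X(\omega)}/d$, implemented with $O(\log(1/\eps'))$ queries to $\pora_X$.

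\textbf{Step 3: from amplitudes back to a phase.} Finally I would apply the Amplitude-to-Phase conversion of Lemma~\ref{Lem:PO} to $V$ with parameter $t = \Theta(m)$ and accuracy $\eps/2$, producing $\ket{u}\ket{\omega}\qub \mapsto \ket{u}\ket{\omega}\ket{\varphi_{u,\omega}}$ with $\norm{\ket{\varphi_{u,\omega}} - e^{it p_{u,\omega}}\qub} \leq \eps/2$, i.e. $e^{i\frac{m}{d}\inp{u}{X(\omega)}}$ up to the chosen constants. This costs $O(t + \log(1/\eps)) = O(m + \log(1/\eps))$ applications of $V$, and since each application of $V$ costs $O(\log(1/\eps'))$ queries to $\pora_X$ with $\eps'$ chosen as a small enough polynomial in $\eps/m$ so that the accumulated error (across Steps 1--3, via the triangle inequality) stays below $\eps$, the total query count is $O\bigl((m+\log(1/\eps))\log(m/\eps)\bigr)$, matching the statement.

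\textbf{Main obstacle.} The delicate point is error propagation: the Phase-to-Amplitude step introduces an additive $\eps'$ error in $\sqrt{p_{u,\omega}}$ rather than in $p_{u,\omega}$ itself, and this error then gets multiplied by $t = \Theta(m)$ inside the Amplitude-to-Phase step (just as in Proposition~\ref{Prop:PhaseExp}, where the analogous $\abs{\sqrt{p_u} - \sqrt{\cdot}}$ term is scaled by $4mL'_2$). So I must choose $\eps' = \Theta(\eps/m)$ (up to a constant), verify that $\abs{t(\sqrt{p} + \sqrt{p'})} = O(m)$ so that $\abs{t p - t p'} \lesssim m \cdot \eps' \lesssim \eps$, and keep track of the fact that Lemma~\ref{Lem:PrO}'s cost scales as $\log(1/\eps')/\delta$ --- harmless since $\delta$ is a constant, giving the claimed $\log(m/\eps)$ factor. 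A secondary bookkeeping point is handling the $+1/2$ offset and the sign of $u_j$ cleanly; the cleanest route, as noted, is to mimic the $\pora_+\pora_-$ decomposition of Proposition~\ref{Prop:PhaseExp} so that what is actually amplitude-encoded is a genuinely nonnegative quantity on each branch, and the two branches recombine to give $e^{i\frac{m}{d}\inp{u}{X(\omega)}}$ exactly.
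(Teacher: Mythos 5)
Your proposal follows essentially the same chain as the paper's proof of Lemma~\ref{lem:directional-phase-oracle}: apply Phase-to-Amplitude (Lemma~\ref{Lem:PrO}, with $\delta = 1/4$) to the offset phase $1/2 + X(\omega)_j$, do a query-free weighted averaging over $j$ using a superposition determined by $u$ to amplitude-encode $\frac{1}{d}\inp{u}{\vec{1}/2 + X(\omega)}$, convert back via Amplitude-to-Phase (Lemma~\ref{Lem:PO}) with $t = m$, and handle signs by the $\mathcal{L}_+\mathcal{L}_-^\dagger$ decomposition. One small point to tighten: the paper splits on $u$ (via $u^{(\pm)}_j$) rather than on $X$, because it is the amplitude weights $u_j^{(+)}/d$ in the $j$-superposition that must be nonnegative, and the averaging step yields a bound on $\abs{p_{u,\omega} - \cdot}$, not on $\abs{\sqrt{p_{u,\omega}} - \cdot}$; applying $\abs{\sqrt{a}-\sqrt{b}}\leq\sqrt{\abs{a-b}}$ afterward forces the Phase-to-Amplitude precision to be $\Theta\pt{(\eps/m)^2}$ rather than the $\Theta(\eps/m)$ you wrote, though this changes nothing asymptotically since the $\log(m/\eps)$ factor absorbs the extra constant.
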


\begin{proof}
	For every $u \in G$, we let $u^{(+)},u^{(-)} \in \R^d$ be defined as $u^{(+)}_j = \max\{u_j,0\}$ and $u^{(-)}_j = -\min\{u_j,0\}$, for all $j \in [d]$. Note $u = u^{(+)} - u^{(-)}$, and hence $\inp{u}{X(\omega)} = \inp{u^{(+)}}{X(\omega)} - \inp{u^{(-)}}{X(\omega)}$. Thus, it suffices to implement operations $\mathcal{L}_+ : \ket{u}\ket{\omega}\qub \mapsto \ket{u}\ket{\omega}\ket{\varphi_{+,\omega,u}}$ and $\mathcal{L}_- : \ket{u}\ket{\omega}\qub \mapsto \ket{u}\ket{\omega}\ket{\varphi_{-,u,\omega}}$ that satisfy
	\[\norm*{\ket{\varphi_{+,u,\omega}} - e^{i\frac{m}{d}\inp{u^{(+)}}{X(\omega)}}\qub} \leq \frac{\eps}{2}, \qquad \text{and} \qquad \norm*{\ket{\varphi_{-,u,\omega}} - e^{i\frac{m}{d}\inp{u^{(-)}}{X(\omega)}}\qub} \leq \frac{\eps}{2},\]
	because then $\mathcal{L}_{X,m,\eps} = \mathcal{L}_+\mathcal{L}_-^{\dagger}$. We now proceed to show how to implement $\mathcal{L}_+$, and omit the construction of $\mathcal{L}_-$ since it is completely analogous.

	First, we note that by adding a global phase to every call of $\pora_X$, and some local rotation on the control qubit at every controlled call of $\pora_X$, we can just as well implement the operation
	\[\ket{\omega}\ket{j} \mapsto e^{i\left(\frac12 + X(\omega)_j\right)}\ket{\omega}\ket{j}.\]

	Next, we turn this operation into a probability oracle acting on $\Hil_{\Omega} \otimes \C^d \otimes (\C^2)^{\otimes(k+1)}$, using Lemma~\ref{Lem:PrO}, with $\delta = 1/4$, and precision $\eps^2/(64m^2)$. This implements the operation $V_+ : \ket{\omega}\ket{j}\qub\ket{0} \mapsto \ket{\omega}\ket{j}(\sqrt{1-p_{\omega,j}} \qub\ket{0} + \sqrt{p_{\omega,j}} \ket{\psi}\ket{1})$, for some state $\ket{\psi}$, using $O(\log(m/\eps))$ calls to $\pora_X$, such that
	\[\left|\sqrt{p_{\omega,j}} - \sqrt{\frac12 + X(\omega)_j}\right| \leq \frac{\eps^2}{64m^2}.\]

	Next, we can implement the following operation without any queries,
	\[\ket{u}\ket{\omega}\ket{0}\qub \mapsto \ket{u}\ket{\omega} \left(\sum_{j=1}^d \sqrt{\frac{u_j^{(+)}}{d}}\ket{j} + \sqrt{1-\frac{\norm{u^{(+)}}_1}{d}}\ket{0}\right)\qub,\]
	which is a valid operation since $\norm{u^{(+)}}_1/d \leq 1/2 < 1$. By using next one call to~$V_+$, we implement the operation $W_+ : \ket{u}\ket{\omega}\qub\ket{0} \mapsto \ket{u}\ket{\omega}\left(\sqrt{p_{u,\omega}}\ket{\psi_{u,\omega}^1}\ket{1} + \sqrt{1-p_{u,\omega}}\ket{\psi_{u,\omega}^0}\ket{0}\right)$, where $\ket{\psi_{u,\omega}^0}$ and $\ket{\psi_{u,\omega}^1}$ are unit vectors, with a single call to $V_+$, such that
	\begin{align*}
		&\left|p_{u,\omega} - \frac1d\inp*{u^{(+)}}{\frac{\vec{1}}{2} + X(\omega)}\right| = \left|\sum_{j=1}^d \frac{u_j^{(+)}}{d} p_{\omega,j} - \sum_{j=1}^d \frac{u_j^{(+)}}{d} \left(\frac12 + X(\omega)_j\right)\right| \\
		& \ \leq \sum_{j=1}^d \frac{u_j^{(+)}}{d} \left|p_{\omega,j} - \left(\frac12 + X(\omega)_j\right)\right| = \sum_{j=1}^d \frac{u_j^{(+)}}{d} \left|\sqrt{p_{\omega,j}} - \sqrt{\frac12 + X(\omega)_j}\right| \cdot \left|\sqrt{p_{\omega,j}} + \sqrt{\frac12 + X(\omega)_j}\right| \\
		& \ \leq 2\frac{\norm{u^{(+)}}_1}{d} \cdot \frac{\eps^2}{64m^2} \leq \frac{\eps^2}{64m^2}.
	\end{align*}
	Furthermore, for all $a,b > 0$, we have that $|\sqrt{a} + \sqrt{b}| \geq \max\{\sqrt{a},\sqrt{b}\} = \sqrt{\max\{a,b\}} \geq \sqrt{|a - b|}$, and hence
	\[\left|\sqrt{a} - \sqrt{b}\right| = \frac{|a - b|}{\left|\sqrt{a} + \sqrt{b}\right|} \leq \frac{|a-b|}{\sqrt{|a-b|}} = \sqrt{|a-b|},\]
	which implies that
	\begin{equation}
		\label{eq:poss-amplification}
		\left|\sqrt{p_{u,\omega}} - \sqrt{\frac1d\inp*{u^{(+)}}{\frac{\vec{1}}{2} + X(\omega)}}\right| \leq \sqrt{\left|p_{u,\omega} - \frac1d\inp*{u^{(+)}}{\frac{\vec{1}}{2} + X(\omega)}\right|} \leq \frac{\eps}{8m}.
	\end{equation}

	Next, we turn the operation $W_+$ back into a phase oracle using the amplitude-to-phase conversion algorithm of Lemma~\ref{Lem:PO}, with $t = m$ and accuracy $\eps/4$, whereby we implement the unitary $\pora : \ket{u}\qub \mapsto \ket{u}\ket{\psi_{u,\omega}}$ using $O(m + \log(1/\eps))$ applications of $W_+$, which by the triangle inequality satisfies
	\begin{align*}
		&\norm*{\ket{\psi_{u,\omega}} - e^{i\frac{m}{d}\inp{u^{(+)}}{\frac{\vec{1}}{2} + X(\omega)}}\qub} \leq \norm{\ket{\psi_{u,\omega}} - e^{imp_{u,\omega}}\qub} + \left|e^{imp_{u,\omega}} - e^{i\frac{m}{d}\inp{u^{(+)}}{\frac{\vec{1}}{2} + X(\omega)}}\right| \\
		&\qquad \leq \frac{\eps}{4} + \left|m p_{u,\omega} - \frac{m}{d}\inp*{u^{(+)}}{\frac{\vec{1}}{2} + X(\omega)}\right| \\
		&\qquad = \frac{\eps}{4} + m \left|\sqrt{p_{u,\omega}} - \sqrt{\frac1d\inp*{u^{(+)}}{\frac{\vec{1}}{2} + X(\omega)}}\right| \cdot \left|\sqrt{p_{u,\omega}} + \sqrt{\frac1d\inp*{u^{(+)}}{\frac{\vec{1}}{2} + X(\omega)}}\right| \\
		&\qquad \leq \frac{\eps}{4} + 2m \cdot \frac{\eps}{8m} = \frac{\eps}{2}.
	\end{align*}

	Finally, just like at the start of this proof, we can get rid of the extra global phase $m\inp{u^{(+)}}{\vec{1}/2}$ by adding a phase gate to the control qubit whenever we call $\pora$ in a controlled manner. The resulting operation implements $\mathcal{L}_+$.

	It remains to check the number of calls to $\pora_X$ we made throughout this proof. The number of calls to $W_+$ is $O(m + \log(1/\eps))$, each of which makes $1$ call to $V_+$, which again performs $O(\log(m/\eps))$ calls to $\pora_X$. Thus, the total number of calls to $\pora_X$ amounts to $O((m + \log(1/\eps))\log(m/\eps))$. This completes the proof.
\end{proof}

One important subtlety that is a possible source for confusion is that in Equation~(\ref{eq:poss-amplification}), the thing that $p_{u,\omega}$ approximates is $\inp{u^{(+)}}{\vec{1}/2 + X(\omega)}/d$, and not $\inp{u}{\vec{1}/2 + X(\omega)}/d$. From Lemma~\ref{Lem:Truncate}, we know that the typical value of the latter would be $\norm{\vec{1}/2 + X(\omega)}_2/d \leq 1/\sqrt{d}$, and hence if we were approximating this, we could amplify away this subnormalization of $1/\sqrt{d}$ before converting everything back into a phase oracle. We cannot use this trick, however, since the typical value of $\inp{u^{(+)}}{\vec{1}/2 + X(\omega)}$ can be much bigger than $1/\sqrt{d}$. In fact, our optimality results later on in this section show that there is indeed no way to circumvent this.

Next, we show how the directional phase oracle we constructed in Lemma~\ref{lem:directional-phase-oracle} can be used to construct a directional means oracle, in a similar spirit as in Proposition~\ref{Prop:PhaseExp}. This is the objective of the following Lemma.

\begin{lemma}[{\sc Directional mean oracle constructed from phase oracle queries}]\label{Lem:AnalogDirectionalMeanOracle}
	Let $d \in \N$, $\eps,\eta \in (0,1)$, $m \geq \eps/(6\sqrt{d})$, and $X$ a random variable bounded by $[-1/4,1/4]^d$. There exists a unitary operator $\widetilde{\pora}_{X,m,\eta,\eps} : \ket{u}\ket{0} \mapsto \ket{u}\ket{\varphi_u}$ acting on $\Hil_G \otimes \Hila$ that can be implemented using $\widetilde{O}(\sqrt{d}m\log^2(1/(\eps\eta)))$ quantum experiments and $\widetilde{O}(dm\log^4(1/(\eps\eta)))$ queries to $\pora_X$, and such that
	\[\norm*{\ket{\varphi_u} - e^{i \nn\inp{u}{\ex{X}}}\qub} \leq \eps,\]
	for a fraction at least $1-\eta/2$ of all $u \in G$.
\end{lemma}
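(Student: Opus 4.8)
The plan is to lift the proof of Proposition~\ref{Prop:PhaseExp} to the phase oracle setting, using Lemma~\ref{lem:directional-phase-oracle} as the black box that encodes $\inp{u}{X(\omega)}/d$ into a phase instead of constructing the amplitude encoding by hand. First I would build, for each $u$, the state
\[
\frac{1}{\abs{\Omega}^{1/2}} \cdot U_{\P}\text{-superposition} \quad\longmapsto\quad \sum_{\omega} \sqrt{\P(\omega)} \ket{u}\ket{\omega}\ket{\varphi_{u,\omega}},
\]
where $\ket{\varphi_{u,\omega}} \approx e^{i \frac{m'}{d}\inp{u}{X(\omega)}}\qub$ comes from applying $\mathcal{L}_{X,m',\eps'}$ with a suitably chosen internal parameter $m'$ and precision $\eps'$. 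This is one quantum experiment plus $O((m' + \log(1/\eps'))\log(m'/\eps'))$ phase oracle queries. The subtlety here is that $\mathcal{L}$ produces a phase $e^{i\frac{m'}{d}\inp{u}{X(\omega)}}$ living on an auxiliary register \emph{controlled by $\omega$}, whereas I want the \emph{mean} $\nn\inp{u}{\ex{X}} = \nn\sum_\omega \P(\omega)\inp{u}{X(\omega)}$ in the phase of $\ket{u}$ alone. To convert "average over $\omega$ of a phase" into "a phase proportional to the average", I would again go through the probability/amplitude picture: use $\mathcal{L}$ to produce (via a further Lemma~\ref{Lem:PrO} step, or by reading off one qubit of $\mathcal{L}$'s output) a probability oracle whose success amplitude is $\approx \sqrt{\tfrac12 + \tfrac1{2d}\inp{u}{X(\omega)}}$ or similar, uncompute $\omega$ to collapse the $\P$-superposition into an expectation $\tfrac12 + \tfrac1{2d}\inp{u}{\ex X}$ sitting in a single amplitude, and then apply Lemma~\ref{Lem:PO} (amplitude-to-phase) one last time with conversion factor chosen so the net phase is exactly $\nn\inp{u}{\ex X}$.

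The bookkeeping of the two parameters is the crux. I need the \emph{output} phase multiplier to be $\nn$ (the grid scale), but the \emph{internal} $\mathcal{L}$-multiplier $m'$ only controls the precision of the directional-phase primitive, and the amplitude-to-phase step contributes its own multiplicative factor $t$. Concretely: after uncomputing $\omega$ I have an amplitude encoding a number $q_u$ with $\abs{q_u - (\tfrac12 + \tfrac1{2d}\inp{u}{\ex X})} \le \eps'$; applying Lemma~\ref{Lem:PO} with $t = \Theta(\nn d)$ turns this into the phase $e^{i t q_u} = e^{i\Theta(\nn d)/2}\, e^{i\nn\inp{u}{\ex X}/2}$, and the spurious constant phase $e^{i\Theta(\nn d)/2}$ is $u$-independent and removable by a controlled phase gate exactly as in the last step of Lemma~\ref{lem:directional-phase-oracle}. (The factor of $2$ discrepancies are absorbed by rescaling; I would fix the exact constants when writing it out.) Propagating errors through the two conversions (each contributing something like $t \cdot \eps'$ or $\sqrt{\eps'}$ as in Equation~(\ref{eq:poss-amplification})) forces $\eps' = \wom{\eps/(\nn d)}$, i.e.\ $\log(1/\eps') = \wbo{\log(1/\eps) + \log(\nn d)}$, which only costs polylog factors — this is where the $\log^2, \log^4$ powers and the tilde in $\wbo{\cdot}$ come from.

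For the query count: the amplitude-to-phase step (Lemma~\ref{Lem:PO}) with $t = \Theta(\nn d)$ makes $\wbo{\nn d}$ calls to the inner probability oracle; the $U_{\P}$-uncomputation means each such call invokes $U_\P$ a constant number of times, giving $\wbo{\nn d}$ quantum experiments \emph{before} one more refinement. But the stated bound is $\wbo{\sqrt d \, \nn}$ experiments, so a \emph{linear amplitude amplification} step (Proposition~\ref{Prop:AA-SVT}) must go in between: one amplifies the subnormalized amplitude $\approx \tfrac1{\sqrt d}\norm{\cdot}$ up by a factor $t \approx \sqrt d$ \emph{before} converting to a phase — but crucially, as the paper's own remark after Lemma~\ref{lem:directional-phase-oracle} warns, this is only legitimate for the \emph{symmetric} quantity $\inp{u}{\vec 1/2 + X(\omega)}/d$ whose typical size is $1/\sqrt d$, which is exactly what appears after the $u = u^{(+)} - u^{(-)}$ splitting inside $\mathcal L$. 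So I would instead do the amplification \emph{inside} the construction of the probability oracle for $\tfrac12 + \tfrac1{2d}\inp{u}{X(\omega)}$ (not after uncomputing $\omega$), lifting it to $\Theta(\tfrac1{\sqrt d} + \tfrac1{2\sqrt d}\inp{u}{\ex X}/\norm{\cdots})$-scale, which reduces the amplitude-to-phase cost by $\sqrt d$ in the $U_\P$ count but not in the $\pora_X$ count (since each amplified $V_+$ still calls $\pora_X$ the same number of times up to the $\Theta(\sqrt d)$ amplification overhead) — giving the asymmetric $\wbo{\sqrt d\, m}$ vs.\ $\wbo{d\, m}$ in the statement.

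Finally, the "fraction $1 - \eta/2$ of all $u$" clause: as in Proposition~\ref{Prop:PhaseExp}, I would invoke the second part of Lemma~\ref{Lem:Truncate} with $\alpha = \eta$ to say that for all but an $\eta/2$-fraction of $u \in G$ the relevant truncation (here: the event that the amplified amplitude exceeds the $1/2$ cap in Proposition~\ref{Prop:AA-SVT}, driven by $\abs{\inp{u}{\vec 1/2 + X}} > \Theta(\sqrt d)$) does not occur, so the amplification and the conversions behave as analyzed; on the exceptional $u$ the output is arbitrary, which is harmless because the downstream Algorithm (the QFT-based recovery) already tolerates an $\eta$-fraction of corrupted registers. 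The main obstacle I anticipate is \textbf{not} any single inequality but getting the \emph{two} amplification/conversion passes composed in the right order so that (i) the $\sqrt d$ subnormalization is amplified against the symmetric quantity only, (ii) the $\omega$-register is uncomputed at the right moment so that "average of amplitudes" becomes "amplitude of average", and (iii) the final phase multiplier lands on exactly $\nn$ with the spurious global phases provably $u$-independent. Once that scaffolding is fixed, the error propagation and query counting are routine, analogous to the proof of Lemma~\ref{lem:directional-phase-oracle}.
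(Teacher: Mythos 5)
You have the right scaffolding (directional phase oracle, phase-to-amplitude conversion, prepend a quantum experiment to average over $\omega$, amplitude-to-phase conversion, strip a $u$-independent global phase), but you miss the single idea that produces the asymmetric $\widetilde{O}(\sqrt{d}m)$ vs.\ $\widetilde{O}(dm)$ costs, and the fix you reach for instead does not work.

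The paper never invokes linear amplitude amplification (Proposition~\ref{Prop:AA-SVT}) in this proof. Instead it exploits the free parameter inside Lemma~\ref{lem:directional-phase-oracle}: it instantiates the directional phase oracle with an \emph{internal} multiplier $m' = \widetilde{\Theta}(\sqrt{d})$ rather than $m'=1$, so the encoded phase $\tfrac{m'}{d}\inp{u}{X(\omega)}$ is a factor $m'$ larger. This is permissible because, by the first part of Lemma~\ref{Lem:Truncate} (Hoeffding), $\abs{\inp{u}{X(\omega)}} \lesssim \norm{X(\omega)}_2 \leq \sqrt{d}/4$ for most $u$, so the phase stays in a constant band only as long as $m' = O(\sqrt{d})$. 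The cost accounting then splits cleanly: each call to $\mathcal{L}_{X,m',\eps'}$ costs $\widetilde{O}(m')$ queries to $\pora_X$ and one quantum experiment, and the outer amplitude-to-phase step (Lemma~\ref{Lem:PO}) needs $t = dm/m' = \widetilde{O}(\sqrt{d}m)$ calls to it. Total: $t \cdot m' = \widetilde{O}(dm)$ phase-oracle queries (independent of $m'$), and $t = \widetilde{O}(\sqrt{d}m)$ quantum experiments (decreasing in $m'$). Your proposal fixes $t = \Theta(dm)$ and then tries to shave a $\sqrt{d}$ with amplitude amplification, but the relevant amplitude here is $\sqrt{1/2 + (m'/d)\inp{u}{X(\omega)}} = \Theta(1)$ --- it is not subnormalized, so there is nothing to amplify; and the paper's remark after Lemma~\ref{lem:directional-phase-oracle} already explains that the only candidate quantity to amplify, $\inp{u^{(+)}}{\vec{1}/2 + X(\omega)}/d$, is $\Theta(1)$ rather than $\Theta(1/\sqrt{d})$ because of the $u^{(+)}$ splitting. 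Moving the amplification ``inside the probability oracle'' does not resolve this; it is the same obstruction. Indeed the paper notes that the lower bounds in Section~\ref{Sec:LowerAnalog} rule out any such saving.

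Two smaller points. First, the claim that the $(1-\eta/2)$-fraction comes from the second part of Lemma~\ref{Lem:Truncate} with $\alpha = \eta$ is incorrect; the paper applies the first (Hoeffding) part of that lemma \emph{for each fixed $\omega$}, bounds the $u$-probability of the bad event $4m'\abs{\inp{u}{X(\omega)}} > d$, averages over $\omega$ weighted by $\P$, and then pigeonholes over $u$ to extract the $1-\eta/2$ fraction. Second, your ``uncompute $\omega$'' language is slightly misleading: the paper does not uncompute $\omega$; it prepends $U_{\P}$ so that the resulting operator has the right success probability $p_u = \sum_\omega \P(\omega) p_{u,\omega}$ averaged over $\omega$, exactly as in the binary-oracle Proposition~\ref{Prop:PhaseExp}, after which the $\omega$ register just rides along in $\ket{\psi_u^0},\ket{\psi_u^1}$.
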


\begin{proof}
	Let $K_1,K_2 > 0$ be constants to be fixed later. By setting
	\[m' = \sqrt{\frac{d}{\log\left(\frac{144dm^2\left(\frac12 + \sqrt{d}\right)}{\eps^2\eta}\right)}}, \qquad \text{and} \qquad \eps' = \frac{1}{4K_1\log\left(\frac{m\sqrt{d}}{\eps\eta}\right) \cdot K_2\left(\frac{m\sqrt{d}}{\eta} + \log\left(\frac{1}{\eps}\right)\right)}\]
	in Lemma~\ref{lem:directional-phase-oracle}, we can implement a directional phase oracle, i.e., an operation that acts as $\ket{u}\ket{\omega}\qub \mapsto \ket{u}\ket{\omega}\ket{\chi_{u,\omega}}$, such that
	\[\norm*{\ket{\chi_{u,\omega}} - e^{i\frac{m'}{d}\inp{u}{X(\omega)}}\qub} \leq \eps',\]
	with $O((m' + \log(1/\eps')) \cdot \log(m'/\eps'))$ calls to $\pora_X$.

	Without incurring any extra overhead or error, we can also implement the operation $\ket{u}\ket{\omega}\qub \mapsto \ket{u}\ket{\omega}\ket{\psi_{u,\omega}}$, such that
	\[\norm*{\ket{\psi_{u,\omega}} - e^{i\left(\frac12 + \frac{m'}{d}\inp{u}{X(\omega)}\right)}\qub} \leq \eps',\]
	since we can always apply some $Z$-rotation with angle $1/2$ to the control qubit if we want to implement this mapping in a controlled fashion.

	Next, using Lemma~\ref{Lem:PrO} with $\delta = 1/4$ and precision $(m')^2\eps^2/(144d^2m^2)$, we can turn the above operation into a probability oracle, acting as $V_+ : \ket{u}\ket{\omega}\qub\ket{0} \mapsto \ket{u}\ket{\omega}\ket{\varphi_{u,\omega}}$, with $C_1 = O(\log(dm/(m'\eps)))$ calls to the directional phase oracle, and we let $K_1$ be the constant suppressed by the big-$O$-notation. It follows that
	\[\norm*{\ket{\varphi_{u,\omega}} - \sqrt{1-p_{u,\omega}}\qub\ket{0} - \sqrt{p_{u,\omega}}\ket{\psi}\ket{1}} \leq C_1\eps',\]
	and
	\begin{equation}
		\label{eq:approx}
		\left|\sqrt{p_{u,\omega}} - \sqrt{\frac12 + \frac{m'}{d}\inp*{u}{X(\omega)}}\right| \leq \frac{(m')^2\eps^2}{144d^2m^2}, \qquad \text{if } 4m'\left|\inp*{u}{X(\omega)}\right| \leq d.
	\end{equation}
	Let $B_{u,\omega} \in \{0,1\}$ be $1$ whenever $4m'\left|\inp*{u}{X(\omega)}\right| > d$. Then for all $\omega \in \Omega$, we have using Lemma~\ref{Lem:Truncate},
	\[\Pr_{u \sim G} \left[B_{u,\omega}\right] = \Pr_{u \sim G} \left[4m'\left|\inp{u}{X(\omega)}\right| > d\right] \leq \Pr_{u \sim G} \left[\frac{m'}{\sqrt{d}}\left|\inp{u}{X(\omega)}\right| > \norm{X(\omega)}_2\right] \leq 2e^{-\frac{2d}{(m')^2}},\]
	and using $xe^{-x} \leq e^{-x/2}$, for all $x \geq 0$, we find
	\begin{align*}
		\Pr_{u \sim G} \left[B_{u,\omega}\right] &\leq 2 \cdot \frac{(m')^2}{2d} \cdot \frac{2d}{(m')^2}e^{-\frac{2d}{(m')^2}} \leq \frac{(m')^2}{d}e^{-\frac{d}{(m')^2}} \\
		&\leq  \frac{(m')^2}{d}e^{-\log\left(\frac{144dm^2\left(\frac12 + \sqrt{d}\right)}{\eps^2\eta}\right)} = \frac{(m')^2\eps^2\eta}{144d^2m^2\left(\frac12 + \sqrt{d}\right)}.
	\end{align*}
	Thus, by averaging over all $\omega$'s, we find that
	\[\frac{1}{|G|} \sum_{u \in G} \sum_{\omega \in \Omega} \P(\omega)B_{u,\omega} \leq \frac{(m')^2\eps^2\eta}{144d^2m^2\left(\frac12 + \sqrt{d}\right)},\]
	and hence by the pigeonhole principle, we have that for at least a $(1-\eta/2)$-fraction of $u \in G$,
	\[\sum_{\omega \in \Omega}\P(\omega)B_{u,\omega} \leq \frac{(m')^2\eps^2}{72d^2m^2\left(\frac12 + \sqrt{d}\right)},\]
	i.e., for a $(1-\eta/2)$-fraction of $u \in G$, the probability of sampling an $\omega$ such that the approximation from Equation~(\ref{eq:approx}) holds is lower bounded by the right-hand side of the above equation.

	Next, we prepend $V_+$ with the mapping $\ket{u}\ket{0}\qub\ket{0} \mapsto \ket{u} \sum_{\omega \in \Omega} \sqrt{\P(\omega)}\ket{\omega}\qub \ket{0}$, which can be implemented with one quantum experiment.
	The combined operation performs the mapping $W_+ : \ket{u}\qub\ket{0} \mapsto \ket{u}\ket{\psi_u}$, with one call to $V_+$, where
	\[\norm*{\ket{\psi_u} - \sqrt{1-p_u}\ket{\psi_u^0}\ket{0} + \sqrt{p_u}\ket{\psi_u^1}\ket{1}} \leq C_1\eps',\]
	for some states $\ket{\psi_u^0}$ and $\ket{\psi_u^1}$, and such that for at least a $(1-\eta/2)$-fraction of $u \in G$,
	\begin{align*}
		&\left|p_u - \ex*{\frac12 + \frac{m'}{d} \inp{u}{X}}\right| = \left|\sum_{\omega \in \Omega} \P(\omega) p_{u,\omega} - \sum_{\omega \in \Omega} \P(\omega)\left(\frac12 + \frac{m'}{d} \inp{u}{X(\omega)}\right)\right| \\
		&\qquad \leq \sum_{\omega \in \Omega} \P(\omega) \left|p_{u,\omega} - \left(\frac12 + \frac{m'}{d}\inp{u}{X(\omega)}\right)\right| \\
		&\qquad \leq \sum_{\omega \in \Omega} \P(\omega)B_{u,\omega} \cdot \left[\frac12 + m'\right] + \sum_{\omega \in \Omega} \P(\omega)(1-B_{u,\omega}) \left|\sqrt{p_{u,\omega}} - \sqrt{\frac12 + \frac{m'}{d}\inp{u}{X(\omega)}}\right| \cdot 2\\
		&\qquad \leq \frac{(m')^2\eps^2\left(\frac12 + m'\right)}{72d^2m^2\left(\frac12+\sqrt{d}\right)} + 2 \cdot \frac{(m')^2\eps^2}{144d^2m^2} \leq \frac{(m')^2\eps^2}{36d^2m^2},
	\end{align*}
	where in the last line we used that $m' \leq \sqrt{d}$. We conclude that for at least a $(1-\eta/2)$-fraction of $u \in G$,
	\[\left|\sqrt{p_u} - \sqrt{\frac12 + \frac{m'}{d}\inp{u}{\ex{X}}}\right| \leq \sqrt{\left|p_u - \ex*{\frac12 + \frac{m'}{d} \inp{u}{X}}\right|} \leq \frac{m'\eps}{6dm},\]
	following a same argument as in the proof of the previous lemma.

	Then, we convert the resulting operation $W_+$ back into a phase oracle, using Lemma~\ref{Lem:PO} with precision $\eps/4$, and $t = dm/m'$. Then, the resulting operation performs $\ket{u}\ket{0} \mapsto \ket{u}\ket{\chi_u}$ with $C_2 = O(dm/m' + \log(1/\eps))$ calls to $W_+$, and we let $K_2$ be the constant suppressed by the big-$O$-notation. Then we find, by the triangle inequality, that for at least a $(1-\eta/2)$-fraction of $u \in G$,
	\begin{align*}
		&\norm*{\ket{\chi_u} - e^{i\frac{dm}{m'}\left(\frac12 + \frac{m'}{d}\inp{u}{\ex{X}}\right)}\qub} \\
		&\qquad \leq C_1C_2\eps' + \norm*{\ket{\chi_u} - e^{i\frac{dm}{m'}p_u}\qub} + \left|e^{i\frac{dm}{m'}p_u} - e^{i\frac{dm}{m'}\left(\frac12 + \frac{m'}{d}\inp{u}{\ex{X}}\right)}\right| \\
		&\qquad \leq \frac{\eps}{4} + \frac{\eps}{4} + \left|\frac{dm}{m'}p_u - \frac{dm}{m'}\left(\frac12 + \frac{m'}{d}\inp{u}{\ex{X}}\right)\right| \\
		&\qquad = \frac{\eps}{2} + \frac{dm}{m'}\left|\sqrt{p_u} - \sqrt{\frac{m'}{d}\inp{u}{\ex{X}}}\right| \cdot \left|\sqrt{p_u} + \sqrt{\frac{m'}{d}\inp{u}{\ex{X}}}\right| \\
		&\qquad \leq \frac{\eps}{2} + \frac{dm}{m'} \cdot \frac{m'\eps}{6dm} \cdot \left(2 + \frac{m'\eps}{6dm}\right) \leq \frac{\eps}{2} + \frac{\eps}{6} \cdot 3 = \eps,
	\end{align*}
	where we used that $m' \leq \sqrt{d}$ and $m \geq \eps/(6\sqrt{d})$ in the last inequality.

	Finally, we can remove the unnecessary global phase $dm/(2m')$ by applying some $Z$-rotation on any control qubit when we call the above operation in a controlled manner, which does not incur any additional overhead in terms of the number of queries or error. Thus, we have shown how to implement $\widetilde{P}_{X,m,\eta,\eps}$.

	It remains to check how many quantum experiments and queries to $\pora_X$ we have performed throughout its construction. Multiplying  the complexities appearing earlier in this proof together results in
	\[O\left(\left(\frac{dm}{m'} + \log\left(\frac{1}{\eps}\right)\right) \cdot \log\left(\frac{dm}{m'\eps}\right)\right) = \widetilde{O}\left(\sqrt{d}m\log^2\left(\frac{1}{\eps\eta}\right)\right)\]
	quantum experiments, and
	\[O\left(\left(\frac{dm}{m'} + \log\left(\frac{1}{\eps}\right)\right) \cdot \log\left(\frac{dm}{m'\eps}\right) \cdot \left(m' + \log\left(\frac{1}{\eps'}\right)\right) \cdot \log\left(\frac{m'}{\eps'}\right)\right),\]
	queries to $\pora_X$, which after substitution of $m'$ and $\eps'$ can be upper bounded by
	\[\widetilde{O}\left(dm\log^4\left(\frac{1}{\eps\eta}\right)\right).\]
	This completes the proof.
\end{proof}

Now, we are ready to put everything together, and provide a full construction of a multivariate quantum mean estimator using phase oracles. The core idea is to use the bounded multivariate estimator from Algorithm~\ref{Alg:MultiBoundedEst}, with slightly tweaked constants to accommodate for the slight differences in the guarantees we have on the precision of the directional means oracle. The full algorithm is presented in Algorithm~\ref{Alg:AnalogEstimatorEllInfty}.

\algobox{Alg:AnalogEstimatorEllInfty}{Multivariate mean estimator with phase oracles, $\proc{QPhase}_d\pt{X,n,n',\delta}$.}{
\begin{enumerate}[leftmargin=*]
	\item Set $k = \left\lfloor\min\left\{n,\frac{n'}{\sqrt{d}}\right\}\right\rfloor$, $\eta = \frac{1}{288}$ and $\nn = 2^{\ceil[\big]{\log\pt[\big]{\frac{8\pi k}{\sqrt{d} \log(d/\delta)}}}}$.
	\item For $\ell = 1,\dots,\ceil{18\log(d/\delta)}$:
	\begin{enumerate}
		\item Compute the uniform superposition $\ket{G} := \frac{1}{\nn^{d/2}} \sum_{u \in G} \ket{u}$ over $G$.\label{Step:startAn}
		\item Compute the state $\ket{\psi} := \td{\pora}_{X,\nn,\eta,\eps} \ket{G}\qub \in \Hil_G \otimes \Hila$, where $\td{\pora}_{X,\nn,\eta,\eps}$ is the directional means oracle constructed in Lemma~\ref{Lem:AnalogDirectionalMeanOracle} with $\eps = 1/(12\sqrt{2})$.
		\item Compute the state $\ket{\phi} := \pt{\mathrm{QFT}^{-1}_G \otimes \id_{\mathrm{aux}}}\ket{\psi}$ where the unitary  $\mathrm{QFT}_G : \ket{u} \mapsto \frac{1}{\nn^{d/2}} \sum_{v \in G} e^{2i\pi \nn \inp{u}{v}} \ket{v}$ is the quantum Fourier transform over~$G$.
		\item Measure the $\Hil_G$ register of $\ket{\phi}$ in the computational basis and let $\td{v}\super{\ell} \in G$ denote the obtained result. Set $\td{\mu}\super{\ell} = 2\pi\td{v}\super{\ell}$.\label{Step:measAn}
	\end{enumerate}
	\item Output the coordinate-wise median, $\td{\mu} = \median\pt{\td{\mu}\super{1},\dots,\td{\mu}\super{{\ceil{18\log(d/\delta)}}}}$.
\end{enumerate}
}

\begin{theorem}[{\sc High-precision multivariate mean estimator with phase oracles}]\label{Thm:AnalogEstimatorEllInfty}
	Let $d \in \N$, $\delta \in (0,1)$, $n \geq \log(d/\delta)$, $n' \geq \sqrt{d}\log(d/\delta)$, and $X$ a random variable bounded by $[-1/4,1/4]^d$, with $\mu = \ex{X}$. Then the multivariate mean estimator with phase oracles, $\proc{QPhase}_d(X,n,\delta)$ (Algorithm~\ref{Alg:AnalogEstimatorEllInfty}), finds an approximation to the mean, $\widetilde{\mu}$, that with probability at least $1-\delta$ satisfies
	\[\norm{\widetilde{\mu} - \mu}_{\infty} \leq \max\left\{\frac{\sqrt{d}}{n}, \frac{d}{n'}\right\} \cdot \log\left(\frac{d}{\delta}\right),\]
	with $\widetilde{O}(n)$ calls to $U_{\P}$ and $\widetilde{O}(n')$ calls to $\pora_X$.
\end{theorem}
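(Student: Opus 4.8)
The plan is to reuse, almost verbatim, the analysis behind Theorem~\ref{Thm:MultiBoundedEst}, with the directional mean oracle of Lemma~\ref{Lem:AnalogDirectionalMeanOracle} playing the role that Proposition~\ref{Prop:PhaseExp} plays in the binary-oracle setting; the only genuinely new task is to keep separate accounts of the two query budgets (calls to $U_{\P}$ versus calls to $\pora_X$). Before anything else I would dispose of the easy regime: if $\max\{\sqrt{d}/n,\,d/n'\}\log(d/\delta) \geq 1/4$ the statement is vacuous, since outputting $\td{\mu} = 0$ errs by at most $\norm{\mu}_{\infty} \leq 1/4$ because $X$ is valued in $[-1/4,1/4]^d$; otherwise $n > \sqrt{d}\log(d/\delta)$ and $n' > d\log(d/\delta)$, so $k = \lfloor\min\{n,n'/\sqrt{d}\}\rfloor$ is a positive integer and $\nn = 2^{\lceil\log(8\pi k/(\sqrt{d}\log(d/\delta)))\rceil}$ is a power of two with $\nn \in [\,8\pi k/(\sqrt{d}\log(d/\delta)),\,16\pi k/(\sqrt{d}\log(d/\delta))\,)$. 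In particular $\nn \geq 1 > \eps/(6\sqrt{d})$ for the chosen $\eps = 1/(12\sqrt{2})$, so the hypotheses of Lemma~\ref{Lem:AnalogDirectionalMeanOracle} are met, and the grid $G$ has $\abs{G} = \nn^d$ exactly as in Section~\ref{Sec:BoundedBinary}.

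Next I would analyze a single iteration $\ell$ of the loop, comparing the computed state $\ket{\psi}$ against the noise-free target $\ket{\psi''} = \nn^{-d/2}\sum_{u\in G} e^{i\nn\inp{u}{\ex{X}}}\ket{u}\qub$. Since $\td{\pora}_{X,\nn,\eta,\eps}$ is unitary the states $\ket{\varphi_u}$ it produces are unit vectors, so by Lemma~\ref{Lem:AnalogDirectionalMeanOracle} a $(1-\eta/2)$-fraction of them are within $\eps$ of the ideal phase and the rest within $2$, giving $\norm{\ket{\psi}-\ket{\psi''}}^2 \leq \eps^2 + 2\eta$, a small absolute constant with $\eps = 1/(12\sqrt{2})$ and $\eta = 1/288$. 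The key structural point is that the ideal phase factorizes, $e^{i\nn\inp{u}{\ex{X}}} = \prod_{j\in[d]} e^{i\nn u_j\ex{X}_j}$, so $\ket{\psi''}$ is a product state over the $d$ one-dimensional grid registers, and $\ex{X}_j/(2\pi) \in (-\tfrac{1}{8\pi},\tfrac{1}{8\pi}) \subset (-\tfrac12,\tfrac12)$; hence measuring $(\mathrm{QFT}^{-1}_G \otimes \id_{\mathrm{aux}})\ket{\psi''}$ reduces to $d$ independent one-dimensional phase estimations, each returning $\td{v}_j$ with $\abs{\td{v}_j - \ex{X}_j/(2\pi)} \leq 4/\nn$ with probability $\geq 5/6$ and no wrap-around (\cite[Lemma 5.1]{GAW19c}). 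Replacing $\ket{\psi''}$ by the true state $\ket{\psi}$ changes any outcome probability by at most $\norm{\ket{\psi}-\ket{\psi''}}$, so $\td{\mu}\super{\ell}_j = 2\pi\td{v}\super{\ell}_j$ lands within $8\pi/\nn$ of $\ex{X}_j$ with probability exceeding $1/2$ by an absolute constant — this is where the numerical values of $\eps,\eta$ are calibrated against the Chernoff step below.

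Third, I would boost: the $\lceil 18\log(d/\delta)\rceil$ iterations are mutually independent, so a Chernoff bound applied to a fixed coordinate shows the coordinate-wise median $\td{\mu}_j$ is within $8\pi/\nn$ of $\ex{X}_j$ except with probability $\leq \delta/d$, and a union bound over $j\in[d]$ yields $\norm{\td{\mu}-\mu}_{\infty} \leq 8\pi/\nn$ with probability $\geq 1-\delta$. Substituting $\nn \geq 8\pi k/(\sqrt{d}\log(d/\delta))$ and $k \geq \tfrac12\min\{n,n'/\sqrt{d}\}$ turns this into $\norm{\td{\mu}-\mu}_{\infty} = \bo{\max\{\sqrt{d}/n,\,d/n'\}\log(d/\delta)}$, matching the stated bound once the $\bo{1}$ constant is absorbed by fine-tuning the constants in $\nn$ and in the repetition count (not optimized here).

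Finally, for the query complexity I would separate the two oracles. Each of the $\wbo{1}$ iterations calls $\td{\pora}_{X,\nn,\eta,\eps}$ once (the preparation of $\ket{G}$ and the QFT over $G$ are query-free); by Lemma~\ref{Lem:AnalogDirectionalMeanOracle}, with $\eps,\eta$ now fixed constants, one such call costs $\wbo{\sqrt{d}\,\nn}$ quantum experiments and $\wbo{d\,\nn}$ queries to $\pora_X$. Using $\nn = \wbo{k/(\sqrt{d}\log(d/\delta))}$ together with $k \leq n$ gives $\wbo{\sqrt{d}\,\nn} = \wbo{n}$ calls to $U_{\P}$, while using $k \leq n'/\sqrt{d}$ gives $\wbo{d\,\nn} = \wbo{\sqrt{d}\cdot\sqrt{d}\,\nn} = \wbo{n'}$ calls to $\pora_X$. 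I do not expect a serious obstacle: the hard part — building a directional mean oracle out of phase-oracle queries, handling the $1/\sqrt{d}$ subnormalization, and splitting the cost correctly between $U_{\P}$ and $\pora_X$ — is already contained in Lemmas~\ref{lem:directional-phase-oracle} and~\ref{Lem:AnalogDirectionalMeanOracle}. The two points that require care are (i) confirming that the preconditions of Lemma~\ref{Lem:AnalogDirectionalMeanOracle} (notably $\nn \geq \eps/(6\sqrt{d})$, and internally $m' \leq \sqrt{d}$) hold for the parameters chosen by the algorithm, and (ii) checking that $\eps^2 + 2\eta$ is small enough that the per-iteration, per-coordinate success probability stays a fixed constant above $1/2$, so that $\lceil 18\log(d/\delta)\rceil$ repetitions genuinely suffice for the coordinate-wise median.
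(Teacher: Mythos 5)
Your proposal follows essentially the same route as the paper's proof: compare the computed state against the ideal $\ket{\psi'} = \nn^{-d/2}\sum_{u\in G} e^{i\nn\inp{u}{\ex{X}}}\ket{u}\qub$, bound their distance via the guarantee of Lemma~\ref{Lem:AnalogDirectionalMeanOracle}, run the standard coordinate-wise phase-estimation analysis, boost with a coordinate-wise median and Chernoff plus a union bound, and split the two query budgets through $\nn = \Theta(k/(\sqrt{d}\log(d/\delta)))$ with $k \leq n$ and $k \leq n'/\sqrt{d}$. One small aside: your bound $\norm{\ket{\psi}-\ket{\psi''}}^2 \leq \eps^2 + 2\eta$ is the correct accounting (the paper writes $\eta + \eps^2$, undercounting the bad-$u$ contribution by a factor of two), but since the per-iteration success probability still exceeds $1/2$ by a constant, the $\ceil{18\log(d/\delta)}$-repetition median goes through unchanged.
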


\begin{proof}
	We follow the general proof strategy from Theorem~\ref{Thm:MultiBoundedEst}, and let
	\[\ket{\psi'} = \frac{1}{\sqrt{m^d}} \sum_{u \in G} e^{im\inp{u}{\ex{X}}} \ket{u}\qub.\]
	From the performance guarantee on $\widetilde{\pora}_{X,m,\eta,\eps}$ that we proved in Lemma~\ref{Lem:AnalogDirectionalMeanOracle}, we find that $\norm{\ket{\psi} - \ket{\psi'}}^2 \leq \sum_{u \in G} \norm{\ket{\varphi_u} - e^{im\inp{u}{\ex{X}}}\qub}^2/m \leq \eta + \eps^2 \leq 1/144$.

	We now analyze the remainder of the algorithm as if the state $\ket{\psi'}$ was prepared, instead of $\ket{\psi}$. Let $\widetilde{v} \in G$ be the outcome of the measurement performed in step~\ref{Step:measAn} of the algorithm. Since $\norm{\ex{X}}_{\infty} \leq 1/4 < 2\pi/3$, by the standard analysis of the phase estimation algorithm, as can for instance be found in Equation~(5.34) in \cite{NC11b}, for every $j \in [d]$ we have $|\widetilde{v}_j - \ex{X}/(2\pi)| \leq 4/m$ with probability at least $5/6$. If we now factor in that we start with $\ket{\psi}$ rather than $\ket{\psi'}$, the probability goes down from $5/6$ to $2/3$. Finally, from the Chernoff bound, it follows that $\norm{\widetilde{\mu} - \ex{X}}_{\infty} \leq 8\pi/m \leq \sqrt{d}\log(d/\delta)/k$ with probability at least $1-\delta$, from which the claim follows.

	It remains to analyze the query complexity claims. We make $O(\log(d/\delta))$ calls to the directional means oracle from Lemma~\ref{Lem:AnalogDirectionalMeanOracle}, from which we find that the number of quantum experiments is $\widetilde{O}(\sqrt{d}m\log^2(1/(\eps\eta))\log(d/\delta)) = \widetilde{O}\left(k\right) = \widetilde{O}(n)$, and similarly the number of calls to $\pora_X$ is $\widetilde{O}(dm\log^4(1/(\eps\eta))\log(d/\delta)) = \widetilde{O}(k\sqrt{d}) = \widetilde{O}(n')$, completing the proof.
\end{proof}

Later on in this section, we find corresponding lower bounds on the precision that scale as $\Omega(\max\{\sqrt{d}/n, d/n'\})$, implying that the performance guarantee we obtain here is optimal up to polylogarithmic factors. However, this lower bound only holds in the regime where both $n \geq d$ and $n' \geq d$, and surprisingly it turns out that one can do better in the case where either $n < d$ or $n' < d$. We show this in the next section.

\subsection{Near-optimal multivariate mean estimator in the low-precision regime}

It turns out that the performance of Algorithm~\ref{Alg:AnalogEstimatorEllInfty} is only optimal in the regime where $n \geq d$ and $n' \geq d$. In this section, we take a look at the regime where we have very few calls to the input oracles to spend, more specifically where $n < d$ or $n' < d$. We refer to this regime as the low-precision regime.

If $n' < d$, then the performance bound on Algorithm~\ref{Alg:AnalogEstimatorEllInfty} becomes at least $d/n' > 1$. However, we know a priori that $\ex{X}$ is contained in $[-1/4,1/4]^d$, so if we just output the all-zeros vector, we will do better than what Theorem~\ref{Thm:AnalogEstimatorEllInfty} suggests. Moreover, we will show in the next section that this is actually optimal, i.e., if one has less than $d$ queries to $\pora_X$ to spend, one might as well just output the all-zeros vector, since there is nothing one can do that will provably result in a significantly better estimate.

This leaves the regime where $n < d$ and $n' \geq d$, and it turns out that in this regime there is indeed a non-trivial approach that beats the complexity obtained by Algorithm~\ref{Alg:AnalogEstimatorEllInfty}. The modification is very simple -- one just samples from the probability space $n$ times, and then runs Algorithm~\ref{Alg:AnalogEstimatorEllInfty} with the empirical distribution. The algorithm is presented in Algorithm~\ref{Alg:LowPrecisionAnalogEstimator}, and the performance guarantees are presented in Theorem~\ref{Thm:AnalogEstimatorLowPrecision}.

\algobox{Alg:LowPrecisionAnalogEstimator}{Low-precision multivariate mean estimator, $\proc{QLowPrecPhase}_d\pt{X,n,n',\delta}$.}{
\begin{enumerate}[leftmargin=*]
	\item Set $k' = \lfloor 2n/\log(d/\delta)\rfloor$.
	\item For $\ell = 1,\dots,\ceil{32\log(d/\delta)}$:
	\begin{enumerate}
		\item Obtain samples $\omega^{(1)}, \dots, \omega^{(k')}$ from the probability space, and let $\overline{\P}$ be the empirical distribution based on the observed samples.\label{Step:sampAn}
		\item Run steps~\ref{Step:startAn} to~\ref{Step:measAn} of Algorithm~\ref{Alg:AnalogEstimatorEllInfty}, with $k = 2n'/\sqrt{d}$, all other parameters chosen identically, and the quantum experiment oracle $U_{\overline{\P}}$ constructed from the observed samples. Denote the outcome by $\widetilde{\mu}^{(\ell)}$.
	\end{enumerate}
	\item Output the coordinate-wise median, $\td{\mu} = \median\pt{\td{\mu}\super{1},\dots,\td{\mu}\super{{\ceil{18\log(d/\delta)}}}}$.
\end{enumerate}
}

\begin{theorem}[{\sc Low-precision analog mean estimator}]\label{Thm:AnalogEstimatorLowPrecision}
	Let $d \in \N$, $\delta \in (0,1)$, $n \geq \log(d/\delta)$, $n' \geq \sqrt{d}\log(d/\delta)$, and $X$ a random variable with values contained in $[-1/4,1/4]^d$, with $\mu = \ex{X}$. Then, $\proc{QLowPrecPhase}_d(X,n,n',\delta)$ (Algorithm~\ref{Alg:LowPrecisionAnalogEstimator}) finds an approximation to the mean, $\widetilde{\mu}$, that with probability at least $1-\delta$ satisfies
	\[\norm{\widetilde{\mu} - \mu}_{\infty} \leq \max\left\{\frac{1}{\sqrt{n}}, \frac{d}{n'}\right\} \cdot \log\left(\frac{d}{\delta}\right),\]
	with $\widetilde{O}(n)$ calls to $U_{\P}$, and $\widetilde{O}(n')$ calls to $\pora_X$.
\end{theorem}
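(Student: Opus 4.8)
The plan is to reduce to the high-precision estimator of Theorem~\ref{Thm:AnalogEstimatorEllInfty} by trading the expensive calls to $U_{\P}$ for cheap classical samples. The key point is that, once we have drawn $k' = \floor{2n/\log(d/\delta)}$ classical samples $\omega^{(1)},\dots,\omega^{(k')}$ (one measurement of $U_{\P}\qub$ in the computational basis per sample), the empirical quantum experiment oracle $U_{\overline{\P}}$ for the empirical distribution $\overline{\P}$ can be implemented with \emph{no further queries at all}: on an auxiliary index register it acts as $\ket{0} \mapsto \frac{1}{\sqrt{k'}} \sum_{i=1}^{k'} \ket{i}\ket{\omega^{(i)}}$, a unitary---together with its inverse---determined purely by the stored classical data, which purifies $\overline{\P}$; the analysis of Algorithm~\ref{Alg:AnalogEstimatorEllInfty} and of Lemmas~\ref{lem:directional-phase-oracle}, \ref{Lem:AnalogDirectionalMeanOracle}, \ref{Lem:PO} and \ref{Lem:PrO} is insensitive to carrying such an extra register along, since $\pora_X$ and all reflections used there act on other registers. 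As $X$ still takes values in $[-1/4,1/4]^d$, running steps~\ref{Step:startAn}--\ref{Step:measAn} of Algorithm~\ref{Alg:AnalogEstimatorEllInfty} with $U_{\overline{\P}}$ in place of $U_{\P}$, with the parameter $k = 2n'/\sqrt{d}$, and with all other parameters unchanged, is admissible; the directional means oracle of Lemma~\ref{Lem:AnalogDirectionalMeanOracle} built from $U_{\overline{\P}}$ and $\pora_X$ then encodes the \emph{empirical} directional mean $\inp{u}{\overline{\mu}^{(\ell)}}$, where $\overline{\mu}^{(\ell)} = \frac{1}{k'}\sum_{i} X(\omega^{(\ell,i)})$ is the empirical mean of iteration $\ell$. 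Repeating the computation in the proof of Theorem~\ref{Thm:AnalogEstimatorEllInfty}---there is now no constraint relating $m$ to $n$, precisely because $U_{\overline{\P}}$ is free, so $m = \Theta(n'/(d\log(d/\delta)))$---shows that the output $\td{\mu}^{(\ell)}$ of iteration $\ell$ satisfies $\abs{\td{\mu}^{(\ell)}_j - \overline{\mu}^{(\ell)}_j} \leq 8\pi/m \leq \frac{d\log(d/\delta)}{2n'}$ for every coordinate $j \in [d]$ with probability at least $2/3$, conditionally on the samples.

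Next I would bound the sampling error. Conditioned only on the $k'$ fresh samples of iteration $\ell$, each coordinate $X_j$ lies in an interval of width $1/2$, so Hoeffding's inequality gives $\pr{\abs{\overline{\mu}^{(\ell)}_j - \mu_j} > t} \leq 2e^{-8k't^2}$; taking $t$ to be a suitable constant multiple of $\log(d/\delta)/\sqrt{n}$ and using $k' \geq n/\log(d/\delta)$ makes this probability a small constant. Combined with the triangle inequality and the previous paragraph, for every $j \in [d]$ the estimate $\td{\mu}^{(\ell)}_j$ lies within $\max\set{1/\sqrt{n},\, d/n'} \cdot \log(d/\delta)$ of $\mu_j$ with probability at least $2/3$ over the randomness of iteration $\ell$ (classical sampling and quantum measurement together), and these events are independent across the $\ceil{32\log(d/\delta)}$ iterations. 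A Chernoff bound then shows that, for each coordinate, the median of the values $\td{\mu}^{(\ell)}_j$ lies within that radius of $\mu_j$ with probability at least $1 - \delta/d$---the constant $32$ being chosen precisely so that this holds given a per-iteration success probability of $2/3$---and a union bound over the $d$ coordinates yields the claimed overall success probability $1-\delta$.

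The query accounting is then immediate. The only calls to $U_{\P}$ are the $k'$ samples in each of the $\ceil{32\log(d/\delta)}$ iterations, i.e., $O(k'\log(d/\delta)) = O(n)$ in total; the $\wbo{\sqrt{d}\,m}$ ``quantum experiments'' performed by the directional means oracle are now calls to the query-free $U_{\overline{\P}}$, and the number of calls to $\pora_X$ is $\wbo{d\,m} = \wbo{n'/\log(d/\delta)}$ per iteration by Lemma~\ref{Lem:AnalogDirectionalMeanOracle}, hence $\wbo{n'}$ overall. The step I expect to require the most care is the probability bookkeeping of the second paragraph: one must ensure that the sampling-error failure probability is comfortably smaller than the slack left by the $2/3$ guarantee of the quantum subroutine---so that the per-iteration, per-coordinate success probability stays bounded away from $1/2$ by an absolute constant---and simultaneously that the sampling error it contributes is genuinely $O(\log(d/\delta)/\sqrt{n})$ rather than larger; these two requirements are exactly what fix the choices $k' = \floor{2n/\log(d/\delta)}$ and $\ceil{32\log(d/\delta)}$ in Algorithm~\ref{Alg:LowPrecisionAnalogEstimator}. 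Everything else is a routine composition of results already established in the paper.
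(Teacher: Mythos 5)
Your proof is correct and follows essentially the same route as the paper's: sample classically to build an empirical distribution $\overline{\P}$, run the high-precision phase-oracle estimator with the sampling-free $U_{\overline{\P}}$, split the error into a quantum phase-estimation part bounded by $d\log(d/\delta)/(2n')$ and a statistical sampling part bounded by $\log(d/\delta)/(2\sqrt{n})$, and take a coordinate-wise median over $\Theta(\log(d/\delta))$ independent iterations.

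Two remarks on where you differ, both to your credit. First, the paper bounds the sampling error $|\overline{\mu}_j - \mu_j|$ via a Chebyshev/Markov-type argument on the empirical measure (and the intermediate expressions involving $\sum_\omega \overline{\P}_\omega - \P(\omega)$ as written there are degenerate unless one reads in the missing absolute values); your direct Hoeffding bound $\Pr[|\overline{\mu}^{(\ell)}_j - \mu_j| > t] \leq 2e^{-8k't^2}$ on the i.i.d. samples is cleaner, gives an exponentially small per-iteration failure probability instead of merely a constant one, and makes the constant bookkeeping that you flag as delicate almost trivial. Second, you make explicit a point the paper glosses over: how $U_{\overline{\P}}$ is implemented without queries once the samples are stored, namely as the purification $\ket{0} \mapsto \frac{1}{\sqrt{k'}}\sum_i \ket{i}\ket{\omega^{(i)}}$, and why the passenger index register is harmless --- it gets absorbed into the garbage states $\ket{\psi_u^0},\ket{\psi_u^1}$ that Lemmas~\ref{Lem:PO}, \ref{Lem:PrO} and Proposition~\ref{Prop:AA-SVT} already tolerate, since none of the reflections or the phase oracle $\pora_X$ ever touch that register. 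This is a genuine clarification rather than a different approach, and it is worth spelling out.
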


\begin{proof}
	Since we only call $U_{\P}$ in step~\ref{Step:sampAn}, it is clear we perform a total of $O(k'\log(d/\delta)) = O(n)$ quantum experiments. Similarly, the number of calls to the phase oracle $\pora_X$ is twice that in a run of Algorithm~\ref{Alg:AnalogEstimatorEllInfty} with $n > n'/\sqrt{d}$, from which we readily deduce that it is indeed $\widetilde{O}(n')$.

	It remains to check the precision guarantee. To that end, let
	\[\overline{\P}_{\omega} = \frac{|\{j \in [k'] : \omega^{(j)} = \omega\}|}{k'},\]
	and let $\overline{\mu} = \ex_{\overline{\P}}{X}$, i.e., the mean of $X$ under this empirical probability distribution. Note that $\overline{\mu}$ itself is also a random variable, since it depends on the $\omega$'s observed in the first stage of the algorithm. From the analysis in Theorem~\ref{Thm:AnalogEstimatorEllInfty}, we find that for all $j \in [d]$ and $\ell \in [\ceil{32\log(d/\delta)}]$, $|\widetilde{\mu}_j - \overline{\mu}_j| \leq d\log(d/\delta)/(2n')$ with probability at least $2/3$. Thus, it remains to show that with high probability $\overline{\mu}$ approximates $\mu = \ex{X}$ well, that is, it remains to show that for all $j \in [d]$, $|\overline{\mu}_j - \mu_j| \leq \log(d/\delta)/(2\sqrt{n})$ with high probability.

	To that end, observe that for all $j \in [d]$,
	\[\left|\ex{\overline{\mu}_j} - \mu_j\right| = \left|\sum_{\omega \in \Omega} \overline{\P}_{\omega} X_j(\omega) - \P(\omega)X_j(\omega)\right| \leq \frac14\left|\sum_{\omega \in \Omega} \overline{\P}_{\omega} - \P(\omega)\right|,\]
	and also, using that $\ex{\overline{\P}_{\omega}} = \P(\omega)$,
	\begin{align*}
		\ex*{\left(\sum_{\omega \in \Omega} \overline{\P}_{\omega} - \P(\omega)\right)^2} &= \mathrm{Var}\left[\sum_{\omega \in \Omega} \overline{\P}_{\omega} - \P(\omega)\right] = \sum_{\omega \in \Omega}\mathrm{Var}\left[\overline{\P}_{\omega}\right] \\
		&= \sum_{\omega \in \Omega} \frac{\P(\omega)(1-\P(\omega))}{k'} \leq \sum_{\omega \in \Omega} \frac{\P(\omega)}{k'} = \frac{1}{k'}.
	\end{align*}
	Therefore, by Markov's inequality, for all $j \in [d]$,
	\[\P\left[\left|\overline{\mu}_j - \mu_j\right| > \frac{\log(d/\delta)}{2\sqrt{n}}\right] \leq \P\left[\left|\sum_{\omega \in \Omega} \overline{\P}_{\omega} - \P(\omega)\right| > \frac{4}{\sqrt{k'}}\right] = \P\left[\left|\sum_{\omega \in \Omega} \overline{\P}_{\omega} - \P(\omega)\right|^2 > \frac{16}{k'}\right] \leq \frac{1}{16}.\]
	Thus, by the triangle inequality, for every $j \in [d]$ and $\ell \in [\ceil{32\log(d/\delta)}]$, with probability at least $2/3 \cdot 15/16 = 5/8$ we have that $|\widetilde{\mu}^{(\ell)}_j - \mu_j| \leq |\widetilde{\mu}^{(\ell)}_j - \overline{\mu}_j| + |\overline{\mu}_j - \mu_j| \leq (d/(2n') + 1/(2\sqrt{n})) \cdot \log(d/\delta) \leq \max\{1/\sqrt{n}, d/n'\} \cdot \log(d/\delta)$. Finally, it follows from the Chernoff bound that after $\ceil{32\log(d/\delta)}$ iterations, we obtain that $\norm{\widetilde{\mu} - \mu}_{\infty} \leq \max\{1/\sqrt{n}, d/n'\} \cdot \log(d/\delta)$, completing the proof.
\end{proof}

We have now described all algorithms. For convenience, we aggregate all algorithmic results in one self-contained statement.

\begin{theorem}
  \label{Thm:phOracle}
	Let $d,n,n' \in \N$, $\delta \in (0,1)$, and $X$ a random variable with values contained in $[-1/4,1/4]^d$, with $\mu = \ex{X}$. Then, we can find an approximation to the mean, $\widetilde{\mu}$, that with probability at least $1-\delta$ satisfies
	\[\norm{\widetilde{\mu} - \mu}_{\infty} \leq \begin{cases}
		1, & \text{if $n' < d$ or $n < \log(d/\delta)$}, \\[3mm]
		\max\left\{\frac{1}{\sqrt{n}}, \frac{d}{n'}\right\} \cdot \log\left(\frac{d}{\delta}\right), & \text{if $n' \geq d$ and $\log(d/\delta) \leq n < d$}, \\[4mm]
		\max\left\{\frac{\sqrt{d}}{n}, \frac{d}{n'}\right\} \cdot \log\left(\frac{d}{\delta}\right), & \text{if $n' \geq d$ and $n \geq d$}.
	\end{cases}\]
	with $\widetilde{O}(n)$ calls to $U_{\P}$, and $\widetilde{O}(n')$ calls to $\pora_X$. Furthermore, for all $p \in [1,\infty)$, we obtain the same performance guarantees on $\norm{\widetilde{\mu} - \mu}_p$, but multiplied with $d^{1/p}$.
\end{theorem}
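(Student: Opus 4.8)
The statement is an umbrella result that aggregates the three algorithms already analysed in this section, so the plan is simply to read off each branch from the appropriate theorem and to obtain the $\ell_p$ guarantees by a one-line norm inequality. I would split on the location of $(n,n')$ relative to $d$ and $\log(d/\delta)$, reading the cases in the stated order. In the first branch, if $n' < d$ or $n < \log(d/\delta)$, output $\widetilde{\mu} = \vec{0}$ making no queries at all: since $X$ takes values in $[-1/4,1/4]^d$ almost surely, $\mu = \ex{X}$ lies in the same hypercube, whence $\norm{\widetilde{\mu} - \mu}_\infty = \norm{\mu}_\infty \leq 1/4 \leq 1$, which is the claimed bound.

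Otherwise we are in the complementary regime $n' \geq d$ and $n \geq \log(d/\delta)$. If moreover $n < d$, I would run $\proc{QLowPrecPhase}_d(X,n,n',\delta)$ (Algorithm~\ref{Alg:LowPrecisionAnalogEstimator}): Theorem~\ref{Thm:AnalogEstimatorLowPrecision} then yields $\norm{\widetilde{\mu} - \mu}_\infty \leq \max\{1/\sqrt{n}, d/n'\}\log(d/\delta)$ with $\wbo{n}$ calls to $U_{\P}$ and $\wbo{n'}$ calls to $\pora_X$. If instead $n \geq d$, I would run $\proc{QPhase}_d(X,n,n',\delta)$ (Algorithm~\ref{Alg:AnalogEstimatorEllInfty}): Theorem~\ref{Thm:AnalogEstimatorEllInfty} yields $\norm{\widetilde{\mu} - \mu}_\infty \leq \max\{\sqrt{d}/n, d/n'\}\log(d/\delta)$ with the same query budget. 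In both sub-cases the hypothesis $n \geq \log(d/\delta)$ is satisfied, so it only remains to discharge the extra hypothesis $n' \geq \sqrt{d}\log(d/\delta)$ that those two theorems additionally demand.

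Reconciling that last hypothesis with the weaker assumption $n' \geq d$ appearing in the theorem is the one place where any care is needed, and the hard part — such as it is — is purely this boundary bookkeeping. If $n' \geq \sqrt{d}\log(d/\delta)$ there is nothing to do. Otherwise $d \leq n' < \sqrt{d}\log(d/\delta)$, so $d/n' > \sqrt{d}/\log(d/\delta)$ and hence $\tfrac{d}{n'}\log(d/\delta) > \sqrt{d} \geq 1$; since the claimed right-hand side in each of the two non-trivial branches dominates $\tfrac{d}{n'}\log(d/\delta)$, it already exceeds $1$, and the trivial all-zeros estimator of the first paragraph (whose error is at most $1/4$) meets the bound. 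Similarly, whenever the conditions of more than one branch hold simultaneously, one is free to fall back on the trivial estimator. Apart from this, the statement is immediate, since all the algorithmic work resides in Theorems~\ref{Thm:AnalogEstimatorEllInfty} and~\ref{Thm:AnalogEstimatorLowPrecision}.

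For the final $\ell_p$ assertion, with $p \in [1,\infty)$, I would use the elementary inequality $\norm{x}_p = \pt{\sum_{i \in [d]} \abs{x_i}^p}^{1/p} \leq \pt{d\,\norm{x}_\infty^p}^{1/p} = d^{1/p}\norm{x}_\infty$, valid for every $x \in \R^d$. Applying it with $x = \widetilde{\mu} - \mu$ multiplies each of the three displayed $\ell_\infty$ bounds by $d^{1/p}$, while the estimator and hence the numbers of calls to $U_{\P}$ and $\pora_X$ are unchanged. This gives exactly the $\Theta(d^{1/p})$-factor statement in the theorem, completing the plan.
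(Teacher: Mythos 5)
Your proof is correct and follows the same route the paper takes: case 1 is the trivial estimator, case 2 invokes Theorem~\ref{Thm:AnalogEstimatorLowPrecision}, case 3 invokes Theorem~\ref{Thm:AnalogEstimatorEllInfty}, and the $\ell_p$ claim is the norm inequality $\norm{x}_p \leq d^{1/p}\norm{x}_\infty$. The paper's own proof is essentially a one-liner that cites those two theorems plus the norm inequality, so you are matching it; in fact you are being slightly more careful than the paper, since you explicitly reconcile the hypothesis $n' \geq \sqrt{d}\log(d/\delta)$ of Theorems~\ref{Thm:AnalogEstimatorEllInfty} and~\ref{Thm:AnalogEstimatorLowPrecision} with the weaker $n' \geq d$ appearing in the statement (by showing that in the gap $d \leq n' < \sqrt{d}\log(d/\delta)$ the claimed bound already exceeds $1$, so the zero estimator suffices) — a small boundary check the paper leaves implicit.
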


\begin{proof}
	All statements are already present in Theorem~\ref{Thm:AnalogEstimatorEllInfty} and Theorem~\ref{Thm:AnalogEstimatorLowPrecision}, except for the case where $p \in [1,\infty)$, which follows from the norm inequality $\norm{\mathbf{x}}_p \leq d^{1/p}\norm{\mathbf{x}}_{\infty}$, for all $\mathbf{x} \in \R^d$.
\end{proof}

There exist at least three ways in which this result can be generalized. For instance, one can ask the question how many queries are required if instead of assuming that the random variable is bounded by $[-1/4,1/4]^d$, it is instead bounded by some $\ell_q$-ball of radius $1/4$, with $q \in [1,\infty)$. One can also wonder how many queries are required when one wants to obtain an approximation of $O\ex{X}$, where $O$ is some $d$-dimensional rotation matrix, i.e., a matrix that satisfies $O^TO = I$. Finally, if one has access to $X$ through some other oracle than a phase oracle, then one can also wonder how many queries to such an oracle are required to solve the multivariate mean estimation problem. These questions are all addressed in \cite{CJ21p}, albeit only in the high-precision regime. It is an interesting direction for further research to tightly characterize the query complexities of these problems in the low-precision setting as well.

  \subsection{Lower bounds}
  \label{Sec:LowerAnalog}
  We now turn our attention to proving lower bounds on the number of queries required to the input oracles. We first focus our attention on the high-precision regime, and will show later on that the lower bounds in the low-precision regimes follow via some reduction from those that we prove in the high-precision regime.

As is customary with lower bounding, we would like to embed a problem whose hardness has already been shown before in the setting we consider here, in order to conclude that this problem must be at least as hard to solve. We start by considering the problem of recovering a constant fraction of the bits in a bit string when we are given access to it by means of a fractional phase oracle.

\begin{lemma}\label{Lem:StringRecovery}
	Let $\eps \in (0,\pi]$, $d \in \N$, and suppose that we have access to a bit string $\mathbf{b} \in \{0,1\}^d$ through controlled calls to a fractional phase oracle $\fora_{\eps} : \ket{j} \mapsto e^{i\eps b_j}\ket{j}$. Then, in order to find a bit string $\widetilde{\mathbf{b}} \in \{0,1\}^d$ such that $\norm{\widetilde{\mathbf{b}} - \mathbf{b}}_1 \leq d/4$ with probability at least $2/3$, we must make at least $\Omega(d/\eps)$ calls to $\fora_{\eps}$.
\end{lemma}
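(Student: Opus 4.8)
The plan is to reduce the task to the ordinary (non-fractional) phase oracle $\fora_{\pi}:\ket{j}\mapsto(-1)^{b_j}\ket{j}$, for which recovering a constant fraction of the bits is already known to require $\om{d}$ queries, and to show that $\fora_{\eps}$ is weaker than $\fora_{\pi}$ by exactly a factor $1/\eps$. The key point is that $\fora_{\eps}=e^{i\eps H_{\mathbf b}}$ with $H_{\mathbf b}=\sum_{j}b_j\ket{j}\bra{j}$ an oracle Hamiltonian of norm at most $1$; hence an algorithm making $T$ controlled calls to $\fora_{\eps}$ is nothing but a continuous-time quantum query algorithm whose total oracle time is $\eps T$, with all the intervening workspace unitaries (including the controls) costing no oracle time. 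By the standard simulation of continuous-time quantum query algorithms by discrete-time ones, such an algorithm can be compiled into one making $\wbo{\eps T}$ queries to $\fora_{\pi}$. Consequently, if bit-string recovery needs $\om{d}$ queries with $\fora_{\pi}$, then with $\fora_{\eps}$ it needs $T$ such that $\wbo{\eps T}\geq\om{d}$, i.e.\ $T\geq\om{d/\eps}$ (the polylogarithmic loss from the discretization being absorbed into the $\om{\cdot}$).

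For the base case --- that recovering $\mathbf b$ to $\ell_1$-error $d/4$ with $\fora_{\pi}$ costs $\om{d}$ queries --- I would invoke Lemma~\ref{Lem:SP} with $M=1$: there $\proc{Parity}^{1}$ is the identity on one bit, $\mathcal{A}_{d,1}$ is exactly the set of (nearly) balanced strings, and the oracle $\ket{i,j}\mapsto(-1)^{A_{i,j}}\ket{i,j}$ is precisely $\fora_{\pi}$, so the lemma states that reaching $\ell_2$-error $o(\sqrt d)$ needs $\om{d}$ queries. Since $\norm{\widetilde{\mathbf b}-\mathbf b}_1\leq d/4$ forces $\norm{\widetilde{\mathbf b}-\mathbf b}_2\leq\sqrt{d}/2$, this gives what we need once the constant hidden in the ``$\om{\sqrt d}$'' of Lemma~\ref{Lem:SP} is matched to the threshold $\sqrt d/2$; that reconciliation is routine --- one may track the explicit constant, re-derive the bound for this exact threshold via the adversary matrix supported on balanced strings at Hamming distance two, or argue information-theoretically (the output must land in a Hamming ball around $\mathbf b$ of relative volume $2^{-(1-H(1/4))d}$, hence conveys $\om{d}$ bits about a uniformly random balanced $\mathbf b$, whereas $Q$ phase-oracle queries leak only $O(Q)$ such bits).

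The step I expect to be the main obstacle is making the continuous-time embedding and its discretization airtight: one must verify that a single $\fora_{\eps}$-call equals evolution under $H_{\mathbf b}$ for time $\eps$, that the controls in the controlled queries only enlarge the oracle-free register without adding oracle time, and that the polylogarithmic discretization overhead is harmless --- it is absorbed into the $\om{\cdot}$, and in any case disappears when $\eps=\Theta(1)$, which should be handled directly. A self-contained alternative would replace the continuous-time machinery by a polynomial-method argument: after $T$ queries the amplitudes are degree-$T$ polynomials in the values $e^{i\eps b_j}\in\{1,e^{i\eps}\}$, and the substitution $e^{i\eps b_j}=1+(e^{i\eps}-1)b_j$ rewrites them as polynomials in $\mathbf b$ whose degree-$\ell$ coefficients carry the factor $|e^{i\eps}-1|^{\ell}\leq\eps^{\ell}$, so they behave like polynomials of effective degree $O(\eps T)$ in $\mathbf b$, against which one invokes an $\om{d}$ degree lower bound for approximate string recovery; yet another would be a direct-sum adversary bound on the balanced-string instance. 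Either way, note that the naive hybrid argument only yields $\om{\sqrt d/\eps}$, so the genuinely delicate part is converting the cheapness of a fractional query into the clean factor $1/\eps$ in the query bound; the rest is bookkeeping.
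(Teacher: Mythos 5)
Your two-step structure --- reduce to $\eps = \pi$, then prove an $\Omega(d)$ bound there --- is exactly the paper's strategy, and the information-theoretic alternative you mention in passing is in fact the paper's actual base case: one amplifies the algorithm so that it outputs $\mathbf{b}$ \emph{exactly} with probability at least $\tfrac{2}{3}\bigl(\sum_{t \leq d/4}\binom{d}{t}\bigr)^{-1}$, then applies the Fourier-analytic volume bound of~\cite{FGGS99j} to get $2^d \leq \tfrac{3}{2} \cdot 2^{d(H(1/4) + H(Q/d))}$, hence $Q = \Omega(d)$. Your slogan that ``$Q$ phase-oracle queries leak only $O(Q)$ bits'' is heuristic and really does need this explicit volume bound to be made rigorous. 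Your primary route through Lemma~\ref{Lem:SP} with $M=1$ is a correct reduction (it is the balanced-string recovery problem, and for $0/1$ vectors $\|\widetilde{\mathbf{b}}-\mathbf{b}\|_1 \leq d/4$ is equivalent to $\|\widetilde{\mathbf{b}}-\mathbf{b}\|_2 \leq \sqrt{d}/2$), but as you acknowledge it dangles on whether the hidden constant in Lemma~\ref{Lem:SP}'s $\Omega(\sqrt{N})$ exceeds $1/2$; the paper avoids that dependence entirely.

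The genuine gap is in the reduction to $\eps = \pi$. Your continuous-time/discrete-time simulation is correct as a statement but only yields $\widetilde{\Omega}(d/\eps)$: the discretization overhead is a real polylogarithmic factor, and it is \emph{not} absorbed into $\Omega(\cdot)$ --- claiming otherwise is an error, and that loss would propagate into Theorem~\ref{Thm:LBPOanalog}, which the paper states without tildes. The paper instead invokes the fractional-query equivalence of Appendix~B in~\cite{LMR11c}, which gives a clean multiplicative $\Theta(1/\eps)$ via the adversary bound, and --- because the task is a \emph{relation} (many acceptable $\widetilde{\mathbf{b}}$ for each $\mathbf{b}$) rather than a binary function --- combines it with the general adversary bound for relations from~\cite{Bel15p}, deferring details to~\cite{CJ21p}. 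Your polynomial-method substitution $e^{i\eps b_j} = 1 + (e^{i\eps}-1)b_j$ is morally how~\cite{LMR11c} actually proves the reduction, but passing from ``degree-$\ell$ coefficients scale like $\eps^{\ell}$'' to ``effective degree $O(\eps T)$'' requires a genuine truncation-and-error-analysis step; it is not routine bookkeeping, and your proposal does not supply it.
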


\begin{proof}
	First, we argue that it is sufficient to consider the case where $\eps = \pi$. Indeed, in general, the query complexity of any problem is increased by a multiplicative factor of $\Theta(1/\eps)$, when one changes the input model from a regular phase oracle $\fora_{\pi}$ to a fractional phase oracle $\fora_{\eps}$. In Appendix B of \cite{LMR11c}, this is proven for problems that can be phrased as a binary function. However, since the problem we consider here does not have a unique correct output on every given input, we must combine their technique with the general adversary bound for relations, as derived by \cite{Bel15p}, to arrive at the desired result. More details can be found in \cite{CJ21p}.

	Thus, it remains to focus on the case where $\eps = \pi$. Suppose that we have an algorithm $\mathcal{A}$ that finds a bit string $\widetilde{\mathbf{b}} \in \{0,1\}^d$ such that with probability at least $2/3$, we have $\norm{\widetilde{\mathbf{b}} - \mathbf{b}}_1 \leq d/4$, i.e., $\widetilde{\mathbf{b}}$ and $\mathbf{b}$ differ in at most $d/4$ bits. Then, we can let $\mathcal{B}$ be the quantum algorithm that first runs $\mathcal{A}$ to obtain such a bit string $\widetilde{\mathbf{b}}$, and then selects uniformly at random a bit string $\overline{\mathbf{b}} \in \{0,1\}^d$ that satisfies $\norm{\overline{\mathbf{b}} - \widetilde{\mathbf{b}}}_1 \leq d/4$. We have $M = \sum_{t=0}^{\lfloor d/4\rfloor} \binom{d}{t}$ possible choices, which implies that the probability of this algorithm outputting $\mathbf{b}$ exactly is lower bounded by $2/3 \cdot 1/M$. By the information theoretic lower bound, i.e., Equation~(4) in \cite{FGGS99j}, the number of queries to $\fora_{\pi}$, performed by $\mathcal{B}$ and hence also by $\mathcal{A}$, denoted by $Q$, satisfies
	\[2^d \leq \frac32 \cdot \sum_{t=0}^{\left\lfloor \frac{d}{4} \right\rfloor} \binom{d}{t} \sum_{t=0}^Q \binom{d}{t} \leq \frac32 \cdot 2^{d\left(H\left(\frac14\right) + H\left(\frac{Q}{d}\right)\right)},\]
	where in the final inequality, we used a well-known approximation sums of binomial coefficients, as proven for instance in Lemma~16.19 in \cite{FG06b}, and $H(x) = -x\log(x) - (1-x)\log(1-x)$ is the binary entropy function. Taking logarithms on both sides yields that $H(Q/d) \geq 1 - H(1/4) + o(1)$, which implies that $Q = \Omega(d)$, completing the proof.
\end{proof}

The hardness of this problem can be used as a black box to show the high-precision lower bound on the precision we can attain, expressed in the number of calls to $\pora_X$, as is shown in the theorem below.

\begin{theorem}
	\label{Thm:LBPOanalog}
	Let $d \in \N$, $n' \geq d$, and suppose that we have a quantum algorithm that finds an approximation $\widetilde{\mu}$ of the mean $\mu$ of any random variable with values contained in $[-1/4,1/4]^d$, using $n'$ queries to $\pora_X$. Then, there exist instances in which case the error between $\widetilde{\mu}$ and $\mu$ satisfies
	\[\norm{\widetilde{\mu} - \mu}_1 = \Omega\left(\frac{d^2}{n'}\right),\]
	with probability at least $2/3$.
\end{theorem}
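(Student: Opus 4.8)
The plan is to reduce from the bit-string recovery problem of Lemma~\ref{Lem:StringRecovery}. First I would fix a sufficiently large absolute constant $C$ (its size dictated by the constant hidden in the $\Omega(d/\eps)$ bound of Lemma~\ref{Lem:StringRecovery}, and with $C \geq 4$ so that the scaling below is admissible) and set $\eps = d/(Cn')$; since $n' \geq d$ this gives $\eps \leq 1/C$, hence $\eps \in (0,\min\{\pi,1/4\}]$. To any bit string $\mathbf{b} \in \{0,1\}^d$ I associate the trivial one-point probability space together with the deterministic random variable $X \equiv \eps\mathbf{b}$, which is legitimate since $\eps\mathbf{b} \in [-1/4,1/4]^d$. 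Under this encoding $U_{\P}$ is trivial and costs nothing, the mean is $\mu = \ex{X} = \eps\mathbf{b}$, and the phase oracle $\pora_X$ acts on $\C^d$ exactly as the fractional phase oracle $\fora_{\eps} : \ket{j} \mapsto e^{i\eps b_j}\ket{j}$.

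Next I would argue that if some estimator achieved $\norm{\widetilde{\mu} - \mu}_1 \leq \Delta$ with probability at least $2/3$ using $n'$ queries to $\pora_X$ on every such instance, with $\Delta < \eps d/8 = d^2/(8Cn')$, then one could round its output coordinatewise by setting $\widetilde{b}_j = \mathbbm{1}[\widetilde{\mu}_j \geq \eps/2]$. Because each $\mu_j$ lies in $\{0,\eps\}$, every index $j$ with $\widetilde{b}_j \neq b_j$ contributes at least $\eps/2$ to $\norm{\widetilde{\mu} - \mu}_1$, so the number of incorrect bits is at most $2\Delta/\eps < d/4$. Thus running the estimator and rounding solves the bit-string recovery problem with $\norm{\widetilde{\mathbf{b}} - \mathbf{b}}_1 < d/4$ and success probability at least $2/3$, using only $n'$ (controlled) calls to $\fora_{\eps}$. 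But Lemma~\ref{Lem:StringRecovery} forces this to take $\Omega(d/\eps) = \Omega(Cn')$ queries, which strictly exceeds $n'$ once $C$ is larger than the reciprocal of the hidden constant — a contradiction. Hence on some instance $X = \eps\mathbf{b}$ the error must satisfy $\norm{\widetilde{\mu} - \mu}_1 \geq d^2/(8Cn') = \Omega(d^2/n')$ with probability at least $2/3$, which is the claim.

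The main obstacle — in fact essentially the only delicate point — is the simultaneous calibration of the scaling parameter $\eps$: it must be small enough that $X$ stays inside $[-1/4,1/4]^d$ and that the coordinatewise rounding is faithful, yet large enough relative to $\Delta$ that the rounded string lands within Hamming distance $d/4$ of $\mathbf{b}$; together with the $\Omega(d/\eps)$ query cost of Lemma~\ref{Lem:StringRecovery} these constraints pin $\eps$ to $\Theta(d/n')$ and thereby produce exactly the $d^2/n'$ rate. Beyond that, one only needs to check that the reduction preserves (controlled) query counts — which is immediate here because $\pora_X$ is literally $\fora_{\eps}$ — and that the two $2/3$ success probabilities line up, which they do with no union bound.
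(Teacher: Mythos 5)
Your proof is correct and follows essentially the same reduction as the paper: you fix the one-point probability space, encode $\mathbf{b}$ as the deterministic random variable $X \equiv \eps\mathbf{b}$ so that $\pora_X = \fora_\eps$, and show that an $\ell_1$-error below $\Theta(\eps d)$ lets you recover $\mathbf{b}$ up to Hamming distance $d/4$, contradicting Lemma~\ref{Lem:StringRecovery}. Your coordinatewise rounding $\widetilde{b}_j = \mathbbm{1}[\widetilde{\mu}_j \geq \eps/2]$ is in fact exactly the $\ell_1$-minimizer over $\{0,1\}^d$ that the paper uses, so the decoding step is the same argument written more concretely. One small improvement in your version: by explicitly taking $\eps = d/(Cn')$ with $C \geq 4$, you guarantee $\eps \leq 1/4$ and hence $\eps\mathbf{b} \in [-1/4,1/4]^d$, whereas the paper sets $\eps = d/n'$ and does not check this boundedness constraint, which can technically fail for $d \leq n' < 4d$; your calibration closes that gap at no cost to the final $\Omega(d^2/n')$ rate.
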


\begin{proof}
	Let $\eps = d/n'$, and $\mathbf{b} \in \{0,1\}^d$ a bit string, that we can access through controlled calls to a fractional phase oracle $\fora_{\eps} : \ket{j} \mapsto e^{i\eps b_j}\ket{j}$. Then, we know from Lemma~\ref{Lem:StringRecovery} that it takes $\Omega(d/\eps) = \Omega(n')$ calls to $\fora_{\eps}$ to find a bit string $\widetilde{\mathbf{b}} \in \{0,1\}^d$ such that $\norm{\widetilde{\mathbf{b}} - \mathbf{b}}_1 \leq d/4$.

	Now, let $\Omega = \{0\}$, with $\P(0) = 1$, which implies that $U_{\P} = I$. Let the random variable $X : \Omega \to \R^d$ be defined as $X(0)_j = \eps b_j$. This implies that $\pora_X : \ket{0}\ket{j} \mapsto e^{i\eps b_j}\ket{0}\ket{j}$, and hence $\pora_X$ can be implemented with one call to $\fora_{\eps}$. Furthermore, we have $\mu = \ex{X} = \eps\mathbf{b}$, and hence if we can find an approximation $\widetilde{\mu}$ to the mean satisfying $\norm{\widetilde{\mu} - \mu} \leq d^2/(8n')$, then we have
	\[\min_{\widetilde{\mathbf{b}} \in \{0,1\}^d} \norm*{\frac{\widetilde{\mu}}{\eps} - \widetilde{\mathbf{b}}}_1 = \frac{1}{\eps} \min_{\widetilde{\mathbf{b}} \in \{0,1\}^d} \norm{\widetilde{\mu} - \eps\widetilde{\mathbf{b}}}_1 \leq \frac{1}{\eps} \norm{\widetilde{\mu} - \mu}_1,\]
	and hence, if we take $\widetilde{\mathbf{b}}$ to be the bit string attaining the minimum in the left-hand side, we find by the triangle inequality
	\[\norm{\widetilde{\mathbf{b}} - \mathbf{b}}_1 \leq \norm*{\frac{\widetilde{\mu}}{\eps} - \widetilde{\mathbf{b}}}_1 + \norm*{\frac{\widetilde{\mu}}{\eps} - \mathbf{b}}_1 \leq \frac{1}{\eps} \norm{\widetilde{\mu} - \mu}_1 + \frac{1}{\eps} \norm{\widetilde{\mu} - \mu}_1 = \frac{2}{\eps}\norm{\widetilde{\mu} - \mu}_1 \leq \frac{d}{4}.\]
	But we know that this takes at least $\Omega(n')$ calls to $\fora_{\eps}$, and hence obtaining an estimate $\widetilde{\mu}$ that satisfies $\norm{\widetilde{\mu} - \mu}_1 \leq d^2/(8n')$ requires at least $\Omega(n')$ queries to $\pora_X$ as well. Thus, if we only have $n'$ queries to spend, there must exist instances such that $\norm{\widetilde{\mu} - \mu}_1 = \Omega(d^2/n')$ with high probability, completing the proof.
\end{proof}

In order to give a similar lower bound in terms of the number of quantum experiments, we need to subtly change the problem to finding a vector $\widetilde{\mathbf{b}} \in \{0,1\}^d$ such that $\norm{H(\widetilde{\mathbf{b}} - \mathbf{b})}_1 \leq d/4$, where $H$ is a $d$-dimensional normalized Hadamard matrix, i.e., the entries of $H$ are all $\pm 1/\sqrt{d}$, and $H^TH = I$. This problem is quite different in nature compared to the problem considered in Lemma~\ref{Lem:StringRecovery}, since the number of bits in which $\mathbf{b}$ and $\widetilde{\mathbf{b}}$ differ seems to no longer tells us anything useful about whether this condition is met. For instance, if $\mathbf{b} = \mathbf{0}$ and $\widetilde{\mathbf{b}} = \mathbf{1}$, i.e., they differ in \textit{all} bits, then $H(\widetilde{\mathbf{b}} - \mathbf{b}) = \sqrt{d}\mathbf{e}_1$, and so the condition is met as long as $d$ is large enough. Suprisingly, however, we are able to show that this problem is equally hard as the non-rotated problem up to constants, i.e., it still takes $\Omega(d/\eps)$ calls to $\fora_{\eps}$ to find a $\widetilde{\mathbf{b}}$ that satisfies this rotated condition. The details can be found in the lemma below.

\begin{lemma}\label{Lem:RotatedStringRecovery}
	Let $\eps \in (0,\pi]$, $d \in \N$ a power of $2$, and let $H$ be a $d$-dimensional Hadamard matrix, i.e., all entries of $H$ are $\pm 1/\sqrt{d}$, and $H^TH = I$. Suppose we have access to a bit string $\mathbf{b} \in \{0,1\}^d$ by means of a fractional phase oracle $\fora_{\eps} : \ket{j} \mapsto e^{i\eps b_j}\ket{j}$. In order to find a bit string $\widetilde{\mathbf{b}} \in \{0,1\}^d$ such that $\norm{H(\widetilde{\mathbf{b}} - \mathbf{b})}_1$, the number of calls to $\fora_{\eps}$ satisfies $\Omega(d/\eps)$.
\end{lemma}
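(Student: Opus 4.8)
The plan is to follow the proof of Lemma~\ref{Lem:StringRecovery} almost verbatim, the only genuinely new ingredient being an upper bound on the number of bit strings lying within the \emph{rotated} ball around a fixed string. As in Lemma~\ref{Lem:StringRecovery}, replacing a fractional phase oracle $\fora_{\eps}$ by the full phase oracle $\fora_{\pi}$ changes the query complexity of any relational problem by a factor $\Theta(1/\eps)$ (the technique of \cite[Appendix~B]{LMR11c} combined with the general adversary bound for relations \cite{Bel15p}; see \cite{CJ21p}), so it suffices to prove an $\om{d}$ lower bound in the case $\eps = \pi$.

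So fix $\eps = \pi$ and suppose $\mathcal{A}$ is a $Q$-query algorithm that, for every $\mathbf{b} \in \{0,1\}^d$, outputs some $\widetilde{\mathbf{b}} \in \{0,1\}^d$ with $\norm{H(\widetilde{\mathbf{b}} - \mathbf{b})}_1 \le d/4$ with probability at least $2/3$. For $\widetilde{\mathbf{b}} \in \{0,1\}^d$ set $S_{\widetilde{\mathbf{b}}} = \{\mathbf{c} \in \{0,1\}^d : \norm{H(\widetilde{\mathbf{b}} - \mathbf{c})}_1 \le d/4\}$ and $M' = \max_{\widetilde{\mathbf{b}}} \abs{S_{\widetilde{\mathbf{b}}}}$. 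I would let $\mathcal{B}$ be the algorithm that runs $\mathcal{A}$ to obtain $\widetilde{\mathbf{b}}$ and then outputs a uniformly random element of $S_{\widetilde{\mathbf{b}}}$ (nonempty since $\widetilde{\mathbf{b}} \in S_{\widetilde{\mathbf{b}}}$); then for every $\mathbf{b}$ the algorithm $\mathcal{B}$ outputs $\mathbf{b}$ exactly with probability at least $\tfrac{2}{3M'}$. Feeding this into the information-theoretic lower bound \cite[Eq.~(4)]{FGGS99j} used in Lemma~\ref{Lem:StringRecovery} gives $2^d \le \tfrac{3}{2}\, M' \sum_{t=0}^{Q} \binom{d}{t} \le \tfrac{3}{2}\, M'\, 2^{d H(Q/d)}$ for $Q \le d/2$, with $H$ the binary entropy. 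Hence it is enough to show $M' \le 2^{\gamma d}$ for some universal constant $\gamma < 1$: this forces $H(Q/d) \ge 1 - \gamma - o(1) > 0$, so $Q = \om{d}$, which together with the previous paragraph proves the lemma.

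To bound $M'$ I would fix $\widetilde{\mathbf{b}}$, let $\sigma \in \{-1,1\}^d$ be a uniformly random sign vector so that $\mathbf{c} := (\vec{1} + \sigma)/2$ is uniform on $\{0,1\}^d$, and put $F(\sigma) := \norm{\mathbf{v} - \tfrac{1}{2} H\sigma}_1$ with $\mathbf{v} := H(\widetilde{\mathbf{b}} - \vec{1}/2)$, so $\abs{S_{\widetilde{\mathbf{b}}}} = 2^d \cdot \Pr[F \le d/4]$. First I would lower bound $\mathbb{E}[F]$: each coordinate $(H\sigma)_i$ is symmetric (since $-\sigma$ and $\sigma$ are equidistributed), so $w \mapsto \mathbb{E}[\abs{w - \tfrac{1}{2}(H\sigma)_i}]$ is convex and even, hence minimized at $w = 0$, giving $\mathbb{E}[\abs{v_i - \tfrac{1}{2}(H\sigma)_i}] \ge \tfrac{1}{2}\mathbb{E}[\abs{(H\sigma)_i}]$; since the $i$-th row of $H$ has unit $\ell_2$-norm, the elementary fourth-moment bound $\mathbb{E}[\abs{Y}] \ge (\mathbb{E}[Y^2])^{3/2}/(\mathbb{E}[Y^4])^{1/2}$ with $Y = (H\sigma)_i$, $\mathbb{E}[Y^2] = 1$ and $\mathbb{E}[Y^4] = 3 - 2\sum_j H_{ij}^4 \le 3$, yields $\mathbb{E}[\abs{(H\sigma)_i}] \ge 1/\sqrt{3}$. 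Summing over $i \in [d]$ gives $\mathbb{E}[F] \ge d/(2\sqrt{3})$, which beats $d/4$ by a fixed fraction of $d$, namely $\mathbb{E}[F] - d/4 \ge \kappa d$ with $\kappa := \tfrac{1}{2\sqrt{3}} - \tfrac{1}{4} > 0$. Next, $F$ is convex (a norm composed with an affine map) and Lipschitz in the Euclidean norm with constant $L = \sqrt{d}/2$, since $\abs{F(\sigma) - F(\sigma')} \le \tfrac{1}{2}\norm{H(\sigma - \sigma')}_1 \le \tfrac{\sqrt{d}}{2}\norm{\sigma - \sigma'}_2$. By Talagrand's concentration inequality for convex Lipschitz functions of independent bounded random variables, $\Pr[F \le \mathbb{E}[F] - t] \le C_1 e^{-C_2 t^2/d}$ for universal constants $C_1, C_2 > 0$; taking $t = \mathbb{E}[F] - d/4 \ge \kappa d$ gives $\Pr[F \le d/4] \le C_1 e^{-C_2 \kappa^2 d}$, so $M' \le C_1 \cdot 2^{d(1 - C_2\kappa^2/\ln 2)} \le 2^{\gamma d}$ with $\gamma := 1 - C_2\kappa^2/(2\ln 2) < 1$ once $d$ exceeds an absolute constant (for the finitely many smaller $d$, the target bound $\om{d}$ is vacuous).

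The main obstacle is the concentration step. A naive bounded-differences (McDiarmid) argument only yields a \emph{constant} upper bound on $\Pr[F \le d/4]$, because flipping a single bit of $\mathbf{c}$ can move $F$ by as much as $\sqrt{d}$; it is the convexity of $F$, together with the fact that the relevant scale in Talagrand's inequality is the Euclidean Lipschitz constant $\sqrt{d}/2$ rather than the Hamming one, that upgrades this to $e^{-\Omega(d)}$ and hence to $M' \le 2^{\gamma d}$ with $\gamma < 1$. The only other point to watch is that the gap $\mathbb{E}[F] - d/4$ must be a positive fraction of $d$ for \emph{every} $d$, which the crude bound $\mathbb{E}[\abs{(H\sigma)_i}] \ge 1/\sqrt{3}$ secures because $1/(2\sqrt{3}) > 1/4$ (a sharper Khintchine constant would only improve the hidden constant in $\om{d}$).
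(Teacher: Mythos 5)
Your proof is correct, and the high-level structure (reduce to $\eps=\pi$ via \cite{LMR11c,Bel15p}, then run the FGGS-style counting argument after bounding the number of $\mathbf{b}$'s in a rotated $\ell_1$-ball) is the same as the paper's. Where you genuinely diverge is in the key estimate $M' = \max_{\widetilde{\mathbf{b}}} \abs{S_{\widetilde{\mathbf{b}}}} \le 2^{\gamma d}$, $\gamma < 1$. The paper invokes this as a black box from Appendix~B of \cite{CJ21p}, stating $\Pr_{\overline{\mathbf{b}} \sim \{0,1\}^d}[\norm{H(\widetilde{\mathbf{b}} - \overline{\mathbf{b}})}_1 \le d/4] \le 2^{-Cd}$ with $C = \tfrac{\log(e)}{2}(\tfrac{1}{2\sqrt{2}} - \tfrac14)^2$, whereas you derive such a bound from scratch: lower-bound $\ex{F}$ coordinatewise by combining the symmetry/convexity argument with the fourth-moment Khintchine-type inequality $\ex{\abs{Y}} \ge \ex{Y^2}^{3/2}/\ex{Y^4}^{1/2}$ to get $\ex{F} \ge d/(2\sqrt3)$, then apply Talagrand's convex-Lipschitz concentration to upgrade the constant-probability bound one would get from a naive McDiarmid argument to $e^{-\Omega(d)}$. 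All steps check out: the convexity of $F$, the Euclidean Lipschitz constant $\sqrt d/2$ (using $\norm{x}_1 \le \sqrt d\norm{x}_2$ and $H^T H = I$), the moment computations ($\ex{Y^2}=1$, $\ex{Y^4}=3-2/d\le 3$), and the median-to-mean conversion are all fine, and the residual issue for small $d$ is harmless as you note. The trade-off is precisely what the form of the paper's constant reveals: $C$ is built from the gap $1/(2\sqrt2) - 1/4$, i.e.\ it uses the sharp Khintchine constant $1/\sqrt2$, while your fourth-moment bound yields the slightly weaker $1/\sqrt3$ (gap $1/(2\sqrt3) - 1/4$), producing a worse constant $\gamma$. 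What your approach buys in return is a fully self-contained, elementary proof of the counting estimate that avoids relying on a companion paper. Either path is valid, and the constant loss is immaterial to the $\Omega(d/\eps)$ conclusion.
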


\begin{proof}
	Similarly as in the proof of Lemma~\ref{Lem:StringRecovery}, it suffices to consider the case where $\eps = \pi$. Suppose that we have a quantum algorithm $\mathcal{A}$ that finds a bit string $\widetilde{\mathbf{b}} \in \{0,1\}^d$ satisfying the condition posed in the statement of the lemma with probability at least $2/3$. Using heavy-duty tools from statistics, it is shown in Appendix B of \cite{CJ21p} that
	\[\Pr_{\overline{\mathbf{b}} \sim \{0,1\}^d}\left[\norm{H(\widetilde{\mathbf{b}} - \overline{\mathbf{b}})}_1 \leq \frac{d}{4}\right] \leq 2^{-Cd}, \qquad \text{with} \qquad C = \frac{\log(e)}{2}\left(\frac{1}{2\sqrt{2}} - \frac14\right)^2 \in (0,1).\]
	Thus, if know $\widetilde{\mathbf{b}} \in \{0,1\}^d$ and that $\norm{H(\widetilde{\mathbf{b}} - \mathbf{b})}_1 \leq d/4$, then there are less than $2^{(1-C)d}$ possible choices left for $\mathbf{b}$. Thus, the algorithm $\mathcal{B}$, that first runs $\mathcal{A}$ to obtain a vector $\widetilde{\mathbf{b}}$ such that $\norm{H(\widetilde{\mathbf{b}} - \mathbf{b})}_1 \leq d/4$, and subsequently takes any $\overline{\mathbf{b}}$ that satisfies $\norm{H(\widetilde{\mathbf{b}} - \overline{\mathbf{b}})}_1 \leq d/4$ uniformly at random, will recover $\mathbf{b}$ exactly with probability lower bounded by $2/3 \cdot 2^{(C-1)d}$. Analogously as in the proof of Lemma~\ref{Lem:StringRecovery}, this implies that
	\[2^d \leq \frac32 \cdot 2^{(1-C)d} \cdot \sum_{t=0}^Q \binom{d}{t} \leq \frac32 \cdot 2^{(1-C)d + dH\left(\frac{Q}{d}\right)},\]
	and taking the logarithm on both sides implies that $H(Q/d) \geq C + o(1)$, which in turn implies that $Q = \Omega(d)$, completing the proof.
\end{proof}

We now show how the hardness of problem considered in the previous lemma can be used to lower bound the query complexity in the mean estimation problem.

\begin{theorem}
	\label{Thm:LBUPanalog}
	Let $d \in \N$, $n \geq d$, and suppose that we have a quantum algorithm that finds an approximation $\widetilde{\mu}$ to the mean $\mu$ of any random variable with values contained in $[-1/4,1/4]^d$, using $n$ queries to $U_{\P}$. Then, there exist instances in which case the error between $\widetilde{\mu}$ and $\mu$ satisfies
	\[\norm{\widetilde{\mu} - \mu}_1 = \Omega\left(\frac{d^{\frac32}}{n}\right),\]
	with probability at least $2/3$.
\end{theorem}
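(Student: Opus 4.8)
The plan is to mirror the proof of Theorem~\ref{Thm:LBPOanalog}, but to charge the hardness to the $U_{\P}$-queries instead of the $\pora_X$-queries, and to reduce from the \emph{rotated} string‑recovery problem of Lemma~\ref{Lem:RotatedStringRecovery} rather than Lemma~\ref{Lem:StringRecovery}. I would assume $d$ is a power of two (pad otherwise), let $H=H^{\top}$ be the symmetric $d\times d$ Hadamard matrix with $H^2=I$, entries $\pm 1/\sqrt d$ and first column $\tfrac1{\sqrt d}\vec 1$, and set $\eps=\Theta(d/n)\le \pi/4$ (legitimate since $n\ge d$). Given $\mathbf b\in\{0,1\}^d$ accessed through a controlled fractional phase oracle $\fora_{\eps}\colon\ket j\mapsto e^{i\eps b_j}\ket j$, I build the mean‑estimation instance on $\Omega=[d]\times\{0,1\}$ by setting $\P(j,0)=p_j/d$, $\P(j,1)=(1-p_j)/d$ with $p_j=\tfrac12\bigl(1-\sin(2\eps b_j)\bigr)$, and $X(j,0)=\tfrac{\sqrt d}{4}\,He_j$, $X(j,1)=0$. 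Since every entry of $He_j$ equals $\pm1/\sqrt d$, $X$ takes values in $[-1/4,1/4]^d$, and a one‑line computation using $\sum_j He_j=H\vec1=\sqrt d\,e_1$ gives $\mu=\ex{X}=\tfrac18 e_1-\tfrac{\sin(2\eps)}{8\sqrt d}\,H\mathbf b$.

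The reduction then rests on two implementation points that I would establish next. First, $\pora_X\ket{j,s}\ket i=e^{iX(j,s)_i}\ket{j,s}\ket i$ has phases $X(j,s)_i\in\{0,\pm1/4\}$ dictated entirely by the \emph{known} matrix $H$, so $\pora_X$ is a fixed unitary implementable with elementary gates and \emph{zero} queries to $\fora_{\eps}$; hence the estimator's (arbitrarily many) $\pora_X$-queries are free. Second, $U_{\P}$ can be implemented with $\bo1$ controlled calls to $\fora_{\eps}$ and $\fora_{\eps}^{\dagger}$: from the uniform superposition over $j$ together with a fresh qubit, applying a controlled-$\fora_{\eps}$ conditioned on the qubit being $1$ and a controlled-$\fora_{\eps}^{\dagger}$ conditioned on it being $0$ realises the symmetric phase $\mathrm{diag}(e^{-i\eps b_j},e^{i\eps b_j})=R_z(2\eps b_j)$ on the qubit with \emph{no} residual $b_j$-dependent global phase; conjugating by fixed single‑qubit Cliffords turns this into a $b_j$-controlled $Y$-rotation, and composing with a $b_j$-independent $R_y(\pi/2)$ prepares the real amplitude $\sqrt{p_j}$ with the $p_j$ above. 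The point of this (slightly delicate) construction is that the encoded signal is \emph{linear} in $\eps$; the naive ``Hadamard on the coin'' trick would only give a signal of order $\eps^2$, which in the regime $n\ge d$ loses a factor of $d$. Consequently an estimator issuing $n$ queries to $U_{\P}$ is simulated with $\bo n$ queries to $\fora_{\eps}$.

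Finally I would convert an $\ell_1$-accurate estimate $\widetilde{\mu}$ back into a solution of the recovery problem. Put $\widetilde{\mathbf v}=\tfrac{8\sqrt d}{\sin(2\eps)}H\bigl(\tfrac18 e_1-\widetilde{\mu}\bigr)$, so that $\mathbf b=\tfrac{8\sqrt d}{\sin(2\eps)}H\bigl(\tfrac18 e_1-\mu\bigr)$ and, using $H^2=I$, $H(\widetilde{\mathbf v}-\mathbf b)=\tfrac{8\sqrt d}{\sin(2\eps)}(\mu-\widetilde{\mu})$, whence $\norm{H(\widetilde{\mathbf v}-\mathbf b)}_1=\tfrac{8\sqrt d}{\sin(2\eps)}\norm{\widetilde{\mu}-\mu}_1=\bo{\tfrac{\sqrt d}{\eps}\norm{\widetilde{\mu}-\mu}_1}$. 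The key move is \emph{not} to round $\widetilde{\mathbf v}$ coordinate‑wise but to output $\widetilde{\mathbf b}\in\argmin_{\mathbf c\in\{0,1\}^d}\norm{H(\mathbf c-\widetilde{\mathbf v})}_1$; since $\mathbf b$ itself is a feasible point with this objective equal to $\norm{H(\widetilde{\mathbf v}-\mathbf b)}_1$, the triangle inequality gives $\norm{H(\widetilde{\mathbf b}-\mathbf b)}_1\le 2\norm{H(\widetilde{\mathbf v}-\mathbf b)}_1=\bo{\tfrac{\sqrt d}{\eps}\norm{\widetilde{\mu}-\mu}_1}$. Hence if $\norm{\widetilde{\mu}-\mu}_1\le c\,\eps\sqrt d$ for a small enough absolute constant $c$, then $\norm{H(\widetilde{\mathbf b}-\mathbf b)}_1\le d/4$, i.e.\ the estimator plus this postprocessing solves the problem of Lemma~\ref{Lem:RotatedStringRecovery} with $\bo n$ queries to $\fora_{\eps}$. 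That lemma forces $\om{d/\eps}$ such queries, so choosing the constant in $\eps=\Theta(d/n)$ small enough makes the $\bo n$ query count of the composite procedure fall below this $\om{d/\eps}$ bound, a contradiction. Therefore on some input $X$ in this family the estimator must produce $\norm{\widetilde{\mu}-\mu}_1>c\,\eps\sqrt d=\om{d^{3/2}/n}$, which is the claim.

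I expect the main obstacle to be pinning down the exponent $3/2$ exactly, which forces the two nonstandard choices above to be made in tandem: the $\bo1$-query, \emph{linear}-signal implementation of $U_{\P}$ (without linearity one loses a factor of $d$), and the $\ell_1$-closest‑codeword decoding routed through $H^2=I$ (coordinate‑wise rounding would reintroduce a lossy $\norm{H\mathbf x}_1\le\sqrt d\,\norm{\mathbf x}_2$ step and only yield $\om{d/n}$). The remaining verifications — boundedness of $X$ by $[-1/4,1/4]^d$, $\eps\in(0,\pi]$, validity of the rotation angles, and the padding for $d$ not a power of two — are routine.
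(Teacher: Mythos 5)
Your proposal is correct and follows essentially the same strategy as the paper's proof: reduce from the rotated string-recovery problem of Lemma~\ref{Lem:RotatedStringRecovery}, encode $\mathbf{b}$ into the probability mass function so that $U_{\P}$ costs $O(1)$ calls to a fractional phase oracle with a signal \emph{linear} in $\eps$, make $X(j,\cdot)$ a Hadamard column so that $\pora_X$ is free, and decode by choosing the bit string minimizing $\norm{H(\mathbf{c}-\widetilde{\mathbf{v}})}_1$ and invoking the triangle inequality. The only differences are cosmetic (you realize the $b_j$-dependent rotation via a two-query $R_z(2\eps b_j)$ gadget rather than the paper's one-query $i$-shifted phase with $SH$ postprocessing, and your probability parametrization uses $\frac12(1-\sin(2\eps b_j))$ rather than $\cos^2(\tfrac{\pi}{4}+(-1)^x\tfrac{\eps' b_j}{2})$ with $\eps'=\arcsin(\eps)$); both yield a linear-in-$\eps$ signal and an $O(1)$-query implementation, so the argument goes through identically.
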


\begin{proof}
	Let $d'$ be the biggest power of $2$ below or equal to $d$. Let $\eps = d'/n$, $\eps' = \arcsin(\eps)$, and suppose that we have access to some hidden bit string $\mathbf{b} \in \{0,1\}^{d'}$ by means of controlled calls to a fractional phase oracle $\fora_{\eps'} : \ket{j} \mapsto e^{i\eps'b_j}\ket{j}$. We know from Lemma~\ref{Lem:RotatedStringRecovery} that it takes $\Omega(d'/\eps') = \Omega(n)$ calls to find a bit string $\widetilde{\mathbf{b}}$ such that $\norm{H(\widetilde{\mathbf{b}} - \mathbf{b})}_1 \leq d'/4$.

	Now, let $\Omega = [d'] \times \{0,1\}$, and for every $\mathbf{b} \in \{0,1\}^{d'}$, let the probability measure $\P_{\mathbf{b}}$ on $\Omega$ be defined as
	\[\P_{\mathbf{b}}(j,x) = \frac{1}{d'}\cos^2\left(\frac{\pi}{4} + (-1)^x \frac{\eps'b_j}{2}\right), \qquad \text{for all } j \in [d'], x \in \{0,1\}.\]
	Observe that with one call to $\fora_{\eps'}$, we can implement
	\begin{align*}
		&\frac{1}{\sqrt{2d'}} \sum_{j=1}^{d'} (\ket{j}\ket{0} + i\ket{j}\ket{1}) \overset{\fora_{\eps'}}{\mapsto} \frac{1}{\sqrt{2d'}} \sum_{j=1}^{d'} (\ket{j}\ket{0} + ie^{i\eps'b_j}\ket{j}\ket{1}) \\
		&\qquad = \sum_{j=1}^{d'} \frac{e^{i\frac{\pi}{4} + i\frac{\eps 'b_j}{2}}}{\sqrt{2d'}} \left(e^{-i\frac{\pi}{4} - i\frac{\eps'b_j}{2}}\ket{j}\ket{0} + e^{i\frac{\pi}{4} + i\frac{\eps'b_j}{2}}\ket{j}\ket{1}\right) \\
		&\qquad \overset{I \otimes (SH)}{\mapsto} \sum_{j = 1}^{d'} \frac{e^{i\frac{\pi}{4} + i\frac{\eps'b_j}{2}}}{\sqrt{d'}} \left(\cos\left(\frac{\pi}{4} + \frac{\eps'b_j}{2}\right)\ket{j}\ket{0} + \sin\left(\frac{\pi}{4} + \frac{\eps'b_j}{2}\right)\ket{j}\ket{1}\right) \\
		&\qquad = \sum_{(j,x) \in \Omega} \frac{e^{i\frac{\pi}{4} + i\frac{\eps'b_j}{2}}}{\sqrt{d'}} \cos\left(\frac{\pi}{4} + (-1)^x\frac{\eps'b_j}{2}\right)\ket{j}\ket{x} = \sum_{(j,x) \in \Omega} \sqrt{\P_{\mathbf{b}}(j,x)} e^{i\frac{\pi}{4} + i\frac{\eps'b_j}{2}} \ket{j}\ket{x},
	\end{align*}
	and hence we can implement $U_{\P_{\mathbf{b}}}$ with one call to $\mathcal{F}_{\eps'}$.\footnote{Note that we don't have to worry about the extra global phase here -- we can absorb it in the definition of the state $\ket{\omega}$, i.e., if $\omega = (j,x)$ we can define $\ket{\omega} = e^{i\pi/4 + i\eps'b_j/2} \ket{j}\ket{x}$, and then use all the machinery from the rest of this document.}

	Next, let the random variable $X : \Omega \to \R^d$ be defined as
	\[X(j,x) = \frac{x\sqrt{d'}}{4}H\mathbf{e}_j,\]
	where $H$ is a $(d' \times d')$-dimensional normalized Hadamard matrix, i.e., $H^TH = I$, whose first row and column have all positive signs. Such a Hadamard matrix exists because we know that $d'$ is a power of $2$. Furthermore,
	\[\mu = \ex{X} = \frac{1}{d'} \sum_{j=1}^{d'} \cos^2\left(\frac{\pi}{4} - \frac{\eps'b_j}{2}\right) \frac{\sqrt{d'}}{4}H\mathbf{e}_j = \frac18\mathbf{e}_1 + \frac{\sin(\eps')}{8\sqrt{d'}} H\mathbf{b} = \frac18\mathbf{e}_1 + \frac{\sqrt{d'}}{8n} H\mathbf{b},\]
	Thus, if we find an approximation $\widetilde{\mu}$ to $\mu$ such that $\norm{H(\widetilde{\mu} - \mu)}_1 \leq (d')^{3/2}/(16n)$, then
	\[\min_{\widetilde{\mathbf{b}} \in \{0,1\}^d} \norm*{\frac{8n}{\sqrt{d'}}\left(\widetilde{\mu} - \frac18\mathbf{e}_1\right) - H\widetilde{\mathbf{b}}}_1 = \frac{8n}{\sqrt{d'}} \min_{\widetilde{\mathbf{b}} \in \{0,1\}^d} \norm*{\widetilde{\mu} - \frac18\mathbf{e}_1 - \frac{\sqrt{d'}}{8n}H\widetilde{\mathbf{b}}}_1 \leq \frac{8n}{\sqrt{d'}} \norm{H(\widetilde{\mu} - \mu)}_1,\]
	and hence if we let $\widetilde{\mathbf{b}}$ be the bit string for which the minimum in the above expression is attained, then we find that
	\begin{align*}
		&\norm{H(\widetilde{\mathbf{b}} - \mathbf{b})}_1 \leq \norm*{\frac{8n}{\sqrt{d'}}\left(\widetilde{\mu} - \frac18\mathbf{e}_1\right) - H\widetilde{\mathbf{b}}}_1 + \norm*{\frac{8n}{\sqrt{d'}}\left(\widetilde{\mu} - \frac18\mathbf{e}_1\right) - H\mathbf{b}}_1 \\
		&\qquad \leq \frac{8n}{\sqrt{d'}}\norm{H(\widetilde{\mu} - \mu)}_1 + \frac{8n}{\sqrt{d'}}\norm{H(\widetilde{\mu} - \mu)}_1 = \frac{4n}{\sqrt{d'}} \norm{H(\widetilde{\mu} - \mu)}_1 \leq \frac{d'}{4}.
	\end{align*}
	We know that constructing such a bit string $\widetilde{\mathbf{b}}$ requires $\Omega(n)$ queries to $\fora_{\eps'}$, and hence we find that in order to find an $(d')^{3/2}/(16n)$-precise $\ell_1$-approximation of the mean of a random variable, we need to make at least $\Omega(n)$ calls to $U_{\P}$ as well. Thus, if we have only $n$ queries to spend, there must be an instance for which the $\ell_1$-approximation satisfies $\Omega(d^{3/2}/n)$. This completes the proof.
\end{proof}

We can now use the results obtained in Theorem~\ref{Thm:LBPOanalog} and Theorem~\ref{Thm:LBUPanalog} as black boxes to obtain results in different norms and regimes.

\begin{theorem}
  \label{Thm:LowerPO}
	Let $d \in \N$, $n,n' \geq 1$, and suppose that we have a quantum algorithm that finds an approximation $\widetilde{\mu}$ to the mean $\mu$ of any random variable with values contained in $[-1/4,1/4]^d$, using $n$ queries to $U_{\P}$ and $n'$ queries to $\pora_X$. Then, there exist instances such that
	\[\norm{\widetilde{\mu} - \mu}_1 = \begin{cases}
		\Omega\left(d\right), & \text{if } n' < d, \\[2mm]
		\Omega\left(\max\left\{\frac{d}{\sqrt{n}}, \frac{d^2}{n'}\right\}\right), & \text{if $n' \geq d$ and $n < d$,} \\[3mm]
		\Omega\left(\max\left\{\frac{d^{\frac32}}{n}, \frac{d^2}{n'}\right\}\right), & \text{if $n' \geq d$ and $n \geq d$,}
	\end{cases}\]
  with probability at least $2/3$. Moreover, the same expressions multiplied by $d^{1/p-1}$ can be obtained as lower bounds for $\norm{\widetilde{\mu} - \mu}_p$, for all $p \in (1,\infty]$.
\end{theorem}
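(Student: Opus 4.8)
The plan is to assemble the statement from the two ``atomic'' lower bounds already established --- Theorem~\ref{Thm:LBPOanalog} (the bound $\Omega(d^2/n')$ driven by the number of phase queries) and Theorem~\ref{Thm:LBUPanalog} (the bound $\Omega(d^{3/2}/n)$, valid for $n\ge d$, driven by the number of quantum experiments) --- to supply two short extra arguments for the remaining low-precision corners, and to finish with a norm-conversion step producing the $\ell_p$-bounds for $p\in(1,\infty]$ from the $\ell_1$-bounds.

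First, $\Omega(d^2/n')$ appears in every branch and is precisely Theorem~\ref{Thm:LBPOanalog}; since the hard instance there is deterministic, $U_{\P}$ is irrelevant and the bound holds for every~$n$. Likewise, for $n\ge d$ the term $\Omega(d^{3/2}/n)$ is Theorem~\ref{Thm:LBUPanalog}. For the branch $n'<d$ the target is the essentially trivial bound $\Omega(d)$, which I would get by rerunning the reduction behind Theorem~\ref{Thm:LBPOanalog} with the fractional phase fixed at the \emph{constant} $\eps=1/4$ instead of $\eps=d/n'$: taking $\Omega=\{0\}$ and $X(0)_j=b_j/4$, we have $\pora_X=\fora_{1/4}$ and $\mu=\mathbf b/4$, so any estimate with $\norm{\widetilde\mu-\mu}_1\le cd$ rounds to a bit string agreeing with~$\mathbf b$ on more than $3d/4$ coordinates, which by Lemma~\ref{Lem:StringRecovery} requires $\Omega(d/\eps)=\Omega(d)$ calls to $\fora_{1/4}$ and hence to $\pora_X$; thus $n'$ below that threshold forces $\norm{\widetilde\mu-\mu}_1=\Omega(d)$, for every~$n$.

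The delicate term is $\Omega(d/\sqrt n)$ in the branch $n'\ge d$, $n<d$ (which only binds when $n'\gtrsim d\sqrt n$, i.e.\ when phase queries are essentially free). A single $d$-dimensional instance of the Theorem~\ref{Thm:LBUPanalog} type does not suffice, since the encoded per-bit phase $\arcsin(d'/n)$ is pinned at a constant once $n<d'$, yielding only $\Omega(\sqrt d)=o(d/\sqrt n)$. The plan is instead to take the \emph{tensor product} of $t=\Theta(d/n)$ independent copies of a constant-phase version of that construction, one per block of dimension $n_0=\Theta(n)$, so that $U_{\P}=\bigotimes_{\ell=1}^{t}U_{\P^{(\ell)}}$ and the random variable remains a \emph{known} function of the sample point (hence the $n'$ phase queries are useless); the hidden input is a concatenated string $\mathbf b\in\{0,1\}^{d}$. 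The key accounting point is that block~$\ell$'s state preparation costs one call to its fractional phase oracle, and these $t$ oracles are obtained from a single $\fora_{\eps'}$ for the concatenated string by querying the appropriate index, so one call to $U_{\P}$ is simulated with~$t$ calls to $\fora_{\eps'}$; hence $n$ quantum experiments become $\Theta(tn)$ fractional-phase queries. I would then upgrade the information-theoretic counting in the proof of Lemma~\ref{Lem:RotatedStringRecovery} from ``recover one string'' to ``recover many blocks'': if the output is correct on $s$ of the blocks, the number of consistent inputs is at most $\binom{t}{s}2^{sn_{0}(1-C)}2^{(t-s)n_{0}}$ (with $C$ the entropy deficit of that lemma), and comparing with the FGGS-type bound $2^{d}\le\tfrac32\cdot(\text{consistent})\cdot2^{dH(Q/d)}$ forces $s\le(1-\Omega(1))t$ once the constant in $n_0=\Theta(n)$ is chosen large enough that the resulting (effective) query count $Q$ satisfies $H(Q/d)<C$. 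Consequently a constant fraction of the $t$ blocks is \emph{not} recovered, and undoing the per-block rescaling each such block contributes $\Omega(\sqrt{n_0})=\Omega(\sqrt n)$ to $\norm{\widetilde\mu-\mu}_1$, for a total of $\Omega(t\sqrt n)=\Omega(d/\sqrt n)$. I expect this block-wise counting --- in particular fixing the constants so that ``$t=\Theta(d/n)$ blocks of dimension $\Theta(n)$'' makes $Q/d$ fall in the right range --- to be the main obstacle, since it is exactly where the two-parameter trade-off of the low-precision regime is pinned down.

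Finally, the $\ell_p$-bounds for $p\in(1,\infty]$ follow by applying Hölder's inequality $\norm{v}_p\ge\norm{v}_1/d^{1-1/p}$ to the error vector of each hard instance above --- for the block construction, block by block, $\norm{v^{(\ell)}}_p\ge\norm{v^{(\ell)}}_1/n_0^{1-1/p}=\Omega(n^{1/p-1/2})$ for each failing block, and then summing $p$-th powers --- which reproduces the claimed $d^{1/p-1}$-scaled bounds and matches the upper bounds of Theorem~\ref{Thm:phOracle} up to polylogarithmic factors.
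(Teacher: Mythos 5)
Your decomposition into the three atomic pieces is the right one, and two of the three match the paper exactly: the $n,n'\ge d$ branch is Theorems~\ref{Thm:LBPOanalog} and~\ref{Thm:LBUPanalog} applied directly, and the $\ell_p$-conversion is the same Hölder step $\norm{\mathbf{x}}_p\ge d^{1/p-1}\norm{\mathbf{x}}_1$. Your treatment of the $n'<d$ branch (fix the fractional phase at a constant such as $\eps=1/4$, invoke Lemma~\ref{Lem:StringRecovery} to force $\Omega(d)$ $\pora_X$-queries before the $\ell_1$-error can drop below a small constant times $d$) is correct and in fact more explicit than what the paper writes for that branch.

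For the remaining branch $n'\ge d$, $n<d$, you take a genuinely different — and substantially heavier — route than the paper does, and it is exactly the place you yourself flag as ``the main obstacle.'' You propose a tensor-product of $t=\Theta(d/n)$ \emph{independent} blocks with a multi-block information-theoretic counting argument, and a fresh accounting of how the product $U_{\P}$ decomposes into $t$ fractional-phase queries. This changes the probability space, which is why you are forced back into a delicate FGGS-style count over partially-recovered blocks and a careful choice of constants. The paper avoids all of this with a one-line dimension-reduction: take a hard $n$-dimensional random variable $X'$, set $k=\lfloor d/n\rfloor$, and define $X(\omega)=X'(\omega)\otimes\mathbbm{1}_k$ (padded with zeros), so that each coordinate of $X'$ is simply \emph{repeated} $k$ times. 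Crucially, $\Omega$, $\P$, and hence $U_{\P}$ are unchanged, and $\pora_X$ is simulated coordinate-by-coordinate with single calls to $\pora_{X'}$, so the $n$-dimensional instance inherits the same query budget with no loss. Block-averaging the $d$-dimensional estimate $\widetilde{\mu}$ gives $\widetilde{\mu}'_j=\frac1k\sum_{\ell=1}^k\widetilde{\mu}_{k(j-1)+\ell}$ with $\norm{\widetilde{\mu}'-\mu'}_1\le\frac1k\norm{\widetilde{\mu}-\mu}_1$, and then Theorem~\ref{Thm:LBUPanalog} applied in dimension $n$ (which is allowed since the number of quantum experiments equals the dimension) yields $\norm{\widetilde{\mu}'-\mu'}_1=\Omega(\sqrt{n})$, hence $\norm{\widetilde{\mu}-\mu}_1=\Omega(k\sqrt{n})=\Omega(d/\sqrt{n})$. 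This buys you the entire low-precision branch with no new counting, no tensor-product oracles, and no constant-balancing; your version could plausibly be made to work, but it is reinventing, at higher cost and without completion, what the padding trick already delivers.
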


\begin{proof}
	If $n,n' \geq d$, then we know from Theorem~\ref{Thm:LBPOanalog} and Theorem~\ref{Thm:LBUPanalog} that
	\[\norm{\widetilde{\mu} - \mu}_1 = \Omega\left(\max\left\{\frac{d^{\frac32}}{n}, \frac{d^2}{n'}\right\}\right).\]

	Next, let $1 \leq n < d$, and let $k = \lfloor d/n\rfloor$. Let $X' : \Omega \to \R^n$ be a random variable, and define $X : \Omega \to \R^d$ by $X(\omega) = X'(\omega) \otimes \mathbbm{1}_k$, padded with an appropriate number of zeros in the final entries. We find that
	\[\frac{1}{k} \sum_{\ell=1}^k \mu_{k(j-1)+\ell} = \frac1k \cdot k \cdot \mu_j = \mu_j,\]
	and hence, if we find an approximation of $\widetilde{\mu} \in \R^d$ to $\mu = \ex{X}$, then we can define $\widetilde{\mu}' \in \R^n$ as $\widetilde{\mu}'_j = \sum_{\ell=1}^k \widetilde{\mu}_{k(j-1)+\ell}/k$, which implies
	\begin{align*}
		&\norm{\widetilde{\mu}' - \mu'}_1 = \sum_{j=1}^n \left|\frac1k \sum_{\ell=1}^k \widetilde{\mu}_{k(j-1)+\ell} - \frac1k \sum_{\ell=1}^k \mu_{k(j-1)+\ell}\right| \leq \frac1k \sum_{j=1}^n \sum_{\ell=1}^k \left|\widetilde{\mu}_{k(j-1)+\ell} - \mu_{k(j-1)+\ell}\right| \\
		&\qquad = \frac1k\norm{\widetilde{\mu} - \mu}_1,
	\end{align*}
	and hence
	\[\norm{\widetilde{\mu} - \mu}_1 \geq k \norm{\widetilde{\mu}' - \mu'}_1 = \Omega\left(\frac{d}{n} \cdot \frac{n^{\frac32}}{n}\right) = \Omega\left(\frac{d}{\sqrt{n}}\right).\]

	Finally, the result for different values for $p \in (1,\infty]$ follows directly from H\"older's inequality, since for all $\mathbf{x} \in \R^d$, we have $\norm{\mathbf{x}}_p \geq d^{1/p-1}\norm{\mathbf{x}}_1$. This completes the proof.
\end{proof}

We aggregate all our results in Figure~\ref{Fig:PhaseOracle}.

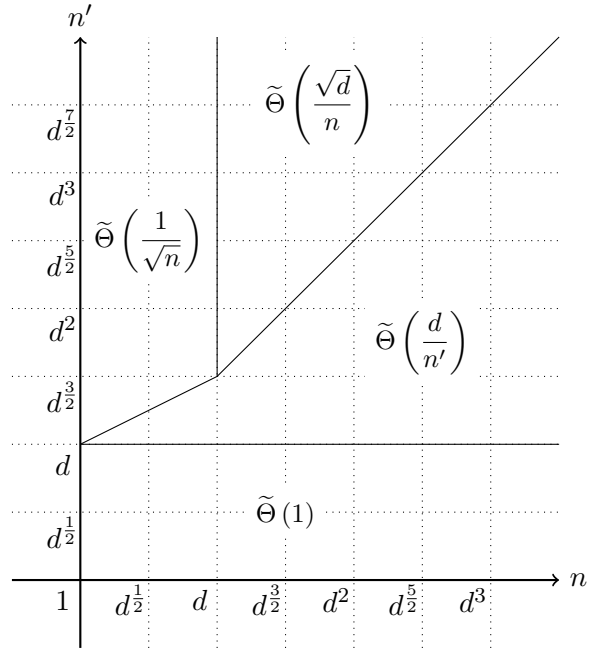
\begin{figure}[H]
	\centering
	\begin{tikzpicture}[scale=1.8]
		\draw[thick,->] (-.5,0) to (3.5,0) node[right] {$n$};
		\draw[thick,->] (0,-.5) to (0,4) node[above] {$n'$};
		\foreach \x in {.5,1,...,3} {
			\draw[dotted] (\x,-.5) to (\x,4);
		}
		\foreach \y in {.5,1,...,3.5} {
			\draw[dotted] (-.5,\y) to (3.5,\y);
		}
		\node[shift=({-.6em,-.7em})] at (0,0) {$1$};
		\node[shift=({-.6em,-.7em})] at (.5,0) {$d^{\frac12}$};
		\node[shift=({-.6em,-.7em})] at (1,0) {$d$};
		\node[shift=({-.6em,-.7em})] at (1.5,0) {$d^{\frac32}$};
		\node[shift=({-.6em,-.7em})] at (2,0) {$d^2$};
		\node[shift=({-.6em,-.7em})] at (2.5,0) {$d^{\frac52}$};
		\node[shift=({-.6em,-.7em})] at (3,0) {$d^3$};
		\node[shift=({-.6em,-.7em})] at (0,.5) {$d^{\frac12}$};
		\node[shift=({-.6em,-.7em})] at (0,1) {$d$};
		\node[shift=({-.6em,-.7em})] at (0,1.5) {$d^{\frac32}$};
		\node[shift=({-.6em,-.7em})] at (0,2) {$d^2$};
		\node[shift=({-.6em,-.7em})] at (0,2.5) {$d^{\frac52}$};
		\node[shift=({-.6em,-.7em})] at (0,3) {$d^3$};
		\node[shift=({-.6em,-.7em})] at (0,3.5) {$d^{\frac72}$};
		\draw (0,1) to (3.5,1);
		\draw (0,1) to (1,1.5);
		\draw (1,1.5) to (1,4);
		\draw (1,1.5) to (3.5,4);
		\node[fill=white] at (1.5,.5) {\small$\displaystyle\widetilde{\Theta}\left(1\right)$};
		\node[fill=white] at (2.5,1.75) {\small$\displaystyle\widetilde{\Theta}\left(\frac{d}{n'}\right)$};
		\node[fill=white] at (1.75,3.5) {\small$\displaystyle\widetilde{\Theta}\left(\frac{\sqrt{d}}{n}\right)$};
		\node[fill=white] at (.5,2.5) {\small$\displaystyle\widetilde{\Theta}\left(\frac{1}{\sqrt{n}}\right)$};
	\end{tikzpicture}
	\caption{Overview of the different regimes of the mean estimation problem. The horizontal and vertical axes show $n$ and $n'$, i.e., the number of queries to $U_{\P}$ and $\pora_X$, respectively. The complexities shown in the figure are the optimal error scaling of $\norm{\widetilde{\mu} - \mu}_{\infty}$ that can be achieved with particular choices for $n$ and $n'$. The tildes are hiding polylogarithmic factors in $n$, $n'$, $d$ and $1/\delta$. For the optimal error scalings of $\norm{\widetilde{\mu} - \mu}_p$ for $p \in [1,\infty)$, one can simply multiply the expressions above by $d^{1/p}$.}
  \label{Fig:PhaseOracle}
\end{figure}

\section{Applications}
\label{Sec:Applications}
In this section, we describe some applications of our results. We first explain how our formulation of the multivariate mean estimation problem covers the general task of estimating the expectation values of several mutually commuting observables with respect to a given quantum state. We then present several applications in the literature, and notably in quantum machine learning, where this problem arises.

\subsection{Estimating expectation values of commuting observables}
\label{Sec:comObs}

\subsubsection{Classical versus quantum experiments}

From a classical perspective, the mean estimation problem is commonly described by a random experiment (or Monte Carlo process) that draws a \emph{classical} sample $\omega$ (e.g., a bit-string) from a certain probability space $(\Omega, 2^{\Omega}, \P)$ and leads to the associated observation $X(\omega)\in \R^d$ (see Definition~\ref{Def:rExp}). It is then quite clear that our generalization to quantum experiments, defined by a unitary $U_{\P}$ that prepares a superposition over basis states $\ket{\omega}$, $\omega \in \Omega$, and a unitary $\bora_X$ that evaluates $X$ for the same basis states (see Definitions~\ref{Def:qExp} and \ref{Def:binOracle}) can simulate such a random experiment. However, from a physical perspective, quantum experiments are also more general. The basis states $\{\ket{\omega}\}_{\omega \in \Omega}$ do not need to be computational basis states, and can be themselves superpositions of computational basis states or include arbitrary phases. Therefore, in our definition of quantum experiments, the unitaries $U_{\P}$ can indeed be arbitrary unitaries acting on a given Hilbert space $\Hil$, and what really matters here is the definition of the basis~$\{\ket{\omega}\}_{\omega \in \Omega}$ of $\Hil$.

\subsubsection{Problem definition}

With this observation, we can now move to the problem of estimating expectation values of mutually commuting observables. Let $U$ be a unitary transformation that prepares a given quantum state $\ket{\psi} = U\ket{0}$ in a given $m$-qubit Hilbert space~$\Hil$, and let $O_1, \ldots, O_d$ be $d$ \emph{mutually commuting} observables (i.e., Hermitian operators) acting on~$\Hil$. We want to compute estimates of the $d$ expectation values $\langle O_i \rangle = \bra{\psi}O_i\ket{\psi}$. Since the observables commute, they all share a common eigenbasis $\left\{\ket{\phi_j}\right\}_{1\leq j\leq 2^m}$, to which they assign eigenvalues $\lambda_{i} = (\lambda_{i,1}, \ldots, \lambda_{i,2^m}) \in \R^{2^m}$, respectively. Let us therefore look at the expression of $\ket{\psi} = U\ket{0}$ in this eigenbasis:
\begin{equation}
    U : \ket{0} \mapsto \sum_{j=1}^{2^m} \sqrt{\P(\phi_j)} e^{i\varphi_j} \ket{\phi_j}
\end{equation}
for some phases $\varphi_j \in [0,2\pi]$ and real amplitudes $\sqrt{\P(\phi_j)}$ such that $\sum_{j=1}^{2^m} \P(\phi_j) = 1$. If we now take $U_{\P}$ to be the unitary $U$, $\{\ket{\omega}\}_{\omega \in \Omega}$ to be $\left\{e^{i\varphi_j}\ket{\phi_j}\right\}_{1\leq j\leq 2^m}$, and $X(\omega)$ to be $X(\phi_j) = (\lambda_{1,j}, \ldots, \lambda_{d,j})$ (i.e., the eigenvalues $\lambda_{i,j}$ assigned by each of the observables $O_i$ to $\ket{\phi_j}$), the problem of estimating the $d$ expectation values $\langle O_i \rangle = \bra{\psi}O_i\ket{\psi}$ fits our formulation of the (quantum) mean estimation problem. Indeed, note that the phases $e^{i\varphi_j}$ do not contribute to the expectation values $\langle O_i \rangle = \bra{\psi}O_i\ket{\psi} = \sum_{j=1}^{2^m} \P(\phi_j) \lambda_{i,j}$, and therefore absorbing them in the basis states $\ket{\omega}$ does not influence the mean values to be estimated (nor our algorithms).

\subsubsection{Applicability assumptions}

For the applicability of our algorithms to this problem, we assume that a description of the eigenbasis $\left\{\ket{\phi_j}\right\}_{1\leq j\leq 2^m}$, in terms of a unitary transformation $V: \ket{j} \mapsto \ket{\phi_j}$ from computational basis states $\ket{j}$ to eigenvectors $\ket{\phi_j}$, and the eigenvalues $\{\lambda_{i}\}_{1\leq i\leq d}$ of the observables $\{O_i\}_{1\leq i\leq d}$ are known. These are the same assumptions that one would have in quantum algorithms for the univariate version of this problem (i.e., with one observable) \cite{KOS07j,WCNA09j} or in a setting where one would \emph{directly} estimate the expectation values $\langle O_i \rangle = \bra{\psi}O_i\ket{\psi} = \bra{\psi}V \Lambda_i V^{\dagger}\ket{\psi}$ for $\Lambda_i = \textrm{diag}(\lambda_{i,1}, \ldots, \lambda_{i,2^m})$ by applying $V^{\dagger}$ on $\ket{\psi}$, measuring computational basis states $\ket{j}$, and using several measurement outcomes $\{\ket{j}, (\lambda_{1,j}, \ldots, \lambda_{d,j})\}$ to simultaneously\footnote{When the observables do not commute, one cannot ``parallelize" measurements in such a manner, and would then be required to use different techniques like shadow tomography \cite{Aar20j,HKP20j}.} compute these estimates. Note that, in practice, the same transformation $V^{\dagger}$ would be absorbed in the unitary~$U_{\P}$ used in our algorithms, as to make the basis states $\{\ket{\omega}\}_{\omega \in \Omega}$ computational basis states, and ease the implementation of the unitaries $\bora_X$ and $\pora_X$ (using single-qubit rotations controlled by computational basis states).

\subsection{Examples of applications}
\label{Sec:ExApps}

\subsubsection{Training variational quantum circuits}
A straightforward application appears in some variational quantum algorithms for machine learning \cite{BLSF19j}. In a multidimensional regression setting \cite{MNKF18j} or a reinforcement learning setting \cite{JGM21p,SJD21p}, a variational quantum circuit defined by a parametrized and data-dependent unitary $U(x,\theta)$ and a set of observables $(O_1, \ldots, O_d)$ can be used as a hypothesis family $f_{\theta}(x) = (\langle O_1 \rangle_{x,\theta}, \ldots, \langle O_d \rangle_{x,\theta})$, for $\langle O_i \rangle_{x,\theta} = \bra{0^{\otimes n}} U^\dagger(x,\theta) O_i U(x,\theta)\ket{0^{\otimes n}}$, to model target functions $g$ with $d$-dimensional outputs. When the observables $O_1, \ldots, O_d$ all commute (e.g., commuting tensor products of Pauli operators or projectors on some basis states, for an arbitrary basis), the problem of estimating $f_{\theta}(x)$ fits the problem definition above.

\subsubsection{Training Boltzmann machines}
Another application considers the problem of estimating updates of a Boltzmann machine in a machine learning setting (e.g., a classification or generative modeling problem) \cite{WKS16j,WW19p,KW17j,JTN21j}. Take for instance a Boltzmann machine defined by a Hamiltonian:
\begin{equation}
H = \sum_{i<j} J_{i,j} \sigma^z_i\sigma^z_j + \sum_{i} b_i\sigma^z_i
\end{equation}
where $J_{i,j}$ and $b_i$ are real weights and biases and $\sigma^z_i$ is a Pauli-$Z$ operator acting on a qubit $i$ out of $n$ total qubits. The updates on the weights and biases of this Boltzmann machine take the form:
\begin{equation}
\Delta J_{i,j} = -\mathcal{L}(J,b)\langle\sigma^z_i\sigma^z_j\rangle, \quad \Delta b_{i} = -\mathcal{L}(J,b)\langle\sigma^z_i\rangle
\end{equation}
where $\mathcal{L}(J,b)$ is a loss dependent on the Boltzmann machine performance at the machine learning task and the expectation values $\langle\sigma^z_i\rangle, \langle\sigma^z_i\sigma_z^j\rangle$ are with respect to the Gibbs state:
\begin{equation}\label{eq:Gibbs-state}
  \ket{\psi} = \frac{1}{\textrm{Tr}_{x}[e^{-H}]} \sum_{x} \sqrt{e^{-H(x)}} \ket{x}
\end{equation}
for computational basis states $\ket{x}$. Assume having access to a unitary $U$ that prepares the Gibbs state of Equation~(\ref{eq:Gibbs-state}), e.g., using one of the subroutines in \cite{WKS16j,WW19p,KW17j,JTN21j}, then estimating the updates of the Boltzmann machine is an instance of the problem above for observables $\left\{ -\mathcal{L}(J,b)\sigma^z_i\sigma^z_j,\ -\mathcal{L}(J,b)\sigma^z_i\right\}_{i,j}$, i.e., weighted $\sigma^z_i$ and $\sigma^z_i\sigma^z_j$ operators, which are all diagonal in the computational basis.

\subsubsection{Training policies in reinforcement learning}

In the context of reinforcement learning \cite{SB18b}, an agent-environment interaction is described by a Monte Carlo process where, for a sequence of interactions, an agent acts probabilistically on its environment, the latter updates its state (probabilistically) depending on the actions of the agent and issues a real-valued reward $r_{t}$. The goal of the agent is to find a policy (i.e., a probability distribution $\pi(a_t|s_t)$ of actions $a_t$ given states $s_t$) that maximizes its \emph{expected} rewards $V(\pi) = \sum_{t=1}^T r_{t}$ for $T$ interactions with the environment. To do this, policy-based reinforcement learning algorithms define a certain family of parametrized policies $\pi_{\theta} \in \Pi_{\theta}$ (e.g., deep neural networks) and explore this policy family using gradient ascent on the expected rewards $V(\pi_{\theta})$. The so-called policy gradient theorem \cite{SMSM99c} gives a formulation of the gradient of the expected rewards $V(\pi_{\theta})$ with respect to the parameters $\theta \in \R^d$ of the policy as:
\begin{equation}
    \nabla_{\theta} V(\pi_{\theta}) = \underset{s_1, a_1, r_1, s_2, a_2, r_2, \ldots}{\mathbb{E}}\left[ \sum_{t=1}^T \nabla_{\theta} \log(\pi_{\theta}(a_t|s_t)) \sum_{t'=t}^T r_{t'}\right].
\end{equation}
This gradient is therefore given by the expectation value of the $d$-dimensional random variable
  \[X(s_1, a_1, r_1, s_2, \ldots) = \sum_{t=1}^T \nabla_{\theta} \log(\pi_{\theta}(a_t|s_t)) \sum_{t'=t}^T r_{t'}\]
(or equivalently, $d$ observables that are all diagonal in the computational basis) with respect to all possible interactions with the environment, following a policy $\pi_{\theta}$. In order for our mean estimators to be applicable here, our only assumption on the environment is that we have oracle access to its dynamics, notably its state-transitions \[\ket{s_t}\ket{a_t}\ket{0} \mapsto \sum_{s_{t+1}} \sqrt{P(s_{t+1}|s_t,a_t)}\ket{s_t}\ket{a_t}\ket{s_{t+1}}\] and its reward function \[\ket{s_t}\ket{0} \mapsto \ket{s_t}\ket{r_t}.\] As for the policy, we assume having the ability to implement $\pi_{\theta}$ coherently (i.e., similarly to $U_{\P}$), and to construct a (classical) circuit that computes the gradient $\nabla_{\theta} \log(\pi_{\theta}(a_t|s_t))$ given $s_t, a_t, \theta$.

\section{Discussion}
\label{Sec:Discussion}
In this work, we developed near-optimal quantum mean estimators in two different input models. In the binary oracle setting, we managed to obtain matching upper and lower bounds up to polylogarithmic factors, when we measure the performance of our estimator with respect to the Euclidean norm. We did not investigate the problem of deriving sharp bounds for other norms in this model. In the classical literature, sample-optimal estimators for general norms were given in~\cite{LM19ja}. One case that could be interesting to further study is the $\ell_{\infty}$-norm, since it arises naturally in our quantum algorithm as well as in the applications we consider. By combining the one-dimensional result with a union bound, one can obtain a classical estimator that achieves a precision of $\max_{j \in [d]} \sqrt{\var{X_j}\log(d/\delta)/n}$, whereas quantumly we obtained $\sqrt{\sum_{j \in [d]} \var{X_j}}\log(d/\delta)/n$. It would be interesting to figure out whether some combination of these two approaches can be shown to be optimal in all regimes.

One further observation is that we do not assume to have any knowledge about $\Sigma$ beforehand. Some preliminary considerations seem to indicate that in some $\ell_p$-norms, especially where $p < 2$, it might be useful to know bounds on the individual diagonal entries of this covariance matrix. Whether these considerations are fundamental, or can be worked around, is also an interesting question to address in the future.

\section*{Acknowledgments}
AC and SJ would like to thank Vedran Dunjko and M\={a}ris Ozols for pointing us in the direction of this problem, and for many insightful and motivating discussions. AC would also like to thank Ronald de Wolf and Joran van Apeldoorn for insightful discussions and helpful tips. Furthermore, AC would like to extend his gratitude to the anonymous legends that answered the Math Overflow post related to this research~\cite{MiscHamming21}. SJ acknowledges support from the Austrian Science Fund (FWF) through the projects DK-ALM:W1259-N27 and SFB BeyondC F7102. SJ also acknowledges the Austrian Academy of Sciences as a recipient of the DOC Fellowship.

\printbibliography[heading=bibintoc]


\end{document}